\documentclass[%
 reprint,
 amsmath,amssymb,mathtools,
 aps,
 pra,
 superscriptaddress
]{revtex4-2}
\usepackage{multirow}

\usepackage{graphicx}
\usepackage{dcolumn}
\newcolumntype{d}[1]{D{.}{.}{#1}}
\usepackage{bm}
\usepackage{braket}
\usepackage[english]{babel}
\usepackage{xcolor}
\makeatletter
\@ifundefined{l@en}{\let\l@en\l@english}{}
\@ifundefined{captionsen}{}{}
\@ifundefined{dateen}{}{}
\@ifundefined{extrasen}{}{}
\@ifundefined{noextrasen}{}{}
\makeatother
\usepackage{hyperref}
\hypersetup{
    colorlinks=true,
    linkcolor=blue,
    filecolor=blue,
    urlcolor=blue,
    citecolor=blue,
    pdfpagemode=FullScreen,
}
\usepackage[caption=false]{subfig}
\newcommand{\norm}[1]{\left\lVert#1\right\rVert}
\newenvironment{proof}{\noindent\textit{Proof.---}}{\hfill$\square$\medskip}

\newcounter{thm}[section]
\renewcommand{\thethm}{\thesection.\arabic{thm}}
\newenvironment{theorem}[1][]%
  {\refstepcounter{thm}\medskip\noindent\textbf{Theorem~\thethm\if\relax\detokenize{#1}\relax\else\space(#1)\fi.}\enspace\itshape}%
  {\medskip}
\newenvironment{lemma}[1][]%
  {\refstepcounter{thm}\medskip\noindent\textbf{Lemma~\thethm\if\relax\detokenize{#1}\relax\else\space(#1)\fi.}\enspace\itshape}%
  {\medskip}
\newenvironment{proposition}[1][]%
  {\refstepcounter{thm}\medskip\noindent\textbf{Proposition~\thethm\if\relax\detokenize{#1}\relax\else\space(#1)\fi.}\enspace\itshape}%
  {\medskip}
\newenvironment{corollary}[1][]%
  {\refstepcounter{thm}\medskip\noindent\textbf{Corollary~\thethm\if\relax\detokenize{#1}\relax\else\space(#1)\fi.}\enspace\itshape}%
  {\medskip}
\newenvironment{definition}[1][]%
  {\refstepcounter{thm}\medskip\noindent\textbf{Definition~\thethm\if\relax\detokenize{#1}\relax\else\space(#1)\fi.}\enspace}%
  {\medskip}

\begin{document}

\preprint{APS/123-QED}
\title{Detecting entanglement of non-Gaussian continuous-variable states from single-copy homodyne measurements}

\author{Moritz Straeter}
\affiliation{%
 Centre for Quantum Technologies, National University of Singapore, 117543 Singapore, Singapore
}%
\author{Michael Tsesmelis}%
\affiliation{%
 Centre for Quantum Technologies, National University of Singapore, 117543 Singapore, Singapore
}%
\author{Leong-Chuan Kwek}%
\affiliation{%
 Centre for Quantum Technologies, National University of Singapore, 117543 Singapore, Singapore
}%
\affiliation{%
 MajuLab, CNRS-UNS-NUS-NTU International Joint Research Unit, UMI 3654, Singapore
}%
\affiliation{%
 National Institute of Education, Nanyang Technological University, 1 Nanyang Walk, Singapore 637616
}%

\date{\today}

\begin{abstract}
The entanglement of Gaussian continuous-variable (CV) states is fully determined by the state's second moments. In contrast, some entangled non-Gaussian states evade every second-moment criterion, and non-Gaussian entanglement detection remains an experimental challenge. The $p_3$-PPT criterion detects entanglement using moments of the partial transpose of the density matrix. This criterion was recently extended to CV systems using photon-number-resolving detectors and multi-copy interferometry; here we introduce a single-copy homodyne protocol that detects bipartite CV entanglement via the same criterion. Unbiased U-statistic estimators for the partial-transpose moments $p_2$ and $p_3$ are constructed directly from randomized homodyne data and used to evaluate the $p_3$-PPT entanglement witnesses: a linear one for detection, and a quadratic one whose violation yields a dimension-free lower bound on the entanglement negativity. The protocol estimates $p_2$ and $p_3$ up to additive error $\varepsilon$ at Fock cutoff $N$ from $O((N+1)^{14/3}/\varepsilon^2)$ measurements at fixed confidence. We demonstrate the protocol on six families of Gaussian and non-Gaussian states, reaching $95\%$ empirical one-sided detection probability from $\sim 10^3$ to $10^4$ homodyne measurements for states with $\bar{n} \approx 2$, placing non-Gaussian entanglement detection within reach of current homodyne experiments.
\end{abstract}

\maketitle

\section{Introduction}
\label{sec:introduction}

Non-Gaussian continuous-variable (CV) states are essential for universal quantum computation~\cite{lloydQuantumComputationContinuous1999, mariPositiveWignerFunctions2012, GPK} and fault-tolerant error correction~\cite{GPK, nisetNoGoTheoremGaussian2009}, yet certifying their entanglement remains an open experimental challenge \cite{Friis:2018nyl, walschaersNonGaussianQuantumStates2021}. For Gaussian states, second-moments suffice to determine entanglement \cite{boundentangledgaussian}. However, standard tools like the Duan~\cite{duanInseparabilityCriterionContinuous2000} and Simon~\cite{simonPeresHorodeckiSeparabilityCriterion2000} criteria that test second-moments can be blind to entanglement of non-Gaussian states, for example in photon-subtracted states \cite{walschaersEntanglementWignerFunction2017}, NOON states, and certain cat state regimes~\cite{NOONnotDetected}. The Shchukin--Vogel (SV) hierarchy~\cite{shchukinInseparabilityCriteriaContinuous2005} can in principle detect any state with negative partial transpose by testing positivity of moment matrices of increasing order, but the estimation cost grows rapidly with matrix dimension \cite{kanari-naishOptimizingConfidenceNegativepartialtransposebased2025}.

A more targeted approach uses low-order moments of the partially transposed state. The $p_3$-PPT criterion \cite{elbenMixedStateEntanglementLocal2020a, elbenRandomizedMeasurementToolbox2022, yuOptimalEntanglementCertification2021} certifies entanglement from just the second and third moment, and was recently extended to CV systems by Deside \emph{et al.} \cite{deside2025detectinggenuinenongaussianentanglement}, who access these moments via passive linear interferometers applied to two and three simultaneous copies of the state, followed by photon-number-resolving (PNR) detection. While conceptually clean, both ingredients are experimentally demanding. PNR detection capabilities remain less widespread in CV optics laboratories than homodyne setups. Furthermore, for the heralded non-Gaussian states, the multi-copy requirement is severe. The triple-coincidence rate is suppressed by orders of magnitude relative to the single-copy generation rate. In their discussion, Deside \emph{et al.} themselves identify randomized measurements \cite{elbenMixedStateEntanglementLocal2020a,elbenRandomizedMeasurementToolbox2022, cieslinskiAnalysingQuantumSystems2024a, Huang2020}, which were recently extended to CV systems via heterodyne and homodyne shadow tomography \cite{beckerClassicalShadowTomography2024, gandhariPrecisionBoundsContinuousVariable2024},  as a single-copy alternative worth investigating  in the CV regime. 

In this work, we develop such a single-copy protocol. We construct unbiased U-statistic estimators for the partial-transpose (PT) moments $p_2$ and $p_3$ directly from randomized homodyne data, exploiting the informational completeness of homodyne pattern functions on the truncated Fock space. These estimators access PT moments directly, bypassing full state tomography and avoiding the downward-biased eigenvalue estimates and confidence-set calibration that density-matrix reconstruction would require \cite{christandl,lvovskyContinuousvariableOpticalQuantumstate2009}. The moments are used to evaluate two $p_3$-PPT witnesses: a linear form \cite{deside2025detectinggenuinenongaussianentanglement} $W_\text{lin} = p_3 -(3p_2 - 1)/2$ for entanglement detection, and a quadratic form~\cite{elbenMixedStateEntanglementLocal2020a} $W_\text{quad} = p_3 - p_2^2$ whose violation additionally yields a quantitative negativity bound.

Applied to six families of entangled states \textemdash including photon-subtracted, photon-added, NOON, cat, and two-mode squeezed vacuum (TMSV) states, as well as a mixed non-Gaussian benchmark \textemdash the protocol achieves $95\%$ (one-sided) certification from less than $10^4$ homodyne measurements for states with $\bar{n} \sim 2$. Four structural properties underpin the protocol. First, it requires only single-copy homodyne detection with randomized phases and no multi-copy interferometry or PNR detectors. This makes it more easily accessible for many laboratories and eliminates the triple-coincidence bottleneck. Second, a self-certification truncation guarantee ensures that entanglement at any Fock cutoff $N$ is genuine: the truncated witness $W^{(N)}_\text{quad/lin} < 0$ implies $\rho^{T_B} \not\geq0$ for the full infinite-dimensional state, so no false positives arise from truncation. Third, local optical loss cannot create false-positive detection, so no detector-efficiency calibration is needed for detection validity. Fourth, the same estimated moments $(p_2, p_3)$ yield a dimension-free lower bound on the entanglement negativity, providing quantification beyond binary detection.

We focus on the $p_3$-PPT witness because it admits closed-form inference and a direct negativity bound (Sec.~\ref{sec:negativity_bound}). Stronger post-processing, e.g. the $p_3$-OPPT \cite{yuOptimalEntanglementCertification2021} and full truncated-state PPT testing \cite{peresSeparabilityCriterionDensity1996, shchukinInseparabilityCriteriaContinuous2005}, can be applied to the same homodyne data in principle.
\section{Protocol and Main Results}
\label{sec:protocol}

The protocol consists of three steps: (1)~collecting randomized homodyne data from single copies of the state, (2)~estimating the PT moments $p_2$ and $p_3$ via U-statistics constructed from homodyne shadow snapshots, and (3)~testing the linear witness $W_{\text{lin}} = p_3 - \frac{3p_2 -1}{2} \geq 0$. We now develop each component and establish the theoretical guarantees.

This section is organized as follows.
Section~\ref{sec:shadow_snapshots} introduces with the homodyne shadow framework of Gandhari \emph{et al.}\cite{gandhariPrecisionBoundsContinuousVariable2024} and the $p_3$-PPT entanglement witnesses the theoretical prerequisites for our method. We then show how to construct unbiased estimators in Sec.~\ref{sec:estimators_and_properties} and that the witness remains valid under Fock space truncation. We introduce the entanglement detection rule used throughout the manuscript (Sec.~\ref{sec:certification_rules}) and derive a dimension-free negativity lower bound (Sec.~\ref{sec:negativity_bound}). Sec.~\ref{sec:robustness} briefly comments on the robustness of the protocol and
Sec.~\ref{sec:recipe} summarizes the full protocol as an experimental
recipe.

\subsection{Theoretical foundations}
\label{sec:shadow_snapshots}

We consider a bipartite bosonic state~$\rho$ of finite mean photon
number, $\mathrm{tr}[\rho\hat{n}] = \bar{n} < \infty$, on which
only single-copy homodyne measurements are available.  In each
experimental run, we measure both modes by independent homodyne
detection at uniformly random phases
$\theta_A, \theta_B \sim \mathrm{Unif}[-\pi/2, \pi/2)$
(Fig.~\ref{fig:measurement_scheme}). A homodyne measurement measures a rotated quadrature operator $\hat{x}_{\theta} = (\hat{a} e^{-i\theta} + \hat{a}^\dagger e^{i\theta})/\sqrt{2}$, where $\hat{a},\hat{a}^\dagger$ are bosonic annihilation and creation operators ($[\hat{a}, \hat{a}^\dagger] = 1$). The rotation $\theta$ is realized through a local-oscillator phase \cite{Leonhardt1997-bs}. The measurement outcomes are $x \in \mathbb{R}$ real and sampled from the homodyne distribution $x\sim{p}_\rho(x|\theta) = \bra{x_\theta}\rho\ket{x_\theta}$, or in the bipartite case 
\begin{align} 
\label{eq:homod_distr}
x_A, x_B \sim {p}_\rho&(x_A, x_B | \theta_A, \theta_B) \notag \\ &= \bra{x_{\theta_A}, x_{\theta_B}}\rho\ket{x_{\theta_A}, x_{\theta_B}}. 
\end{align} Correlations, such as entanglement, will enter through correlations of the homodyne distribution (Eq.~\ref{eq:homod_distr}). In the infinite-sample limit,
the measurement record determines all Fock-basis matrix elements
of~$\rho$~\cite{lvovskyContinuousvariableOpticalQuantumstate2009, Gill2003InvitationQT}.

\begin{figure}[!htb]
\includegraphics[width=0.8\columnwidth]{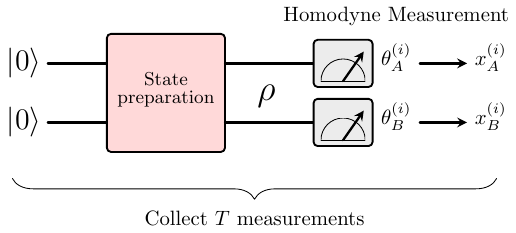}
\caption{\label{fig:measurement_scheme} Schematic of the measurement
  scheme.  After the state~$\rho$ is prepared, a homodyne measurement
  is performed on each mode with independent measurement phases
  $\theta^{(A)}_i, \theta^{(B)}_i$ uniformly sampled from
  $[-\pi/2,\pi/2)$.}
\end{figure}

Following the homodyne-shadow framework of Gandhari \emph{et
al.}~\cite{gandhariPrecisionBoundsContinuousVariable2024}, we map each
outcome $(x^{(i)}, \theta^{(i)})$ to a shadow snapshot $\rho_N^{(i)}$ in the truncated Fock space $\{\ket{0},\ldots \ket{N}\}$. Throughout, $N$ denotes the maximum Fock index per mode, so the truncated Hilbert space has dimension $N+1$ per mode. The homodyne shadow snapshot is given by
\begin{equation}
\label{eq:single_mode_shadow}
  \rho_N^{(i)}
  = \sum_{n,m=0}^{N}
    F_{n,m}\!\left(x^{(i)},\theta^{(i)}\right)
    \ket{n}\!\bra{m},
\end{equation}
where $F_{n,m}$ are the homodyne pattern
functions~\cite{Leonhardt1997-bs, richterRealisticPatternFunctions2000}.
An individual snapshot is not necessarily a valid physical density operator, but
its expectation satisfies $\mathbb{E}[\rho^{(i)}] = \rho_N := \Pi_N\rho\Pi_N$,
where $\Pi_N$ is the single-mode Fock projector. The estimators
are thus unbiased for the truncated state.  We show in
Sec.~\ref{sec:estimators_and_properties} that entanglement detected at any
truncation~$N$ is genuine.  Explicit expressions for the pattern functions
and their properties are given in the Supplemental Material Sec.~SI.

The positive-partial-transpose (PPT) criterion states that separable states necessarily have a partial-transpose with no negative eigenvalues ($\rho^{T_B} \geq 0$) \cite{peresSeparabilityCriterionDensity1996, simonPeresHorodeckiSeparabilityCriterion2000}. If we can show that the partial transpose of a quantum state has at least one negative eigenvalue, it must thus be entangled. However, PPT testing is challenging in practice because reconstructing $\rho^{T_B}$ introduces finite-sample bias in its eigenvalues. Although the homodyne shadows are informationally complete on the truncated Fock space, direct reconstruction and PPT testing would introduce downward-biased eigenvalue estimates and require confidence-set calibration~\cite{yuOptimalEntanglementCertification2021}; the scalar witnesses introduced below avoid these complications.

The partial-transpose (PT) moment of order-$n$ of a bipartite state~$\rho$ is defined as
\begin{equation}
\label{eq:pt_moments}
  p_n = \operatorname{Tr}\left[(\rho^{T_B})^n\right],
\end{equation}
where $(\,\cdot\,)^{T_B}$ denotes partial transposition on
subsystem~$B$.  If $\rho^{T_B} \geq 0$, the moments $\{p_n\}$
satisfy the same inequalities as the power sums of a probability
distribution; the leading non-trivial
condition, given $p_1 = \operatorname{Tr}\left[\rho\right]=1$, is the $p_3$-PPT
witness~\cite{elbenMixedStateEntanglementLocal2020a,
yuOptimalEntanglementCertification2021}:
\begin{equation}
\label{eq:p3PPT}
  W _\text{quad}= p_3 - p_2^2 \geq 0.
\end{equation}
If $W_\text{quad} < 0$, the state's partial-transpose necessarily has negative eigenvalues  (NPT) and the state is therefore entangled. The criterion $W_\text{quad} < 0$ is a Cauchy--Schwarz relaxation of the stronger optimal $p_3$-PPT criterion ($p_3$-OPPT) boundary~\cite{yuOptimalEntanglementCertification2021}. Ref.~\cite{deside2025detectinggenuinenongaussianentanglement} introduced the linear criterion 
\begin{equation}
\label{eq:Wlin}
    W_\text{lin} = p_3  - \frac{3p_2 -1}{2} \geq 0,
\end{equation}
which is stronger than $W_\text{quad}$ for $p_2 \in [\frac{1}{2}, 1)$ and uses the same partial transpose moments $p_2$ and $p_3$. We use the linear criterion as primary detection witness; if the quadratic witness is also negative, $(p_2,p_3)$ imply a lower bound on the state's negativity  (Sec.~\ref{sec:negativity_bound}).

For discrete-variable quantum systems $p_2$ and $p_3$ can be expressed as multi-copy expectation values of SWAP and cyclic-shift operators respectively \cite{elbenMixedStateEntanglementLocal2020a, elbenRandomizedMeasurementToolbox2022}.
The second moment $p_2$, which is equal to the states purity, is evaluated from two copies via
the SWAP trick~\cite{ekertDirectEstimationsLinear2002}:
\begin{equation}
\label{eq:SWAP_trick}
  p_2 = \operatorname{Tr}\left[(\rho \otimes \rho) S\right],
\end{equation}
where $S = S_{AA'}\!\otimes S_{BB'}$ is the bipartite SWAP
operator.  Here we use the identity $p_2 = \mathrm{Tr}[(\rho^{T_B})^2] = \mathrm{Tr}[\rho^2]$, which holds because the Hilbert--Schmidt norm is invariant under partial transposition.  The third PT-moment uses a cyclic-shift
decomposition~\cite{zhouSingleCopiesEstimationEntanglement2020}:
\begin{equation}
\label{eq:cyclic_shift}
  p_3 = \operatorname{Tr}\left[
    \rho^{\otimes 3}
    \left(
      \Pi_{(123)}^{(A)}
      \otimes \Pi_{(132)}^{(B)}
    \right)
  \right],
\end{equation}
where $\Pi_{\pi}^{(\cdot)}$ denotes the cyclic-shift operator acting on subsystem~$(\,\cdot\,)$, with $\pi \in S_3$ and $S_3$ being the permutation group.
\subsection{PT-Moment Estimators}
\label{sec:estimators_and_properties}

We now derive unbiased estimators for $p_2$ and $p_3$ from randomized homodyne measurements. Equations \eqref{eq:SWAP_trick} and \eqref{eq:cyclic_shift} express $p_2$ and $p_3$ as expectation values on two and three copies of $\rho$ respectively. Our goal is to estimate these moments without requiring multi-copy access from homodyne measurements. We estimate these multi-copy quantities from single-copy data using \emph{U-statistics}~\cite{Hoeffding1963, Serfling1980}: the average of a fixed symmetric kernel over all distinct $k$-tuples of the data is, for independent and i.i.d. distributed runs the minimum-variance unbiased estimator for the underlying functional. U-Statistics are the established route to PT-moments in discrete-variable randomized measurements \cite{elbenMixedStateEntanglementLocal2020a, elbenRandomizedMeasurementToolbox2022};l here we adapt them to homodyne shadows.

We write the tuple of measurement angles and corresponding outcomes as $X_i = (\theta^{(i)}_A,\theta^{(i)}_B,x^{(i)}_A,x^{(i)}_B)$ and let $\rho^{(i)}_M$ denote the single-mode shadow of mode $M\in\{A,B\}$ introduced in Sec. \ref{sec:shadow_snapshots}. Throughout, a hat marks an estimator and a superscript $(N)$ a quantity of the truncated Fock state. Because the pattern function act on each of the two modes separately, the bipartite snapshot is a product state $\rho^{(i)} = \rho^{(i)}_A \otimes \rho_B^{(i)}$, even when $\rho$ is entangled. Correlations enter through jointly sampling of $(x_A^{(i)}, x_B^{(i)})$ from the homodyne distribution of the state (Eq.~\ref{eq:homod_distr}), in expectation we recover correlations $\mathbb{E}[\rho_A^{(i)} \otimes \rho_B^{(i)}] = \rho_N$. To estimate the tensor product $\rho_N^{\otimes k}$, we note that the tensor product factorizes and may replace each $\rho_N$ by a homodyne snapshot from $k$ distinct and independent measurement,
\begin{align} 
\mathbb{E}\left[\rho^{(i_1)} \otimes \cdots \otimes \rho^{(i_k)}\right] \qquad i_1,\cdots, i_k \text{ distinct.}
\end{align}
In the multi-copy traces of Eq.~\eqref{eq:SWAP_trick} and Eq.~\eqref{eq:cyclic_shift} the same applies without changing the expectation. For the SWAP expectation value, since $S=S_{AA'} \otimes S_{BB'}$ swaps each mode separately, the trace factorizes 
\begin{align} 
\operatorname{Tr}\left[\left(\rho^{(i)} \otimes \rho^{(j)}\right) S\right] =& \, \operatorname{Tr}\left[\rho_A^{(i)}\rho_A^{(j)}\right]\operatorname{Tr}\left[\rho_B^{(i)}\rho_B^{(j)}\right] \\
&=: h_2(X_i, X_j),
\end{align}
with $\mathbb{E}\left[ h_2 \right] = p_2^{(N)}$ for $i\neq j$. In the Fock basis each factor is a sum over the pattern functions of the two runs:
\begin{align} \operatorname{Tr}\left[\rho_M^{(i)} \rho_M^{(j)}\right] = \sum_{n,m=0}^N F_{n,m}\left(x_M^{(i)}, \theta_M^{(i)}\right) F_{m,n}\left( x_M^{(j)}, \theta_M^{(j)} \right).\end{align}
This kernel is symmetric in its arguments, and its average over the $\binom{T}{2}$ distinct measurement outcome pairs is the unbiased U-statistic estimator 
\begin{align} \hat{p}_2^{(N)} = \binom{T}{2}^{-1} \sum_{i <j} h_2\left(X_i, X_j\right), \quad \mathbb{E} \left[\hat{p}_2^{(N)}\right] = p_2^{(N)}.\end{align}

The third moment (Eq~\eqref{eq:cyclic_shift}) uses three snapshots in the cyclic-shift expectation value. The forward shift ${\Pi}_{(123)}^{(A)}$ orders the $A$ factors as $i\rightarrow j\rightarrow k$, the reverse shift ${\Pi}_{(132)}^{(B)}$ orders the $B$ factors oppositely. so the kernel splits into a forward and reversed single-mode trace,
\begin{align} h_3\left( X_i, X_j, X_k \right) = \operatorname{Tr}\left[\rho_A^{(j)} \rho_A^{(j)} \rho_A^{(k)}\right]\operatorname{Tr}\left[ \rho_B^{(k)} \rho_B^{(j)} \rho_B^{(i)}\right],\end{align} with mean $p_3^{(N)}$ on distinct indices. The reversed ordering on $B$ implements the partial transpose on that subsystem; the explicit Fock-index contraction is given in Appendix~\ref{app:kernels}. The single-mode shadows are Hermitian, since pattern functions obey $F_{m,n} = F_{n,m}^{*}$~\cite{Leonhardt1997-bs}, so interchanging two arguments conjugate the kernel $h_3(X_i, X_k, X_j) = h_3(X_i, X_j, X_k)^{*}$. The three cyclic permutations $(i,j,k)$ therefore leave $h_3$ unchanged, while the three anti-cyclic ones conjugate it. This means the symmetrized kernel $\tilde{h}_3$ is exactly the real part $\tilde{h}_3 = \mathrm{Re}[h_3]$. Taking the real part preserves the mean as $p_3^{(N)}$ since the truncated third PT-moment is real as well. The average over the $\binom{T}{3}$ distinct triples is the unbiased U-statistic estimator \begin{align} \hat{p}_3^{(N)} = \binom{T}{3}^{-1}   \sum_{i<j<k} \tilde{h}_3 (X_i,X_j,X_k), \quad \mathbb{E}[\hat{p}_3^{(N)}] = p_3^{(N)}.\end{align}
Although expressed as sums over $\binom{T}{k}$ tuples, both estimators can be evaluated in closed form from accumulated single-run matrices, at a computational cost linear in the number of runs (Appendix \ref{app:kernels}).

Since the estimators target the truncated state's moments $p_k^{(N)} = \mathrm{tr}[(\rho_N^{T_B})^k]$, we evaluate truncated linear and quadratic witnesses for the subnormalized state $\operatorname{Tr}\left[\rho_N\right] = t_N$, which is given by \begin{align} 
&W_{\text{lin}}^{(N)} = p_3^{(N)} - \frac{3t_Np_2^{(N)} - t_N^3}{2} \\
&W_{\text{quad}}^{(N)} = t_N p_3^{(N)} - (p_2^{(N)})^2.
\end{align}

\emph{Proposition.---}%
The truncated witness remains valid for entanglement detection. For any bipartite state~$\rho$ and any Fock cutoff~$N \geq 1$,
\begin{equation}
\label{eq:trunc_chain}
  W^{(N)}_{\text{lin/quad}} < 0
  \Longrightarrow
  (\rho_N)^{T_B} \not\geq 0
  \Longrightarrow
  \rho^{T_B} \not\geq 0
\end{equation}
\emph{Proof.---} See Appendix \ref{app:truncation}.\\

Since $\rho^{T_B}<0$ implies entanglement of the non-truncated state $\rho$, the truncated witnesses are valid entanglement witnesses for the underlying state. Setting $t_N=1$ yields still a valid but weaker witness; we adopt $t_N=1$ throughout the manuscript, trading detection power for reduced variance and statistical convenience, and mitigate the loss by choosing $N$ such that $t_N = \operatorname{tr}[\rho_N] \approx 1$.

\subsection{Entanglement detection rules}
\label{sec:certification_rules}
Since $W^{(N)}_\text{lin} <0$ implies entanglement at any Fock cutoff $N$, entanglement detection reduces to deciding, from finitely many runs, whether the witness is negative. From the moment estimators of the previous section, we form 
\begin{align} 
\hat{W}_\text{lin}^{(N)} = \hat{p}_3^{(N)} - \frac{1}{2} (3 \hat{p}_2^{(N)} - 1),
\end{align}
which is an unbiased estimator for $W^{(N)}_\text{lin}$. We declare entanglement detection when its one-sided upper confidence bound is negative. The probability of this at $T$ runs,
\begin{equation}\label{eq:pi_alpha}
  \pi_\alpha(T) = \mathbf{Pr}\left[ \hat{W}_\text{lin}(T) + z_{1-\alpha}\hat{\sigma}_{W_\text{lin}}(T) < 0 \right],
\end{equation}
is the detection probability , with $z_{1-\alpha}$ the standard-normal quantile and $\hat{\sigma}_{W_\text{lin}}$ the standard error of $W_\text{lin}^{(N)}$ estimated of first-order Hoeffding projections \cite{Lee1990} of $\hat{p}_2^{(N)}$ and $\hat{p}_3^{(N)}$ (Supplemental Material Sec.~SIV). A separable state, for which $W^{(N)}_\text{lin} \geq 0$ can trigger the rule only through statistical fluctuations, with asymptotic probability of at most $\alpha$. We fix $\alpha = 0.05$ throughout. The experimental cost is the number of measurements at which detection becomes almost certain,
\begin{equation}\label{eq:Talpha95}
  T_\alpha^{95} = \min\bigl\{T : \pi_\alpha(T) \geq 0.95\bigr\}.
\end{equation}
The rule is studentized and asymptotically valid at level $\alpha$: its confidence coverage attains coverage $1-\alpha$ as $T\rightarrow \infty$ through the central-limit theorem of U-Statistics \cite{hoeffding48, Lee1990}. We calibrate its finite-$T$ behaviour via Monte Carlo in Sec.~\ref{sec:examples}. Across the six state families studied here, $T_{\alpha}^{95}$ lies between $10^3$ and $10^4$ homodyne runs, within reach of current homodyne experiments. 

Beyond the central-limit approximation, a distribution-free guarantee is available. Propagating per-moment guarantees (see Appendix \ref{app:finite-sample-guarantees}) through Eq.~\eqref{eq:Wlin}, yields the non-asymptotic certificate 
\begin{align}
\label{eq:rig_fallback}
\hat{W}_{\text{lin}}^{(N)} + \varepsilon < 0 \Rightarrow &{W}_{\text{lin}}^{(N)} < 0
\end{align}
with probability $1-\delta$, which holds for $T=O(N^{14/3}/\delta\varepsilon^2)$. This certifies entanglement without central-limit assumption (Appendix \ref{app:finite-sample-guarantees}). The Fock cutoff dependence disappears for states whose partial-transpose decay geometrically in the Fock basis. This includes for example two-mode squeezed vacuum (TMSV) states. For such states, we need $T=O(C_{\rho}/\delta\varepsilon^2)$ measurements for distribution free certification. These worst case bounds are conservative, requiring $T\gtrsim10^9$ runs against empirically $T_{\alpha}^{95} \sim 10^3-10^4$. We therefore use the studentized rule for the results of Sec.~\ref{sec:certification_rules} and use Eq.~\eqref{eq:rig_fallback} as a rigorous fallback.

\subsection{Entanglement Quantification}
\label{sec:negativity_bound}
The negativity~$\mathcal{N}(\rho) = (\|\rho^{T_B}\|_1 - 1)/2$ provides a way to quantify the amount of entanglement of the state \cite{VidalWerner2002, Plenio2005}. We know that $W_\text{quad} < 0$ and $W_\text{lin} < 0$ imply that the state is NPT and $\rho^{T_B}$ has at least one negative eigenvalue. Thus, the state has finite negativity ~$\mathcal{N}(\rho) >0$. PT-moments can be used to evaluate bounds on the states negativity \cite{carteret2017estimatingentanglementnegativityloworder,PhysRevA.91.022323}. Here, we present closed-form lower bounds on the states negativity from $(p_2,p_3)$. We derive a general bound~$\mathcal{N}(\rho) \geq \mathcal{N}_\text{lb}(p_2, p_3)$ and pure state bound $\mathcal{N}(\ket{\psi}\bra{\psi}) \geq \mathcal{N}_\text{pure}(p_2,p_3)$ from the root of a cubic polynomial. This provides entanglement quantification at no additional measurement cost.\\

\emph{Theorem.---}%
Let~$\rho$ be a bipartite state with PT-moments $p_1 = 1$, $p_2$, and~$p_3$ satisfying $p_3 < p_2^2$.  Then:

(i)~The negativity satisfies
$\mathcal{N}(\rho) \geq \beta_*$, where $\beta_*$ is the smallest
positive real root of
\begin{equation}
\label{eq:neg_cubic}
  g(u) = u^3 + 2p_2u^2 + p_3u + (p_3 - p_2^2).
\end{equation}

(ii)~The negativity also satisfies the closed-form rational bound
\begin{equation}
\label{eq:neg_rational}
  \mathcal{N}(\rho)
  \geq
  \frac{p_2^2 - p_3}{p_2 + p_3 + \tfrac{1}{4}} =\mathcal{N}_{\mathrm{lb}},
\end{equation}
and (iii) for pure states
\begin{align} \label{eq:neg_pure}
\mathcal{N}(\ket{\psi}\bra{\psi}) \geq \frac{-1 + \sqrt{5-4p_3}}{2} = \mathcal{N}_{\mathrm{pure}}.
\end{align}
Both bounds are dimension-free and therefore directly applicable to
CV systems. 

\emph{Proof sketch.---}%
Decompose the spectrum of $\rho^{T_B}$ into positive eigenvalues $\{\alpha_j\}$ and negative eigenvalues $\{-\beta_k\}$, with $\mathcal{N} = \sum_k \beta_k$. Applying Cauchy--Schwarz to $\sum_j \alpha_j^3$ from below, and using $\beta_k \leq \mathcal{N}$ to bound $\sum_k \beta_k^3$ from above, yields the polynomial inequality $g(\mathcal{N}) \geq 0$. Since $g(0) = p_3 - p_2^2 < 0$, the smallest positive root $\beta_{*}$ of ~(\ref{eq:neg_cubic}) gives $\mathcal{N} \geq \beta_{*}$. The rational bound ~(\ref{eq:neg_rational}) follows from the tighter spectral constraint $\operatorname{spec}(\rho^{T_B}) \subseteq [-\frac{1}{2}, 1]$~\cite{pt_neg_eigvals}. For pure states, $p_2=1$ factors the cubic. The full proof is in Appendix \ref{app:negativity}.\\

A similar bound is available from the truncated moments $p_2^{(N)}, p_3^{(N)}$:
\begin{align} 
&\mathcal{N}(\rho) \geq \frac{(p_2^{(N)})^2 - t_N p_3^{(N)}}{t_N p_2^{(N)} + p_3^{(N)} + t_N^3/4}=: \mathcal{N}_{\text{lb}}^{(N)}, \\
&\mathcal{N}(\ket{\psi}\bra{\psi})\geq \frac{-t_N^2 + \sqrt{5 t_N^4 - 4 t_N p_3^{(N)}}}{2t_N} =: \mathcal{N}_{\mathrm{pure}}^{(N)}
\end{align}
To estimate $(p_2^{(N)})^2$ one can build an order-4 U-statistic or square the estimator $\hat{p}_2^{(N)}$;  the latter is biased,  $\mathbb{E}[(\hat{p}_2^{(N)})^2]  = (p_2^{(N)})^2 + \operatorname{Var}[\hat{p}_2^{(N)}]$, and can be debiased by subtracting the estimated variance: $(\hat{p}_2^{(N)})^2 \rightarrow (\hat{p}_2^{(N)})^2 - \hat{\operatorname{Var}}[\hat{p}_2^{(N)}]$. We use the latter option, yielding the bias-corrected quadratic witness
\begin{align} 
\label{eq:quad_bias_corrected}
\hat{W}_\text{quad}' = p_3^{(N)} - (\hat{p}_2^{(N)})^2 + \hat{\operatorname{Var}}[\hat{p}_2^{(N)}].
\end{align}
\subsection{Robustness under Imperfections}
\label{sec:robustness}
We distinguish two aspects of robustness to loss.
\emph{Validity:} local loss channels preserve separability, so if
$W^{(N)}\!\bigl((\mathcal{L}_{\eta_A} \otimes
\mathcal{L}_{\eta_B})(\rho)\bigr) < 0$, then $\rho$ is
entangled---no calibration of~$\eta$ is required, and the certificate covers
detector inefficiency, fiber loss, and spatial mode mismatch
simultaneously (Appendix~\ref{app:robustness}).
\emph{Detection power:} loss does weaken the witness signal and raises the required sample count, since $|W_{\text{lin/quad}}^{(N)}|$ decreases with decreasing efficiency~$\eta$ (see  the sample-count scaling in Sec.~\ref{sec:robustness_demo}).

A constant phase offset~$\theta_0$ in the local-oscillator phase has
exactly zero effect on the estimators, because the U-statistic kernels
depend only on angle differences in which~$\theta_0$ cancels.
Stochastic jitter
$\sigma_\theta \sim 20\,\mathrm{mrad}$ contributes a bias
$< 2\%$ of~$p_2$ in the worst case (typically smaller for low-energy states), negligible at
$T \sim 10^3$ (Appendix~\ref{app:robustness}). The continuous phase distribution can be replaced by a discrete grid: $K_\theta \geq 2N + 1$ equally spaced phase settings give exactly unbiased
estimation.  For $N = 6$, $K_\theta = 13$ settings suffice.
Since each snapshot uses an independent phase, the
integration factorizes: $K_\theta \geq 2N + 1$ per mode per
snapshot suffices, regardless of the U-statistic order.

\subsection{Experimental recipe}
\label{sec:recipe}

We summarize the complete protocol as a six-step experimental recipe.
\begin{enumerate}
\setlength{\itemsep}{2pt}
\item Choose a Fock cutoff~$N$, ideally $t_N \approx 1$ (e.g., from a pilot estimate of the photon-number distribution).
\item Sample local-oscillator phases uniformly from $[-\pi/2, \pi/2))$.
\item Collect $T$ randomized homodyne snapshots $X_i = (\theta^{(i)}_A,\theta^{(i)}_B,x^{(i)}_A,x^{(i)}_B)$.
\item Compute $\hat{p}_2^{(N)}, \hat{p}_3^{(N)}$ via the U-statistic estimators.
\item Evaluate the linear witness for entanglement detection (Eq.~\eqref{eq:Wlin})
\item Optionally, evaluate the bias-corrected quadratic witness $\hat{W}_\text{quad}'$~\eqref{eq:quad_bias_corrected}, if negative, evaluate negativity lower bound
\end{enumerate}

\section{Demonstration}
\label{sec:examples}

We benchmark the protocol on six families of entangled states, both Gaussian and non-Gaussian: two-mode squeezed vacuum, its photon-subtracted and photo-added variants, NOON-states, entangled coherent states, and the photon-subtracted state of Walschaers \emph{et al.} \cite{walschaersEntanglementWignerFunction2017}. We treat the last in detail first, as its entanglement evades every second-moment criterion. It is hence the hard to detect by conventional means.  For each family, we report the sample cost of detection and the negativity the moments certify. Finally, we study the impact of optical loss and phase noise.

\paragraph*{Numerical protocol.}
We simulate homodyne measurements for each state (Code available at \footnote{Code will be made available in a public repository upon publication.}).  We run the estimators at an $N$ such that $1-\operatorname{Tr}\left[\rho_N\right] <0.05$, reported per state in Table \ref{tab:witness_detection}. By Proposition \eqref{eq:trunc_chain}, entanglement certified at any $N$ is valid, however, choosing a sutiable $N$ increases statistical power. Throughout this section we use the studentized detection rule (Eq.\eqref{eq:pi_alpha}), and report the empirical detection probability $\pi_{\alpha}(T)$ and the budget $T_{\alpha}^{95}$, calibrated from $K=150$ Monte Carlo repetitions per state and with $\alpha$ fixed at $\alpha =0.05$. In an experiment where the state and the photon-statistics are unknown, we recommend a
two-stage procedure: estimate the photon-number distribution
from a pilot dataset (${\sim}\,10\%$ of the total budget) and
fix~$N$ before using the remaining data for entanglement detection.

\subsection{{Photon-subtracted twin vacuum state}}
The state of Walschaers \emph{et al.}~\cite{walschaersNonGaussianQuantumStates2021} is prepared by mixing two squeezed vacuum modes on a beamsplitter, subtracting a photon from one arm, and recombining with an identical beamsplitter (Fig.~\ref{fig:phot_subtr_prep}). We fix the squeezing strength at $r=0.5$ ($\bar{n} \approx 2$) throughout.  The photon-subtraction is engineered in a way, that the covariance matrix of the final state satisfies the Simon criterion \cite{walschaersEntanglementWignerFunction2017, simonPeresHorodeckiSeparabilityCriterion2000}. Thus, it looks separable to every Gaussian entanglement witness.
\begin{figure}[b]
\includegraphics[width=\columnwidth]{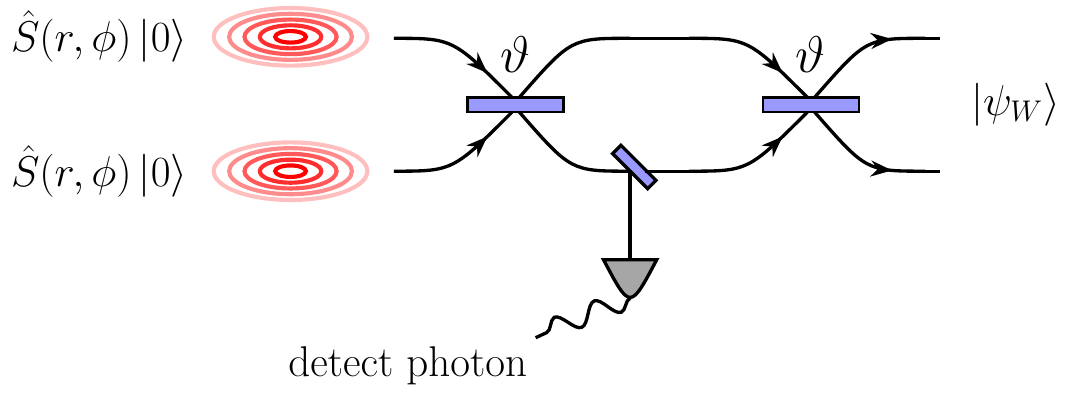}
\caption{\label{fig:phot_subtr_prep}
Schematic preparation of the photon-subtracted
state~\cite{walschaersNonGaussianQuantumStates2021}.
Two squeezed vacuum modes are mixed at beamsplitter angle~$\vartheta$;
a photon is subtracted via a high-transmissivity beamsplitter and single-photon
detector; a second beamsplitter yields the output state.}
\end{figure} The photon subtraction is designed so that the resulting covariance matrix satisfies the Simon criterion~\cite{simonPeresHorodeckiSeparabilityCriterion2000}; consequently, no Gaussian witness can detect the state's entanglement.
The $p_3$-PPT witness detects the states entanglement. Analytically, $W_\text{lin} = -\frac{3}{4}\operatorname{sin}^2\vartheta < 0$ for every $\vartheta \neq k\pi$. It remains to show, how many homodyne snapshots are necessary to resolve $W_\text{lin} < 0$  from single-copy homodyne data. Figure~\ref{fig:walschaer_convergence} shows the estimated moments $\hat{p}_{2}^{(N)}$ and $\hat{p}_{3}^{(N)}$, the witness $\hat{W}_{\text{lin}}^{(N)}$ and the negativity bound estimates $\hat{\mathcal{N}}^{(N)}_{\text{lb/pure}}$ as the number of homodyne snapshots $T$ increases, at $\vartheta=\pi/4$ and $r=0.5$. The U-statistics converge smoothly to their exact values , with truncation bias negligible against the statistical uncertainty up to $T \lesssim 10^4$.
\begin{figure}[htb!]
\includegraphics[width=\columnwidth]{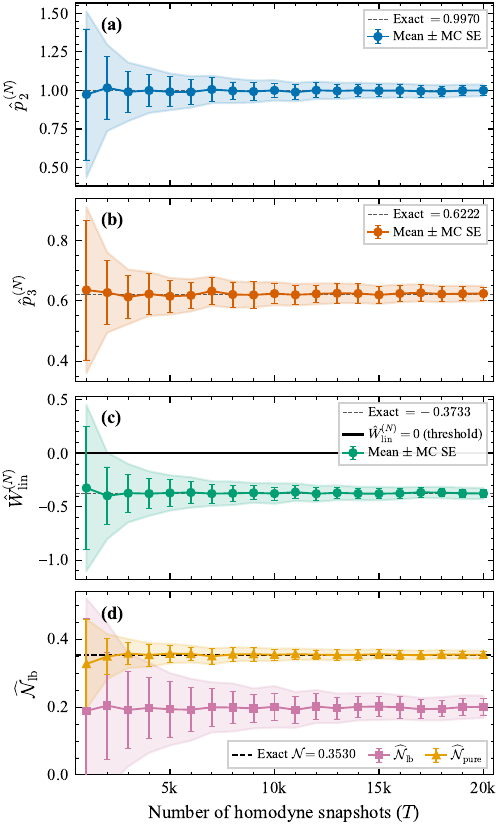}
\caption{\label{fig:walschaer_convergence}
Convergence of the estimated moments $\hat{p}_2^{(N)}$, $\hat{p}_3^{(N)}$,
and witness $\hat{W}_\text{lin}^{(N)}$ with the number of homodyne snapshots~$T$
for the photon-subtracted state at $\vartheta = \pi/4$, $r = 0.5$.
Shaded bands indicate mean Jackknife standard errors \cite{Arvesen1969} (see Supplemental for details), while error bars show Monte-Carlo standard errors. The estimates converge to the true values, the estimated and observed standard errors decrease with increasing $T$. (d) The pure state negativity bound is a tight estimate of the negativity; the general lower bound computed from the same moment estimates, is looser but a valid lower bound.}
\end{figure} 
Figure~\ref{fig:wal_pvalue} shows the detection probability $\pi_{\alpha}(T)$ for several squeezing strengths. The required sample count grows with squeezing, since a higher mean photon number needs a larger cutoff $N$ and the estimator variance grows with $N$. At $r=0.5$ ($\bar{n}\approx 2$) the protocol reaches $95\%$ one-sided detection at $T\approx 7{,}500$ homodyne snapshots.

\begin{figure}[!htb]
\includegraphics[width=\columnwidth]{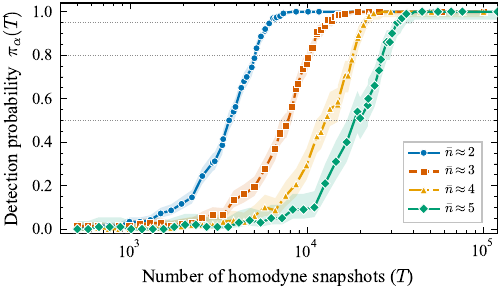}
\caption{\label{fig:wal_pvalue}
Empirical detection probability $\pi_\alpha(T)$ (Eq. ~(\ref{eq:pi_alpha})) ($\alpha =0.05$) as a function of the number of homodyne snapshots~$T$ for the
photon-subtracted state at $\vartheta = \pi/4$
and several squeezing strengths~$r$.
Higher squeezing requires more measurements due to increased
estimator variance at larger Fock truncation.}
\end{figure}

\paragraph*{Mixed non-Gaussian benchmark: noisy Walschaers state.}
To benchmark the protocol for mixed states,  we adopt a phenomenological vacuum-admixture benchmark for imperfect preparation of the Walschaers state:
\begin{equation}\label{eq:noisy_walschaers}
  \rho(\lambda)
  = (1 - \lambda)\,|\psi_W\rangle\langle\psi_W|
  + \lambda\,|0{,}0\rangle\langle 0{,}0|,
\end{equation}
where $|\psi_W\rangle$ is the Walschaers state ($r = 0.5$,
$\vartheta = \pi/4$) and $\lambda$ parametrizes the vacuum admixture.  At every noise level $\lambda$, the smallest symplectic eigenvalue is greater than $0.5$, so the Simon criterion cannot detect the entanglement. The $p_3$-PPT by contrast detects entanglement up to $\lambda \approx 0.53$ (purity $\sim 0.50$, negativity $\sim0.10$).  Figure~\ref{fig:mixed_state_walschaers} shows clear detection for $\lambda\lesssim0.25$.
\begin{figure}[t]
\includegraphics[width=\columnwidth]{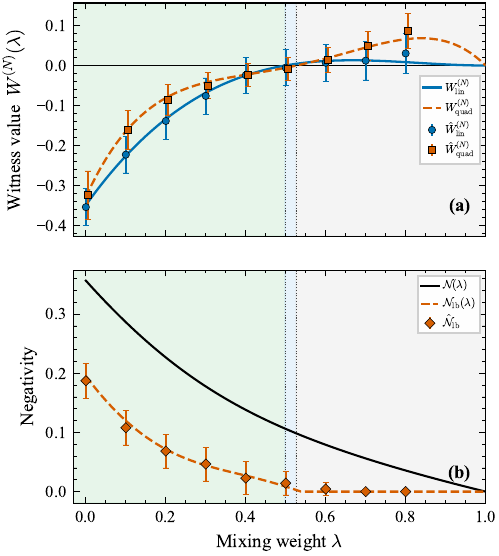}
\caption{\label{fig:mixed_state_walschaers}
(a) Witness value for the linear and quadratic witness of the mixed state (Eq.~\ref{eq:noisy_walschaers}) for different mixing weights $\lambda$. Lines indicate exact values, markers show data points, error bars are calculated from $K=150$ repetitions. In the green shaded region both witnesses detect entanglement in the light blue region only the quadratic witness does; in the grey region both witnesses are positive and detect nothing. (b) Negativity lower bound $\mathcal{N}_{\mathrm{lb}}$ for the same state. The black line shows the exact negativity, which remains $>0$ for $\lambda\in[0,1)$. The dashed red line shows the negativity lower bound, markers the estimated lower bound from estimated moments (error bars from $K=150$ repetitions).}
\end{figure}

\subsection{Multi-state survey}
\label{sec:multi_state}

Having investigated one demanding non-Gaussian state in detail, we now survey the protocol across a broader class of entangled states. We apply the protocol to five additional state families:
(i)~two-mode squeezed vacuum (TMSV), the canonical Gaussian
entangled state;
(ii)~photon-subtracted TMSV, obtained by subtracting $k$ photons
from one arm of a TMSV state \cite{walschaersNonGaussianQuantumStates2021};
(iii)~photon-added TMSV, with $k$ photons added instead \cite{Opatrny2000, NavarreteBenlloch2012};
(iv)~NOON states \cite{botonoon} $\propto \ket{n,0} + \ket{0,n}$,
maximally path-entangled Fock states;
(v)~entangled coherent (cat) states
$\propto \ket{\alpha,\alpha} + \ket{-\alpha,-\alpha}$ \cite{Sanders2012};
and (vi)~the Walschaers photon-subtracted state \cite{walschaersEntanglementWignerFunction2017} discussed above.

Figure~\ref{fig:gen_states_sign} shows the detection probability versus $T$ for representative parameters of each of the five added families. Table~\ref{tab:witness_detection} collects the key quantities for all six. Strongly entangled states are the cheapest to certify.  The $n=2$ NOON state reaches $95\%$ one-sided detection in only $T \approx 500$, thanks to its small Fock support and large witness magnitude. The remaining families require $T \approx 3{,}000$--$7{,}500$, the cost growing with mean photon number and Fock cutoff.

The negativity bound is tight for NOON
states (100\% of the exact negativity~$\mathcal{N}$) and cat
states ($\gtrsim 98\%$), but captures only 18--50\% for TMSV-type
states. The gap reflects the spectral structure of the partial transpose: NOON and cat states have a single dominant negative eigenvalue, making the two-moment bound nearly tight, whereas TMSV-family states spread their negativity across multiple eigenvalues. Tighter bounds from higher-order PT moments ($p_4, p_5$) would narrow this gap at the cost of increased estimator variance.

\begin{figure}[h]
\includegraphics[width=\columnwidth]{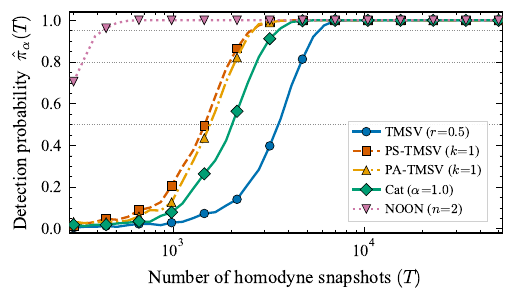}
\caption{\label{fig:gen_states_sign}
Empirical detection probability $\pi_\alpha(T)$ as a function of
homodyne snapshots~$T$ for five families of entangled states.
Each curve is averaged over $K = 150$ independent Monte-Carlo repetitions.}
\end{figure}

\begin{table*}[t]
\centering
\caption{Entanglement witness and negativity bounds across state families.
$\bar{n}$ is the mean total photon number, $N_{\mathrm{est}}$ the estimator Fock cutoff,
$W_{\text{lin}} = p_3 - (3p_2-1)/2$ the exact linear witness, $T_{\alpha}^{95}$ the homodyne snapshots required for
empirical (one-sided) detection ($K = 150$ repetitions), rounded up to the nearest $500$, $\mathcal{N}$ the exact negativity,
$\mathcal{N}_{\mathrm{lb}}$ the negativity lower bound of Sec.~\ref{sec:negativity_bound} (for pure states, the larger of the pure-state and rational bounds [Eqs.~\eqref{eq:neg_pure} and~\eqref{eq:neg_rational}]; cubic for mixed states), and the final column the tightness ratio.}
\label{tab:witness_detection}
\begin{ruledtabular}
\begin{tabular}{ld{1.2}cd{2.2}rccl}
\textrm{State} & \multicolumn{1}{c}{$\bar{n}$} & $N_{\mathrm{est}}$ & \multicolumn{1}{c}{$W_{\text{lin}}$} & \multicolumn{1}{c}{$T_{\alpha}^{95}$} & $\mathcal{N}$ & $\mathcal{N}_{\mathrm{lb}}$ & $\mathcal{N}_{\mathrm{lb}}/\mathcal{N}$ \\
\hline
\multicolumn{8}{l}{\textit{NOON states}} \\[2pt]
\text{$\ket{2,0}+\ket{0,2}$} & 2.00 & 3 & -0.75 & 500 & 0.50 & 0.50 & 100\% \\
\text{$\ket{3,0}+\ket{0,3}$} & 3.00 & 4 & -0.75 & 1,000 & 0.50 & 0.50 & 100\% \\
\text{$\ket{4,0}+\ket{0,4}$} & 4.00 & 5 & -0.75 & 2,000 & 0.50 & 0.50 & 100\% \\[4pt]
\multicolumn{8}{l}{\textit{Photon-subtracted TMSV} ($r=0.5$)} \\[2pt]
\text{$k=1$} & 1.98 & 6 & -0.88 & 3,000 & 1.70 & 0.65 & 38\% \\
\text{$k=2$} & 3.71 & 8 & -0.94 & 4,500 & 2.56 & 0.72 & 28\% \\
\text{$k=3$} & 5.46 & 10 & -0.96 & 6,000 & 3.28 & 0.75 & 23\% \\[4pt]
\multicolumn{8}{l}{\textit{Photon-added TMSV} ($r=0.3$)} \\[2pt]
\text{$k=1$} & 2.71 & 6 & -0.63 & 3,000 & 0.87 & 0.44 & 50\% \\
\text{$k=2$} & 5.45 & 8 & -0.84 & 3,000 & 1.35 & 0.60 & 44\% \\
\text{$k=3$} & 8.28 & 10 & -0.90 & 4,500 & 1.77 & 0.67 & 38\% \\[4pt]
\multicolumn{8}{l}{\textit{Two-mode squeezed vacuum}} \\[2pt]
\text{$r=0.3$} & 0.19 & 4 & -0.23 & 5,000 & 0.40 & 0.20 & 49\% \\
\text{$r=0.5$} & 0.54 & 6 & -0.51 & 6,500 & 0.83 & 0.37 & 45\% \\
\text{$r=0.7$} & 1.15 & 8 & -0.73 & 7,500 & 1.46 & 0.49 & 34\% \\[4pt]
\multicolumn{8}{l}{\textit{Entangled cat states}} \\[2pt]
\text{$\alpha=1.0$} & 1.93 & 6 & -0.69 & 4,000 & 0.48 & 0.47 & 98\% \\
\text{$\alpha=1.5$} & 4.50 & 8 & -0.74 & 6,000 & 0.50 & 0.50 & 100\% \\
\text{$\alpha=2.0$} & 8.00 & 10 & -0.71 & 9,500 & 0.49 & 0.49 & 100\% \\[4pt]
\multicolumn{8}{l}{\textit{Walschaers state} ($r=0.5$)} \\[2pt]
\text{$\vartheta=\pi/4$} & 2.09 & 6 & -0.35 & 7,000 & 0.35 & 0.28 & 80\% \\
\text{$\vartheta=3\pi/8$} & 2.09 & 8 & -0.63 & 7,500 & 0.46 & 0.44 & 96\% \\
\end{tabular}
\end{ruledtabular}
\end{table*}

\subsection{Robustness to loss and phase noise}
\label{sec:robustness_demo}
We finish the demonstration of the protocol by assessing the two dominant imperfections of homodyne experiments: optical loss and local-oscillator phase noise \cite{Leonhardt1997-bs}. Local detector inefficiencies cannot create false positives  (Sec.~\ref{sec:estimators_and_properties}). Any witness violation at any detector efficiency $\eta>0$ certifies entanglement of the source state without efficiency calibration. Here we characterize how loss and phase noise affect \emph{detection power}.

Figure~\ref{fig:robustness_heatmap} maps the witness value $W_\text{lin}(\eta, \sigma_\theta)$ over detector efficiency $\eta$ and phase-jitter standard deviation $\sigma_\theta$ for the photon-subtracted state Walschaers state. Also shown is the detection boundary $W_\text{lin} =0$ (and $W_{\text{quad}}$). Detector efficiency loss dominates, the $W_{\text{lin}} = 0 $ contours run nearly vertical, the boundary is mostly set by $\eta$, with phase jitter up to $\sigma_\theta \sim 300\,\text{mrad}$ playing only a secondary role. Other states are treated in Appendix~\ref{sec:app_loss}.
\begin{figure}[t]
\includegraphics[width=\columnwidth]{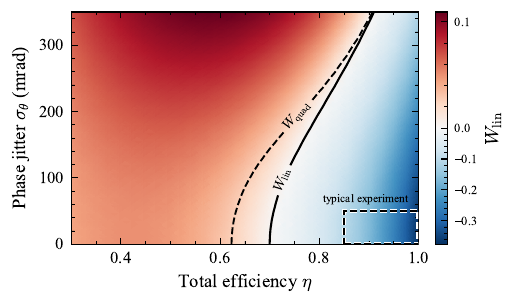}
\caption{\label{fig:robustness_heatmap}
Witness $W_{\text{lin}}(\eta,\sigma_\theta)$ and $W_{\text{quad}}(\eta,\sigma_\theta)$ for the photon-subtracted Walschaers state
(Fig.\ref{fig:phot_subtr_prep}) under imperfect detectors (efficiency $\eta$) and under homodyne measurement phase jitter $\sigma_\theta$. The colormap shows the witness value for the linear witness. The boundary is shown as a black (dashed) line for both the linear and quadratic witness.}
\end{figure}
Figure~\ref{fig:mixed_state} plots the sample cost $T_\alpha^{95}$ against detector efficiency for six state families. The count rises smoothly as $\eta$ falls, since loss reduces both purity and witness magnitude. For the photon-subtracted TMSV, $T_\alpha^{95}$ grows from $\sim 3{,}000$ at unit efficiency to $\sim 5{,}000$ at $\eta=0.9$ ($10\%$ loss). For TMSV, the sample cost grows from $\sim 7{,}500$ to $\sim12{,}000$ over the same range. NOON and entangled coherent states degrade more steeply. However, at realistic detector efficiencies of $\eta \approx 0.9$, the budgets remain of order $10^4$ runs, which is well within reach of current homodyne setups \cite{KUMAR20125259}.

shows the finite-sample cost $T_\alpha^{95}$ as a function of detector efficiency for six state families. The required sample count increases smoothly with decreasing $\eta$, since loss reduces both purity and witness magnitude. For the photon -subtracted TMSV, $T_\alpha^{95}$ grows from $\sim 3{,}000$ at ideal detection to $\sim 5{,}000$ at $\eta=0.9$ ($10\%$ loss); for the TMSV state, the cost increases from $\sim 7{,}500$ to $\sim 12{,}000$ over the same range. NOON and cat states degrade more steeply with loss, consistent with their higher critical-efficiency threshold. All the same counts remain within reach of current homodyne experiments operation at repetition rates of $10^5-10^6 \text{ Hz}$, confirming that the protocol is practical under realistic loss.
\begin{figure}[t]
\includegraphics[width=\columnwidth]{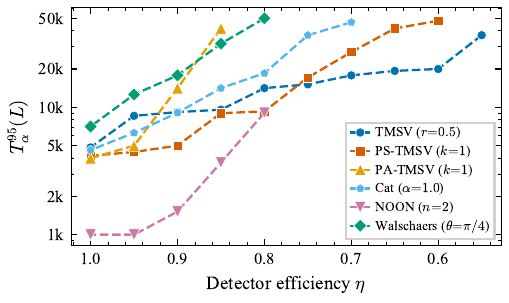}
\caption{\label{fig:mixed_state}
Sample count $T_{\alpha}^{95}$ required for $95\%$ empirical (one-sided) detection as a function of detector efficiency~$\eta$ for
TMSV ($r = 0.5$) and photon-subtracted TMSV ($r = 0.5$, $k = 1$), photon-added TMSV ($r =0.5, k=1$), NOON ($n=2$) and cat ($\alpha=1.0$) states.
Loss produces mixed states with $p_2 < 1$; the measurement cost
increases smoothly but remains experimentally feasible.}
\end{figure}

\section{Discussion \& Conclusion}
\label{sec:discussion}

The central message of this work is that non-Gaussian entanglement certification does not require non-Gaussian measurements. Standard homodyne detection, supplemented only by classical post-processing of randomized phase data suffices to estimate the partial-transpose moments $p_2$ and $p_3$ and to evaluate the linear and quadratic $p_3$-PPT witness. The sample counts reported are  between $10^3-10^4$ homodyne snapshots for the states with $\bar{n} \sim 2$ studied here. This places the protocol within reach of existing homodyne setups operating at kilohertz-to-megahertz repetition rates~\cite{KUMAR20125259}.

It is worth comparing our single-copy protocol with the multi-copy alternative of Ref.~\cite{deside2025detectinggenuinenongaussianentanglement}. The CV $p_3$-PPT extension of Deside \emph{et al.}~\cite{deside2025detectinggenuinenongaussianentanglement} accesses the same moments by interfering three simultaneous copies of $\rho$ through a Fourier-transform interferometer read out by photon-number-resolving detectors, attaining a lower variance per detection event than ours. That advantage is bought at a steep operational price. Three copies of the state must arrive in coincidence, suppressing the usable rate of heralded sources by order of magnitude. These copies must remain mutually coherent across a phase-stable interferometer and reproduce the target state with comparable fidelity. Single-copy operation removes all three requirements at once, each measurement acts on an  independently prepared copy. Furthermore, our protocol only relies on homodyne detection which is far more widely available than photon-number resolution. Our protocol complements the protocol of Deside \emph{et al.}, especially for laboratories who do not possess the hardware necessary for the multi-copy approach.

The price of single-copy operation is a polynomial variance scaling, $O\left((N+1)^{14/3}/\varepsilon^2\right)$ with the Fock cutoff~$N$, which is the protocol's principle limitation. In practice, it confines the method to roughly $\bar{n} \lesssim 3-4$ photons per mode ($N \lesssim 12$). This covers comfortably many experimentally relevant non-Gaussian states, but excludes strongly squeezed states ( $r>0.7$, above $6\,\mathrm{dB}$). For states whose Fock amplitudes decay geometrically, e.g. TMSV and photon-subtracted variants, this limitation is not as severe and the sample complexity becomes independent of $N$. We also introduced closed-form expressions for negativity lower-bounds from the second and third PT-moment. These bounds are dimension-independent and applicable to discrete-variable and continuous-variable quantum states.

Several extensions follow naturally. The most immediate is an experimental demonstration. Single-photon subtracted states from parametric down-conversion require only $\sim 3000$ heralded events, within reach of current experiments. Because the entanglement certificate is computed entirely in post-processing, recorded data can also support stronger criteria at no additional measurement cost. The optimal $p_3$-PPT criterion ($p_3$-OPPT) of Yu \emph{et al.} \cite{yuOptimalEntanglementCertification2021}, which is stronger than both the quadratic and the linear witness, can be evaluated from the same estimated moments. The framework also extends to multipartite settings, where a single dataset tests the PPT condition across very bipartition of an $M$-mode state and a union bound controls the family-wise error rate. Efficiency-dependent \emph{realistic} pattern functions~\cite{richterRealisticPatternFunctions2000} are available if we calibrate for the detector efficiency $\eta$. These reconstruct the pre-loss state and increase detection power. From the standpoint of quantum learning theory \cite{Huang2020}, a natural open question is the sample-optimal strategy for homodyne-based non-Gaussian entanglement detection, for which recent results on learning CV states  \cite{Wu2024,Mele2025, mele2026advancesquantumlearningtheory} can offer a starting point. 
\section*{Data and software availability}

No experimental data were generated or analyzed in this work. The numerical data supporting the figures and tables, together with the code used to generate the homodyne samples, evaluate the U-statistic estimators, perform the Monte Carlo analysis, and reproduce the plots, will be made available in a public repository upon publication.
\begin{acknowledgments}
Sone simulations of random homodyne measurements have been performed with the \textit{Python-}library \textit{Strawberryfields} \cite{Killoran2019, Bromley_2020}. M.S. and M.T. acknowledge the support of the Singapore National Quantum Scholarship Scheme (NQSS).  L.C.K acknowledges support from the National Research Foundation, Singapore, and the Ministry of Education, Singapore. We thank Celine My Thi Trieu for helpful input during the writing of this manuscript.
\end{acknowledgments}

\appendix

\section{Explicit kernel formulas and computational implementation}
\label{app:kernels}

We give the explicit Fock-basis expressions for the U-statistic
kernels used to estimate the PT-moments~$p_2$ and~$p_3$, including
the efficient accumulator implementation that avoids
the combinatorial sum over all tuples.

\subsection{Order-2 kernel: \texorpdfstring{$\hat{p}_2$}{p2-hat}}
\label{app:kernel_p2}

The kernel for the second PT-moment is
\begin{equation}\label{eq:h2_explicit}
  h_2(X_i, X_j)
  = g_A(X_i, X_j)\,g_B(X_i, X_j),
\end{equation}
where the single-mode factor is
\begin{align}\label{eq:gM_formula}
  g_M(X_i, X_j)
  &= \sum_{n,m=0}^{N}
    F_{n,m}(x_M^{(i)}, \theta_M^{(i)})\,
    F_{m,n}(x_M^{(j)}, \theta_M^{(j)})
\end{align}
Since $h_2$ is already symmetric under $i \leftrightarrow j$
(by cyclicity of the trace), the symmetrized kernel
$\tilde{h}_2 = h_2$, and the U-statistic is
\begin{equation}
  \hat{p}_2
  = \binom{T}{2}^{-1}\!\sum_{i < j} h_2(X_i, X_j).
\end{equation}

\paragraph{Efficient computation.}
Define the single-mode shadow matrices
$A_i := \rho_A^{(i)}$ and $B_i := \rho_B^{(i)}$
(each $(N\!+\!1) \times (N\!+\!1)$).  The bipartite shadow product in the
partial-transpose basis is the $(N\!+\!1)^2 \times (N\!+\!1)^2$ outer product
$P_i = A_i \otimes B_i^T$, reshaped as an $(N\!+\!1)^4$
tensor with indices
$[P_i]_{a,d,b,c} = [A_i]_{a,b}\,[B_i]_{c,d}$.
Accumulate the sum
\begin{equation}\label{eq:S_accumulate}
  S = \sum_{i=1}^T P_i
  = \sum_i A_i \otimes B_i^T,
\end{equation}
reshaped as an $(N{+}1)^2 \times (N{+}1)^2$ matrix.  Then
\begin{equation}\label{eq:p2_accumulate}
  \hat{p}_2
  = \frac{\mathrm{Tr}[S^2] - \sum_i \mathrm{Tr}[P_i^2]}
         {T(T - 1)},
\end{equation}
where $\mathrm{Tr}[P_i^2]
= \mathrm{Tr}[A_i^2]\,\mathrm{Tr}[B_i^2]$.
This is the standard diagonal-subtracted form: the coincident $i=j$
terms are removed so that only distinct snapshot pairs contribute,
keeping the estimator unbiased.
The cost is $O(T\,N^4)$ for the accumulation
plus $O(N^4)$ for $\mathrm{Tr}[S^2]$.

\paragraph{Hoeffding first projection.}
For the variance estimate, the first Hoeffding projection
is
\begin{equation}\label{eq:h1_p2}
  G_2(X_i) = \hat{p}_{2,1}^{(i)} - \hat{p}_2,
  \qquad
  \hat{p}_{2,1}^{(i)}
  = \frac{\mathrm{Tr}[P_i\,S] - \mathrm{Tr}[P_i^2]}{T - 1},
\end{equation}
and the standard error is
$\mathrm{SE}(\hat{p}_2)
= 2\,\mathrm{std}(G_2) / \sqrt{T}$.

\subsection{Order-3 kernel: \texorpdfstring{$\hat{p}_3$}{p3-hat}}
\label{app:kernel_p3}

The kernel for the third PT-moment uses the opposite-chirality
cyclic
shift [Eq.~\eqref{eq:cyclic_shift}]:
\begin{equation}\label{eq:h3_explicit}
  h_3(X_i, X_j, X_k)
  = \mathrm{Tr}\!\Bigl[
    \bigl(\rho_{AB}^{(i)} \otimes \rho_{AB}^{(j)}
    \otimes \rho_{AB}^{(k)}\bigr)
    \bigl(\hat\Pi_{(123)}^{(A)}
    \otimes \hat\Pi_{(132)}^{(B)}\bigr)
  \Bigr].
\end{equation}
Writing out the index contraction:
\begin{equation}\label{eq:h3_fock}
\begin{aligned}
  h_3(X_i,X_j,X_k)
  &= \sum_{\substack{a_1,a_2,a_3 \\ b_1,b_2,b_3}}
    [A_i]_{a_1,a_2}\,[B_i]_{b_1,b_3}\,
    [A_j]_{a_2,a_3}\,[B_j]_{b_2,b_1}\\
  &\qquad\qquad\qquad\times
    [A_k]_{a_3,a_1}\,[B_k]_{b_3,b_2}\\[4pt]
  &= \mathrm{Tr}[A_i\,A_j\,A_k]\;\mathrm{Tr}[B_i^T B_j^T B_k^T].
\end{aligned}
\end{equation}
The last factorization follows because the forward cyclic
shift on~$A$ and the reverse cyclic shift on~$B$ decouple:
the $A$-indices contract as a cyclic trace
$\sum_{a_1,a_2,a_3}[A_i]_{a_1,a_2}[A_j]_{a_2,a_3}
[A_k]_{a_3,a_1} = \mathrm{Tr}[A_i A_j A_k]$,
and the $B$-indices contract with the opposite cycle
$\sum_{b_1,b_2,b_3}[B_i]_{b_1,b_3}[B_j]_{b_2,b_1}
[B_k]_{b_3,b_2} = \mathrm{Tr}[B_k B_j B_i]
= \mathrm{Tr}[B_i^T B_j^T B_k^T]$.

The symmetrized kernel averages over the $3! = 6$
permutations of $(i,j,k)$.  By cyclic invariance of
the trace ($\mathrm{Tr}[ABC] = \mathrm{Tr}[BCA]$),
the three cyclic permutations give the same value, so
only two distinct terms appear:
\begin{equation}\label{eq:h3_sym}
  \tilde{h}_3(X_i,X_j,X_k)
  = \tfrac{1}{2}\bigl[
    h_3(X_i,X_j,X_k)
    + h_3(X_i,X_k,X_j)
  \bigr].
\end{equation}
Since $A_i$ and $B_i^T$ are Hermitian, swapping the last two operands in each trace acts as complex conjugation: $\operatorname{Tr}[A_i A_k A_j] = \operatorname{Tr}[A_i A_j A_k]^*$ and likewise for $B_i^T$. Hence, $h_3(X_i, X_k, X_j) = h_3(X_i, X_j, X_k)^*$ and the symmetrized kernel is real-valued by construction, $\tilde{h}_3 = \operatorname{Re}[h_3]$.
\paragraph{Efficient computation.}
Using the accumulated matrix~$S$~\eqref{eq:S_accumulate}
(reshaped to $(N{+}1)^2 \times (N{+}1)^2$), define
\begin{equation}\label{eq:Q_def}
  Q = \sum_{i=1}^T P_i^2
  = \sum_i (A_i^2) \otimes (B_i^2)^T,
\end{equation}
and the diagonal cubic sum
$D = \sum_i \mathrm{Tr}[A_i^3]\,\mathrm{Tr}[(B_i^T)^3]$.
Then the inclusion-exclusion formula gives
\begin{equation}\label{eq:p3_accumulate}
  \hat{p}_3
  = \frac{\mathrm{Tr}[S^3]
    - 3\,\mathrm{Tr}[Q\,S]
    + 2\,D}
    {T(T-1)(T-2)},
\end{equation}
where $\mathrm{Tr}[S^3] = \mathrm{Tr}[(S^2)\,S]$ involves
only one additional matrix multiplication.
The total cost is $O(T\,N^4 + N^6)$ time and
$O(N^4)$ memory.

\paragraph{Hoeffding first projection.}
\begin{equation}\label{eq:h1_p3}
  G_3(X_i)
  = \frac{\mathrm{Tr}[P_i\,S^2]
    - 2\,\mathrm{Tr}[P_i^2\,S]
    - \mathrm{Tr}[P_i\,Q]
    + 2\,\mathrm{Tr}[P_i^3]}
    {(T-1)(T-2)},
\end{equation}
and $\mathrm{SE}(\hat{p}_3)
= 3\,\mathrm{std}(G_3)/\sqrt{T}$.

\subsection{Computational complexity}
\label{app:complexity}

The bottleneck for both estimators is the $O(T\,N^4)$
accumulation of~$S$~\eqref{eq:S_accumulate}: for each of the
$T$ snapshots, the outer product $A_i \otimes B_i^T$ has
$N^4$ entries.  The subsequent matrix operations
($\mathrm{Tr}[S^2]$, $\mathrm{Tr}[S^3]$) are $O(N^6)$,
subdominant for $T \gg N^2$.

For the sample sizes and cutoffs in this paper
($T \leq 5 \times 10^4$, $N \leq 10$), the total
computation takes seconds on a single CPU core.  No
combinatorial enumeration of $\binom{T}{2}$ or
$\binom{T}{3}$ tuples is needed: the accumulator approach~\eqref{eq:p2_accumulate}
and~\eqref{eq:p3_accumulate} reduces the U-statistic
evaluation to a fixed number of matrix operations on the
accumulated sums~$S$ and~$Q$.

\section{Truncation Validity Proof}
\label{app:truncation}

We prove that the truncated linear and quadratic witnesses
\begin{align}
W_{\text{lin}}^{(N)} &= p_3^{(N)} - \frac{3t_N p_2^{(N)} - t_N^3}{2},
\label{eq:Wlin_trunc_app}\\
W_{\text{quad}}^{(N)} &= t_N p_3^{(N)} - \bigl(p_2^{(N)}\bigr)^2,
\label{eq:Wquad_trunc_app}
\end{align}
are valid entanglement tests for the original infinite-dimensional
state~$\rho$, at every Fock cutoff~$N \geq 1$.  Throughout,
$\Pi_N = \sum_{n=0}^{N} \ket{n}\!\bra{n}$ denotes the per-mode Fock
projector, $\rho_N = (\Pi_N \otimes \Pi_N)\rho(\Pi_N \otimes
\Pi_N)$ the projected state with
$t_N = \mathrm{tr}[\rho_N] \leq 1$, and
$p_k^{(N)} = \mathrm{tr}[(\rho_N^{T_B})^k]$ the projected
PT-moments.

\emph{Proposition.---}%
For any bipartite state~$\rho$ and any Fock cutoff~$N \geq 1$,
\begin{align}\label{eq:trunc_chain_app}
  W_{\text{lin/quad}}^{(N)} < 0
  \Longrightarrow
  \rho_N^{T_B} \not\geq 0
  &\Longrightarrow
  \rho^{T_B} \not\geq 0 \notag \\
  &\quad\Longrightarrow
  \rho \text{ is entangled.}
\end{align}
In particular, neither truncated witness produces false positives.
This is a structural property of the exact witness value; the
finite-sample estimator requires statistical thresholding
(Sec.~\ref{sec:protocol}).

\begin{proof}
We prove the contrapositive: if~$\rho$ is separable, then
$W_{\text{lin}}^{(N)} \geq 0$ and $W_{\text{quad}}^{(N)} \geq 0$
for every~$N$.

\emph{Step~1 (Separability $\Rightarrow$ PPT).}---
By the Peres criterion~\cite{peresSeparabilityCriterionDensity1996},
separability of~$\rho$ implies that its partial transpose is positive
semidefinite:
\begin{equation}
  \rho^{T_B}
  = \sum_j q_j\rho_j^A \otimes (\rho_j^B)^T
  \geq 0.
\end{equation}

\emph{Step~2 (Compression commutes with $T_B$).}---
Since $\Pi_N$ is real-diagonal in the Fock basis, partial
transposition and projection commute:
\begin{align}
  \rho_N^{T_B}
  &= \bigl[(\Pi_N \otimes \Pi_N)\rho
    (\Pi_N \otimes \Pi_N)\bigr]^{T_B} \\
  &= (\Pi_N \otimes \Pi_N)\rho^{T_B}
    (\Pi_N \otimes \Pi_N).
\end{align}
Compression preserves positive semidefiniteness: for any
$\ket{v}$,
\begin{equation}
  \bra{v}(\Pi_N \otimes \Pi_N)A
  (\Pi_N \otimes \Pi_N)\ket{v}
  = \braket{\Pi v | A | \Pi v}
  \geq 0
\end{equation}
whenever $A \geq 0$.
Hence $\rho^{T_B} \geq 0$ implies $\rho_N^{T_B} \geq 0$.

\emph{Step~3 ($\rho_N^{T_B} \geq 0$ implies both witnesses
non-negative).}---
Set $A = \rho_N^{T_B}$ and let $\{\lambda_i\}$ be its eigenvalues,
with $\lambda_i \geq 0$ and
$t_N = \mathrm{tr}[A] = \sum_i \lambda_i \leq 1$.  Write
$p_k^{(N)} = \mathrm{tr}[A^k] = \sum_i \lambda_i^k$.
(If $t_N = 0$ then $\rho_N = 0$ and both witnesses vanish;
henceforth assume $t_N > 0$.)

\emph{Step~3a (Quadratic witness).}---
Apply the Cauchy--Schwarz inequality to the vectors
$u_i = \lambda_i^{1/2}$ and $v_i = \lambda_i^{3/2}$:
\begin{equation}\label{eq:CS_moment_app}
  \Bigl(\sum_i \lambda_i\Bigr)
  \Bigl(\sum_i \lambda_i^3\Bigr)
  \geq
  \Bigl(\sum_i \lambda_i^2\Bigr)^2,
\end{equation}
i.e.\ $t_N \cdot p_3^{(N)} \geq \bigl(p_2^{(N)}\bigr)^2$, which is
$W_{\text{quad}}^{(N)} \geq 0$.

\emph{Step~3b (Linear witness).}---
Newton's identity for the third power sum in terms of the elementary
symmetric polynomials $e_k$ of the spectrum $\{\lambda_i\}$ reads
\begin{equation}\label{eq:Wlin_newton}
  p_3^{(N)}
  = e_1p_2^{(N)} - e_1e_2 + 3e_3,
\end{equation}
with $e_1 = \sum_i \lambda_i = t_N$,
$e_2 = (e_1^2 - p_2^{(N)})/2 = (t_N^2 - p_2^{(N)})/2$, and
$e_3 = \sum_{i<j<k}\lambda_i\lambda_j\lambda_k$.  Substituting and
rearranging,
\begin{align}\label{eq:Wlin_e3}
  W_{\text{lin}}^{(N)} = p_3^{(N)} - \frac{3t_Np_2^{(N)} - t_N^3}{2} &= 3e_3 \\ &= 3\!\sum_{i<j<k}\lambda_i\lambda_j\lambda_k
  \geq 0,
\end{align}
where non-negativity is immediate from $\lambda_i \geq 0$.
Equivalently,
$2p_3^{(N)} - 3t_Np_2^{(N)} + t_N^3 = 6e_3(\rho_N^{T_B}) \geq 0$:
the linear witness equals, up to a positive prefactor, the third
elementary symmetric polynomial of the compressed PT spectrum.
\end{proof}

\paragraph*{Setting $t_N = 1$ is weaker but valid.---}%
Throughout the main text we evaluate the simplified witnesses
obtained by setting $t_N = 1$ in
\eqref{eq:Wlin_trunc_app}--\eqref{eq:Wquad_trunc_app},
\begin{equation}\label{eq:simplified_witnesses}
  \widehat W_{\text{lin}}^{(N)}
    = p_3^{(N)} - \frac{3p_2^{(N)} - 1}{2},
  \qquad
  \widehat W_{\text{quad}}^{(N)}
    = p_3^{(N)} - \bigl(p_2^{(N)}\bigr)^2.
\end{equation}
On any positive compressed spectrum,
\begin{align}
\widehat W_{\text{quad}}^{(N)} - W_{\text{quad}}^{(N)}
  &= (1 - t_N)p_3^{(N)} \geq 0,
  \label{eq:Wquad_simplification}\\
\widehat W_{\text{lin}}^{(N)} - W_{\text{lin}}^{(N)}
  &= \tfrac{1-t_N}{2}\bigl[1 + t_N + t_N^2 - 3p_2^{(N)}\bigr] \geq 0,
  \label{eq:Wlin_simplification}
\end{align}
the first using $p_3^{(N)} \geq 0$ on PSD spectra, the second using
$p_2^{(N)} \leq t_N^2$ (from
$\sum_i \lambda_i^2 \leq (\sum_i\lambda_i)^2$ for non-negative
$\lambda_i$), which gives
$1 + t_N + t_N^2 - 3p_2^{(N)} \geq 1 + t_N - 2t_N^2 = (1-t_N)(1+2t_N) \geq 0$.
Hence the simplified witnesses dominate their $t_N$-corrected
counterparts on PPT spectra: separable states still pass
($\widehat W \geq W^{(N)} \geq 0$), and any $\widehat W < 0$
implies $W^{(N)} \leq \widehat W < 0$, which certifies
$\rho_N^{T_B} \not\geq 0$ by the proposition.  Setting $t_N = 1$
therefore trades detection power (the gap is of order $1 - t_N$) for
variance reduction and statistical convenience; in our benchmarks
$N \geq 6$ gives $1 - t_N < 10^{-3}$.

\paragraph*{Moment monotonicity.---}
The second projected moment satisfies $p_2^{(N)} \leq p_2$ for any
state, separable or entangled.  This follows from the Hilbert--Schmidt
block decomposition $\norm{A}_F^2 = \norm{\Pi A \Pi}_F^2 +
2\norm{\Pi A Q}_F^2 + \norm{Q A Q}_F^2$, with $Q = I - \Pi$.  The
third moment satisfies $p_3^{(N)} \leq p_3$ whenever
$\rho^{T_B} \geq 0$, since $x \mapsto x^3$ is Schur-convex on
$\mathbb{R}_+$ and the compressed eigenvalues are majorized by the
original ones (Cauchy interlacing), but $p_3^{(N)}$ can move in
either direction for entangled states.

\paragraph*{Truncation error bounds.---}
Let $\delta_N = 1 - t_N$ denote the truncation error.
Since $\norm{\rho^{T_B}}_{\mathrm{op}} \leq 1$ for all bipartite states (Supplemental Material Sec.~SVI), the
moment deficits are controlled by~$\delta_N$:
\begin{align}
  p_2 - p_2^{(N)}
  &\leq 2\delta_N,
  \label{eq:p2_deficit_app} \\
  p_3 - p_3^{(N)}
  &\leq 3\sqrt{2\delta_N p_2} + 8\delta_N.
  \label{eq:p3_deficit_app}
\end{align}
Propagating to the linear witness gives
$|W_{\text{lin}} - W_{\text{lin}}^{(N)}| \leq 3\sqrt{2\delta_N p_2} + 11\delta_N$,
so a sufficient condition for sign preservation is
$\delta_N \lesssim [W_{\text{lin}}^{(N)}]^2 / (18p_2)$.
For the quadratic witness,
$|W_{\text{quad}} - W_{\text{quad}}^{(N)}| \leq 3\sqrt{2\delta_Np_2} + 10\delta_N$,
with an analogous sign-preservation threshold.
For the states considered in this
work, $N \geq 6$ gives $\delta_N < 10^{-3}$, making the truncation
correction negligible compared to the statistical uncertainty.

\section{Finite-Sample Guarantees}
\label{app:finite-sample-guarantees}
Under the sole assumption of finite mean photon number ($\bar{n} <\infty$), a Hoeffding/Serfling variance decomposition combined with a pointwise Hilbert--Schmidt bound on the single-shot snapshots (Supplemental Material Sec.~SV) yields a state-independent non-asymptotic sample complexity
\begin{equation}
\label{eq:sample-scaling}
    T \ge \frac{K_k C_\text{pf}^2 (N+1)^{14/3}}{\delta\varepsilon^2}, \qquad K_2=32, \quad K_3 = 108,
\end{equation}
for estimating each truncated moment $p_k^{(N)}$ $(k=2,3)$ for additive accuracy $\varepsilon$ with confidence $1-\delta$, where $C_\text{pf}
$ is the universal Gill--Guta pattern-function constant \cite{Gill2003InvitationQT}; equivalently $T=O\left((N+1)^{14/3}/(\delta \varepsilon^2)\right)$. For states whose partially transposed Fock-basis matrix elements decay geometrically $\lvert [\rho^{T_B}]_{(nk),(ml)}\rvert \leq C_\rho r^{(n+k+l+m)/2}$ with $r< 1$, e.g. for two-mode squeezed vacuum (TMSV), the scaling becomes independent of $N$ at fixed truncated-moment accuracy.  Both the state-independent scaling and the geometric-decay scaling are conservative finite-sample guarantees; empirically, good estimates of $p_k^{(N)}$ ($k=2,3$) take between $10^3-10^4$ homodyne samples. The per-moment scaling \eqref{eq:sample-scaling} yields a finite-sample certification rule for the linear witness directly.  The triangle inequality
\begin{equation}\label{eq:Wlin_triangle}
  |\hat W_\text{lin}^{(N)} - W_\text{lin}^{(N)}|
  \leq \tfrac{3}{2}|\hat p_2^{(N)} - p_2^{(N)}|
  + |\hat p_3^{(N)} - p_3^{(N)}|
\end{equation}
combined with a union bound over two per-moment confidence intervals (allocating confidence $\delta/2$ to each) shows that
\begin{equation}\label{eq:rigorous_rule}
  \hat W_\text{lin}^{(N)} + \varepsilon < 0
  \quad\Longrightarrow\quad
  W_\text{lin}^{(N)} < 0
  \quad\text{with prob.}\geq 1 - \delta,
\end{equation}
whenever
\begin{equation}\label{eq:rigorous_T}
  T \geq \frac{C_WC_\text{pf}^2(N+1)^{14/3}}{\delta\varepsilon^2},
\end{equation}
with $C_W = \bigl(\tfrac{3}{2}\sqrt{2 K_2}+\sqrt{2 K_3}\bigr)^2 \approx 713$.
Combined with the truncation proposition \eqref{eq:trunc_chain}, \eqref{eq:rigorous_rule} provides a non-asymptotic, distribution-free certificate of $\rho^{T_B}\not\geq 0$, and hence of entanglement with no central-limit-theorem assumption.  An analogous bound for the quadratic witness carries an additional $(p_2 + \hat p_2)$ factor in the $p_2^2$ propagation and is given in the Supplemental Material Sec.~SV.
\section{Negativity Lower Bound Proof}
\label{app:negativity}

The negativity $\mathcal{N}(\rho) = (\norm{\rho^{T_B}}_1 - 1)/2$
equals the total weight on the negative eigenvalues
of~$\rho^{T_B}$~\cite{VidalWerner2002}.  It is an entanglement
monotone, and the logarithmic negativity
$E_\mathcal{N} = \log_2(1 + 2\mathcal{N})$ bounds the entanglement
cost under PPT operations~\cite{Plenio2005}.

\emph{Theorem.---}\label{thm:negativity}%
Let~$\rho$ be a bipartite state with PT-moments $p_1 = 1$, $p_2$,
and~$p_3$ satisfying $p_3 < p_2^2$.  Then:

\emph{(i)} The negativity satisfies the cubic bound
\begin{equation}\label{eq:neg_cubic_app}
  \mathcal{N}(\rho) \;\geq\; \beta_*,
\end{equation}
where $\beta_*$ is the smallest positive real root of
\begin{equation}\label{eq:neg_cubic_poly_app}
  g(u) = u^3 + 2\,p_2\,u^2 + p_3\,u + (p_3 - p_2^2).
\end{equation}

\emph{(ii)} The negativity also satisfies the closed-form rational bound
\begin{equation}\label{eq:neg_rational_app}
  \mathcal{N}(\rho)
  \;\geq\;
  \frac{p_2^2 - p_3}{p_2 + p_3 + \tfrac{1}{4}}\,.
\end{equation}

\emph{(iii)} For pure states ($p_2 = 1$), the cubic factors as
$g(u) = (u+1)(u^2 + u + p_3 - 1)$ and the bound becomes
\begin{equation}\label{eq:neg_pure_app}
  \mathcal{N}_{\mathrm{lb}}
  = \frac{-1 + \sqrt{5 - 4\,p_3}}{2}.
\end{equation}

The full proof uses only the Cauchy--Schwarz inequality and the
spectral range of partial transposes.  It is self-contained and avoids
the compactness and KKT arguments that a finite-dimensional reduction
would require.

\begin{proof}
Let \(A=\rho^{T_B}\), and write its eigenvalues as
\begin{align}
\{\alpha_j\}_j \cup \{-\beta_k\}_k,
\qquad
\alpha_j\ge 0,\ \beta_k>0.
\end{align}
Since \(A\) is Hermitian and \(\operatorname{Tr}A=1\),
\begin{align}
\sum_j \alpha_j-\sum_k \beta_k = 1,
\qquad
\nu:=\mathcal N(\rho)=\sum_k \beta_k.
\end{align}
Set
\begin{align}
x:=\sum_j \alpha_j^2,\qquad
y:=\sum_k \beta_k^2,
\end{align}
\begin{align}
R_+:=\sum_j \alpha_j^3,\qquad
R_-:=\sum_k \beta_k^3.
\end{align}
Then
\begin{align}
x+y=p_2,
\qquad
R_+-R_-=p_3.
\end{align}

\smallskip
\noindent\emph{(i) Cubic bound.}
By Cauchy--Schwarz on the positive eigenvalues,
\begin{align}
R_+ \ge \frac{x^2}{1+\nu}.
\end{align}
Since each \(\beta_k\le \nu\),
\begin{align}
R_-=\sum_k \beta_k^3 \le \nu\sum_k \beta_k^2 = \nu y.
\end{align}
Hence
\begin{align}
p_3 \ge \frac{(p_2-y)^2}{1+\nu}-\nu y
      =: f_\nu(y),
\end{align}
where \(f_\nu\) is strictly decreasing on \([0,p_2]\).

\smallskip
\noindent\emph{Case 1: \(\nu^2\le p_2\).}
By Cauchy--Schwarz, \(y\le \nu^2\). Therefore
\begin{align}
p_3 \ge f_\nu(\nu^2)
    = \frac{(p_2-\nu^2)^2}{1+\nu}-\nu^3.
\end{align}
After rearranging,
\begin{align}
g(\nu):=\nu^3+2p_2\nu^2+p_3\nu+(p_3-p_2^2)\ge 0.
\end{align}

\smallskip
\noindent\emph{Case 2: \(\nu^2\ge p_2\).}
Since \(y\le p_2\) and \(f_\nu\) is decreasing,
\begin{align}
p_3 \ge f_\nu(p_2) = -\nu p_2.
\end{align}
Thus
\begin{align*}
g(\nu)
&\ge \nu^3+2p_2\nu^2-\nu^2p_2-\nu p_2-p_2^2 \\
&= (\nu+p_2)(\nu^2-p_2)\ge 0.
\end{align*}
So \(g(\nu)\ge 0\) in both cases. Since
\begin{align}
g(0)=p_3-p_2^2=-\delta<0,
\qquad
g(u)\to+\infty \quad (u\to\infty),
\end{align}
\(g\) has a smallest positive root \(\beta_*\), and therefore
\begin{align}
\nu\ge \beta_*.
\end{align}

\smallskip
\noindent\emph{(ii) Rational bound.}
We use the spectral-range fact\cite{pt_neg_eigvals}
\begin{align}
\operatorname{spec}(\rho^{T_B})\subseteq \left[-\tfrac12,1\right],
\end{align}
proved in the Supplemental Material Sec.~SIII. Hence
\begin{align}
\beta_k\le \tfrac12,
\qquad
y\le \frac{\nu}{2},
\qquad
R_-\le \frac{y}{2}.
\end{align}
Therefore
\begin{align}
p_3 \ge \frac{(p_2-y)^2}{1+\nu}-\frac{y}{2}
      =: h_\nu(y),
\end{align}
where \(h_\nu\) is strictly decreasing on \([0,p_2]\).

\smallskip
\noindent\emph{Case 1: \(\nu/2\le p_2\).}
Then \(y\le \nu/2\), so
\begin{align}
p_3 \ge h_\nu(\nu/2)
    = \frac{(p_2-\nu/2)^2}{1+\nu}-\frac{\nu}{4}.
\end{align}
Rearranging gives
\begin{align}
\nu\left(p_2+p_3+\tfrac14\right)\ge p_2^2-p_3.
\end{align}

\smallskip
\noindent\emph{Case 2: \(\nu/2\ge p_2\).}
Then \(\nu\ge 2p_2\). Also
\begin{align}
p_3=R_+-R_- \ge -R_- \ge -\frac{p_2}{2}.
\end{align}
Hence
\begin{align}
p_2+p_3+\tfrac14 \ge \frac{p_2}{2}+\frac14>0,
\end{align}
and therefore
\begin{align*}
\nu\left(p_2+p_3+\tfrac14\right)
&\ge 2p_2\left(\frac{p_2}{2}+\frac14\right) \\
&= p_2^2+\frac{p_2}{2}
 \ge p_2^2-p_3.
\end{align*}
Thus, in both cases,
\begin{align}
\nu \ge \frac{p_2^2-p_3}{p_2+p_3+\tfrac14}.
\end{align}

\smallskip
\noindent\emph{(iii) Specialization to \(p_2=1\).}
Setting \(p_2=1\) gives
\begin{align}
g(u)=(u+1)(u^2+u+p_3-1),
\end{align}
whose positive root is
\begin{align}
\frac{-1+\sqrt{5-4p_3}}{2}.
\end{align}
This is the claimed expression.
\end{proof}
The bounds above apply to the \emph{exact normalized}
moments $(p_1, p_2, p_3)$ of the full state~$\rho$.  In the
truncated protocol, the estimators target the projected
moments $p_k^{(N)} = \mathrm{Tr}[(\rho_N^{T_B})^k]$ of the
subnormalized state $\rho_N = \Pi_N\rho\Pi_N$, with
$t = \mathrm{Tr}[\rho_N] \leq 1$.
To obtain a valid lower bound on~$\mathcal{N}(\rho)$,
the correct route uses the \emph{unnormalized}
negativity $\mathcal{N}_*(\rho_N)
= (\|\rho_N^{T_B}\|_1 - t)/2$, which satisfies
$\mathcal{N}_*(\rho_N) \leq \mathcal{N}(\rho)$ by
compression monotonicity (Supplemental Material Sec.~SVI).
The truncated negativity bounds derived therein
are expressed directly in terms of the subnormalized
moments $(t_N, p_2^{(N)}, p_3^{(N)})$ and do not require
renormalization of the projected state.

\emph{Corollary.---}
Under the same conditions,
$E_\mathcal{N}(\rho) \geq \log_2(1 + 2\,\mathcal{N}_{\mathrm{lb}})$.

\section{Robustness Proofs}
\label{app:robustness}

\emph{Proposition (Loss validity).---}%
For any bipartite state~$\rho$, any single-mode loss channels
$\mathcal{L}_{\eta_A}$, $\mathcal{L}_{\eta_B}$ with transmissivities
$\eta_A, \eta_B \in (0,1]$, and any Fock cutoff~$N \geq 1$:
\begin{equation}\label{eq:loss_validity_app}
  W^{(N)}\!\bigl((\mathcal{L}_{\eta_A}
  \otimes \mathcal{L}_{\eta_B})(\rho)\bigr) < 0
  \;\Longrightarrow\;
  \rho \;\text{is entangled.}
\end{equation}

\begin{proof}
The local loss channel $\mathcal{L}_{\eta_A} \otimes
\mathcal{L}_{\eta_B}$ is a local operation (LOCC with trivial
classical communication) and therefore preserves
separability~\cite{peresSeparabilityCriterionDensity1996}.
By the truncation validity proposition
(Appendix~\ref{app:truncation}),
$W^{(N)} < 0$ implies that
$(\mathcal{L}_{\eta_A} \otimes \mathcal{L}_{\eta_B})(\rho)$
is entangled.  Since separability is preserved under local loss, the
original state~$\rho$ must be entangled.  The chain reads:
\begin{equation}
  W^{(N)} < 0
  \;\Longrightarrow\;
  \mathcal{L}_\eta(\rho)\;\text{entangled}
  \;\Longrightarrow\;
  \rho\;\text{entangled.}
\end{equation}
The result covers detector inefficiency, fiber loss, and spatial mode
mismatch simultaneously: whenever the witness fires, it certifies
entanglement of the original state, with no calibration of~$\eta$
required.  The negativity bound likewise transfers:
$\mathcal{N}(\rho) \geq
\mathcal{N}_{\mathrm{lb}}\!\bigl(p_2(\mathcal{L}_\eta(\rho)),\,
p_3(\mathcal{L}_\eta(\rho))\bigr)$, since~$\mathcal{N}$ is
non-increasing under LOCC.
\end{proof}

\paragraph*{Phase noise.---}
A constant offset~$\delta$ in the local-oscillator phase has exactly
zero effect on the estimators, because the U-statistic kernels depend
only on angle differences $\theta_i - \theta_j$, in which~$\delta$
cancels.  Stochastic jitter
$\xi^{(i)} \sim \mathcal{N}(0, \sigma_\theta^2)$ introduces a bias
\begin{equation}\label{eq:phase_bias_app}
  |\mathrm{Bias}(\hat{p}_2)|
  \;\leq\;
  \sigma_\theta^2\,J_2(\rho) + O(\sigma_\theta^4),
\end{equation}
where the angular inertia $J_2(\rho) = \sum_{n,m,k,l}
[(m - n)^2 + (l - k)^2]\,|(\rho^{T_B})_{(nk),(ml)}|^2$
satisfies $J_2 \leq 2(N - 1)^2 p_2$.  For typical experimental
parameters ($N = 6$, $\sigma_\theta = 20\;\mathrm{mrad}$), the bias
is $\lesssim 2 \times 10^{-3}\,p_2$, negligible compared to the
statistical uncertainty at $T \sim 10^3$--$10^4$ samples.

\paragraph*{Discrete measurement phases.---}
The continuous uniform phase distribution can be replaced by a
discrete grid of $K_{\theta}$ equally spaced settings
$\theta_j = -\pi/2 + \pi j / K_{\theta}$ ($j = 0, \ldots, K_{\theta}-1$) per mode.
Provided $K_{\theta} \geq 2N + 1$, the resulting estimators for all
$p_k$ are exactly unbiased.  This follows from the fact that the
pattern-function products involve trigonometric polynomials of degree
at most~$2N$, which are integrated exactly by $K_{\theta} \geq 2N + 1$
equally spaced nodes (the Nyquist condition for angular harmonics of order up to $2N$; discrete orthogonality of complex
exponentials).  For cutoffs $N = 5$--$8$, this requires $K_{\theta} = 11$--$17$
phase settings per mode, or $K_{\theta} \times K_{\theta} = 121$--$289$ phase
combinations for the bipartite measurement.

\paragraph*{Ideal versus realistic pattern functions.---}
The robustness analysis above uses ideal pattern functions
($\eta = 1$ in the estimator).  Optical loss transforms the
input state $\rho \to \mathcal{L}_\eta(\rho)$ before
detection; the witness evaluates $W^{(N)}$ of this lossy
state.  The efficiency-dependent ``realistic'' pattern
functions of Ref.~\cite{richterRealisticPatternFunctions2000}
would reconstruct the \emph{pre-loss} state's moments but
require $\eta > 1/2$ and are unnecessary for the certification
guarantee.

\subsection{Robustness to loss and phase jitter}
\label{sec:app_loss}
\begin{figure*}
    \centering
    \includegraphics[width=0.9\linewidth]{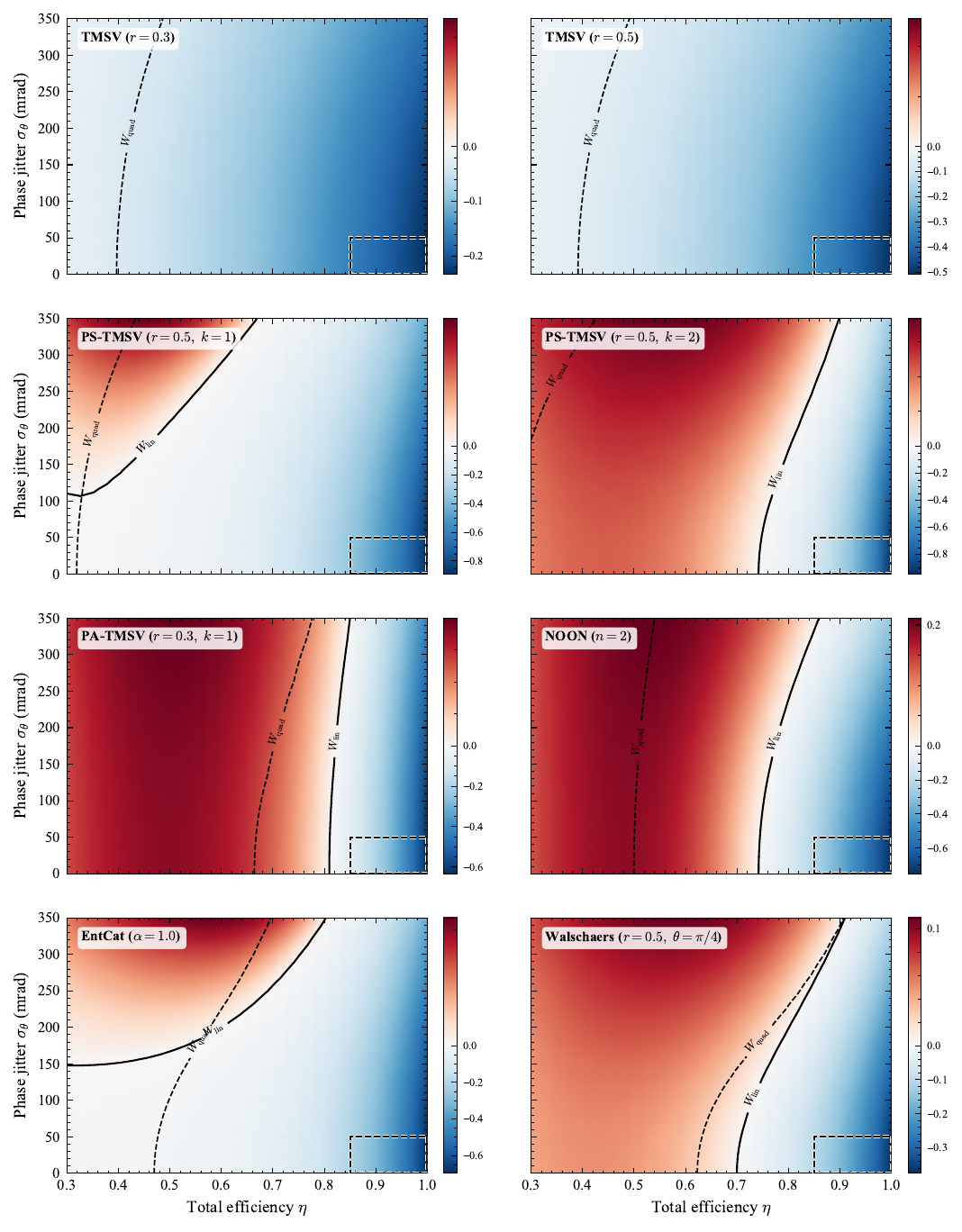}
    \caption{Witness value of the linear witness $W_\text{lin}$, for different states under detector ineffeciency and homodyne phase jitter. The solid (dashed) black line shows the boundary of entanglement detection for the linear (quadratic) witness.}
    \label{fig:app_state_survey}
\end{figure*}
\clearpage
\bibliography{apssamp}

\clearpage
\onecolumngrid

\setcounter{equation}{0}
\setcounter{figure}{0}
\setcounter{table}{0}
\setcounter{section}{0}
\setcounter{thm}{0}
\renewcommand{\theequation}{S\arabic{equation}}
\renewcommand{\thefigure}{S\arabic{figure}}
\renewcommand{\thetable}{S\arabic{table}}
\renewcommand{\thesection}{S\Roman{section}}

\begin{center}
  {\large\bfseries Supplemental Material:\\[2pt]
  Detecting non-Gaussian entanglement from single-copy homodyne measurements}
\end{center}
\medskip

This Supplemental Material provides the technical material supporting
the main text.  Section~\ref{sec:pattern_functions} summarizes the
numerical computation of the homodyne pattern functions.
Section~\ref{app:setup} collects notation: shadow snapshots,
U-statistic estimators, and the linear and quadratic witnesses.
Section~\ref{app:spectral} proves the spectral-range bound
$\operatorname{spec}(\rho^{T_B}) \subseteq [-\tfrac{1}{2}, 1]$ used by
the truncation and negativity arguments of the main text.
Sections~\ref{app:variance} and~\ref{app:gill_guta} derive the
variance bounds and the state-independent sample complexity quoted in
the main text, together with the finite-sample certification rules.
Finally, Sec.~\ref{app:trunc_neg} derives the negativity lower
bounds directly from the subnormalized truncated moments.

\section{Calculating Pattern Functions}
\label{sec:pattern_functions}
In this section, we present the numerical method that we use to calculate the pattern functions  $F_{n,m}(x_i, \theta_i)$. For the ideal, noiseless case, we implement the numerical method of Ref~\cite{Leonhardt1997-bs} and present it here for completeness. For the case where the detectors have efficiency $\eta <1$, we implement the realistic pattern functions introduced in Ref.~\cite{richterRealisticPatternFunctions2000}. Pattern functions allow for the reconstruction of elements of the density matrix in Fock basis $\rho_{n,m}$ from measurements of rotated quadrature operators $\hat{x}_\theta = \frac{1}{\sqrt{2}} \left( \hat{a}e^{-i\theta} + \hat{a}^\dagger e^{i\theta}\right)$~\cite{Leonhardt1997-bs, lvovskyContinuousvariableOpticalQuantumstate2009}:
\begin{equation}
    \rho_{m,n} = \frac{1}{2\pi}\int_{-\pi}^{+\pi} d\theta\int_{-\infty}^{+\infty}d{x_\theta}\bra{x_\theta}\rho\ket{x_\theta}F_{n,m}(x, \theta).
\end{equation}
Pattern functions $F_{n,m}(x,\theta)$ can be written as $F_{n,m}(x,\theta) = f_{n,m}(x) \exp(i(n-m)\theta)$ where $f_{n,m}(x)$ is the amplitude. The amplitudes have parity $f_{n,m}(-x) = (-1)^{n+m} f_{n,m}(x)$, so $F_{n,m}(-x, \theta+\pi) = (-1)^{n+m}\,e^{i(n-m)\pi}\,F_{n,m}(x,\theta) = F_{n,m}(x,\theta)$, while the quadrature density satisfies $\bra{-x_{\theta+\pi}}\rho\ket{-x_{\theta+\pi}} = \bra{x_\theta}\rho\ket{x_\theta}$. The integrand is thus invariant under $(x, \theta) \to (-x, \theta+\pi)$, and sampling phases uniformly from the half-interval $[-\pi/2, \pi/2)$, as in the main-text protocol, reproduces the full average.
\subsection{Ideal pattern functions}
 For completeness, we summarize the numerical recipe of Ref.~\cite{Leonhardt1997-bs}, which evaluates $F_{n,m}(x, \theta)$ through stable recurrence relations for the harmonic-oscillator wave functions.

 The amplitude $f_{n,m}(x)$ is obtained for $n \geq m$ from
\begin{eqnarray}
    f_{n,m}(x) = 2x\psi_m(x)\varphi_n(x) -\sqrt{2(m+1)} \psi_{m+1}(x) \varphi_n(x) - \sqrt{2(n+1)}\psi_m(x)\varphi_{n+1}(x).
\end{eqnarray}
 Here $\psi_m(x)$ are the regular harmonic-oscillator wave functions and $\varphi_n(x)$ the irregular (non-normalizable) solutions of the same Schr\"odinger equation; both families satisfy stable recurrence relations. For $\psi_m(x)$, Ref.~\cite{Leonhardt1997-bs} recommends using the  forward recurrence relation
\begin{equation}
    \psi_m(x) = \frac{1}{\sqrt{m}}\left[ \sqrt{2}x\psi_{m-1}(x) - \sqrt{m-1}\psi_{m-2}(x) \right],
\end{equation}
with the initial values
\begin{align}
    \psi_0(x) &= \pi^{-\frac{1}{4}} \exp\left(-\frac{x^2}{2}\right), \\
    \psi_1(x) &= \pi^{-\frac{1}{4}} \sqrt{2} x \exp\left(-\frac{x^2}{2}\right).
\end{align}
For determining $\varphi_m(x)$, we consider a semi-classical approximation. For $\lvert x\rvert < r_{4N} - (2r_{4N})^{-1/3}$, where $r_n = \sqrt{2n+1}$ is the Bohr-Sommerfeld radius, we use the backward relation
\begin{equation}
    \varphi_n(x) = \frac{1}{\sqrt{n+1}} \left[ \sqrt{2}x\varphi_{n+1}(x) - \sqrt{n+2}\varphi_{n+2}(x) \right],
\end{equation}
with initial values ($n=4N, n=4N-1$) given by
\begin{equation}
    \varphi_n(x) = \left(\frac{2\pi}{r_n \sin(t_n(x))}\right)^{1/2} \sin\left(\frac{r_n^2}{4}\left[\sin(2t_n(x)) - 2t_n(x) \right] + \frac{\pi}{4}\right),
\end{equation} where $t_n(x) = \arccos(x/r_n)$. For $\lvert x\rvert \ge r_{4N} - (2r_{4N})^{-1/3} $, we use the asymptotic relation
\begin{align}
    \varphi_n(x) &= \left(\frac{n}{2}\right)^{1/2} x^{-1}\varphi_{n-1}(x), \\
    \varphi_0(x) &= \pi^{1/4} x^{-1}\exp\left(-\frac{x^2}{2}\right).
\end{align}

The amplitudes are symmetric for $n,m$: $f_{n,m}(x) = f_{m,n}(x)$.
\begin{figure}
    \centering
    \includegraphics[width=0.6\linewidth]{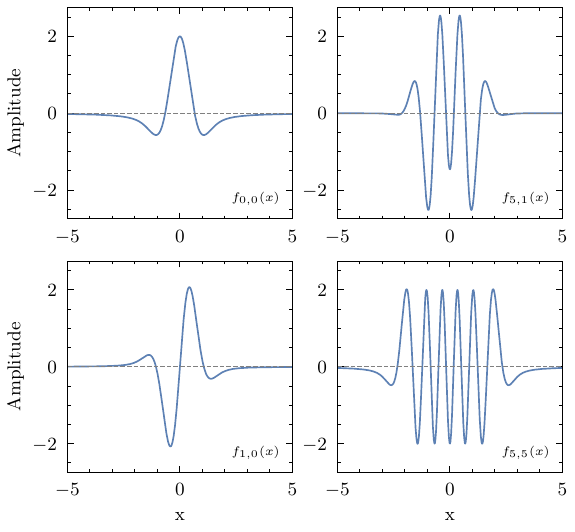}
    \caption{Amplitudes $f_{n,m}(x)$ for selected pattern functions. Units for $x$ are in natural units ($\hbar=1$).}
    \label{fig:idealpatternfunctions}
\end{figure}
\subsection{Realistic pattern functions}
If the detector efficiency is $\eta <1$, the \emph{ideal} pattern functions $F_{n,m}(x, \theta)$ are no longer unbiased. Instead, we need to use pattern functions that depend on the detector efficiency $F_{n,m}(x, \theta;\eta)$. The main result of Ref.\cite{richterRealisticPatternFunctions2000} shows how to express $F_{n,m}(x, \theta;\eta)$ as a finite sum of ideal pattern functions $F_{n,m}(x, \theta)$:
\begin{eqnarray}
    F_{n,m}(x, \theta;\eta) = \sqrt{\frac{\eta}{2\eta-1}}^{n+m+2}\sum_{k=0}^n \sqrt{\binom{n}{k}\binom{m}{k}}\left( 1 - \frac{1}{\eta} \right)^k F_{n-k,m-k}\left(\frac{x}{\sqrt{2\eta-1}}, \theta\right).
\end{eqnarray}
\begin{figure}
    \centering
    \includegraphics[width=0.6\linewidth]{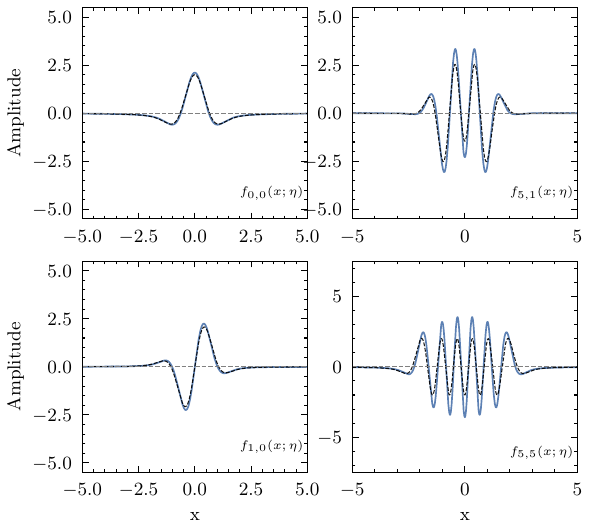}
    \caption{Amplitudes of realistic pattern functions $f_{n,m}(x; \eta)$ for selected $n,m$ and $\eta=0.95$. The dashed lines show the ideal pattern functions for the same $n,m$. Units for $x$ are in natural units ($\hbar=1$).}
    \label{fig:realisticpatternfunctions}
\end{figure}
\section{Notation and standing definitions}
\label{app:setup}

For ease of reference we collect the notation used throughout the
Supplemental Material.  All objects are defined in the main text, to
which we refer for their derivation; only the definitions needed by the
proofs below are repeated here.  We write $N$ for the largest Fock index
retained per mode (local dimension $d = N+1$),
$\Pi_N = \sum_{n=0}^N \ket{n}\!\bra{n}$ for the per-mode Fock projector,
and $\rho_N = (\Pi_N \otimes \Pi_N)\,\rho\,(\Pi_N \otimes \Pi_N)$ for the
projected, subnormalized state, with $t_N = \mathrm{Tr}[\rho_N] \leq 1$.
The estimators target the partial-transpose moments of this truncated
state, $p_k^{(N)} = \mathrm{Tr}[(\rho_N^{T_B})^k]$; where no confusion
arises we drop the superscript and write $p_k$, $\hat{p}_k$.

\emph{Shadow snapshots.}---Run~$i$ records the phases and quadratures
$X_i = (\theta_A^{(i)}, \theta_B^{(i)}, x_A^{(i)}, x_B^{(i)})$, from which
each mode is mapped to a Hermitian shadow matrix on the truncated Fock
space,
\begin{equation}\label{eq:shadow_matrix}
  \rho_M^{(i)}
  = \sum_{n,m=0}^{N}
    F_{n,m}\!\bigl(x_M^{(i)}, \theta_M^{(i)}\bigr)\,
    \ket{n}\!\bra{m},
  \qquad M \in \{A, B\},
\end{equation}
with $F_{n,m}(x,\theta) = f_{n,m}(x)\,e^{i(n-m)\theta}$ the homodyne
pattern functions of Sec.~\ref{sec:pattern_functions}.  We abbreviate
$A_i := \rho_A^{(i)}$ and $B_i := \rho_B^{(i)}$.  The snapshots are
unbiased for the truncated state,
$\mathbb{E}[\rho_A^{(i)} \otimes \rho_B^{(i)}] = \rho_N$, so the
partial-transposed bipartite snapshot
$\hat{\Sigma}_N^{T_B}(X_i) = A_i \otimes B_i^{T}$ has mean
$\rho_N^{T_B}$.

\emph{Kernels and estimators.}---The order-$2$ and order-$3$ U-statistic
kernels are the two- and three-snapshot traces
\begin{align}
  h_2(X_i, X_j)
  &= \mathrm{Tr}\!\bigl[\hat{\Sigma}_N^{T_B}(X_i)\,
     \hat{\Sigma}_N^{T_B}(X_j)\bigr],
  \label{eq:h2_def_sm}\\
  h_3(X_i, X_j, X_k)
  &= \mathrm{Tr}[A_i A_j A_k]\;
     \mathrm{Tr}[B_i^{T} B_j^{T} B_k^{T}],
  \label{eq:h3_fock_sm}
\end{align}
the factorized form of $h_3$ following from the decoupling of the forward
shift on~$A$ and the reverse shift on~$B$ established in the kernel
appendix of the main text.  With the symmetrizations $\tilde{h}_2 = h_2$
and $\tilde{h}_3 = \operatorname{Re} h_3$, the unbiased estimators from
$T$ runs are
\begin{equation}\label{eq:ustats_sm}
  \hat{p}_2^{(N)}
  = \binom{T}{2}^{-1}\!\sum_{i<j}\tilde{h}_2(X_i, X_j),
  \qquad
  \hat{p}_3^{(N)}
  = \binom{T}{3}^{-1}\!\sum_{i<j<k}\tilde{h}_3(X_i, X_j, X_k).
\end{equation}

\emph{Witnesses.}---The linear (detection) and quadratic
(negativity-bounding) witnesses are $W_{\text{lin}} = p_3 - (3p_2 - 1)/2$
and $W_{\text{quad}} = p_3 - p_2^2$, with $t_N$-corrected truncated forms
\begin{equation}\label{eq:witnesses_trunc_sm}
  W_{\text{lin}}^{(N)}
  = p_3^{(N)} - \tfrac{1}{2}\bigl(3 t_N\, p_2^{(N)} - t_N^3\bigr),
  \qquad
  W_{\text{quad}}^{(N)}
  = t_N\,p_3^{(N)} - \bigl(p_2^{(N)}\bigr)^2.
\end{equation}
The main text adopts the weaker but valid $t_N = 1$ simplification
throughout.  Being linear in the moments, $\hat{W}_{\text{lin}}$ is
exactly unbiased; the quadratic estimator carries the finite-sample bias
corrected in Sec.~\ref{app:bias_correction}.


\section{Spectral range of partial transposes}
\label{app:spectral}

The negativity bounds of the main text and the truncated-moment
bounds of Sec.~\ref{app:trunc_neg} use the fact that the spectrum of
a partial transpose is confined to $[-\tfrac{1}{2}, 1]$.  This
result is due to Rana~\cite{pt_neg_eigvals} for finite-dimensional states. Because our protocol operatoes on CV (infinite-dimensional) states and on the subnormalized projected state $\rho_N$, we give a short
self-contained proof valid in any (finite or infinite) dimension; it additionally yields the subnormalized form $\operatorname{spec}(\tau^{T_B}) \subseteq [-t/2,\, t]$ used by those truncated-moment bounds.

\begin{proposition}[Spectral range]\label{prop:opnorm}
For any bipartite state~$\rho$,
\begin{equation}\label{eq:opnorm_range}
  -\tfrac{1}{2}\,I \;\leq\; \rho^{T_B} \;\leq\; I,
\end{equation}
i.e.\ $\operatorname{spec}(\rho^{T_B}) \subseteq
[-\tfrac{1}{2},\, 1]$.  In particular
$\norm{\rho^{T_B}}_{\mathrm{op}} \leq 1$.
\end{proposition}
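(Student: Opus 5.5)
By convexity it suffices to treat pure states. The map $\rho \mapsto \rho^{T_B}$ is linear, and the set $\{X : -\tfrac12 I \leq X \leq I\}$ is convex. Any state is a convex combination (possibly countable, via its spectral decomposition, converging in trace norm) of pure states $\ket{\psi}\bra{\psi}$. So if the operator inequality holds for every pure state, it holds for $\rho$ as well. In infinite dimension I would check the inequality weakly, i.e.\ for each vector $\ket{v}$. The partial transpose of a trace-class operator is defined with respect to the Fock basis and is bounded, and $\braket{v|\rho^{T_B}|v} = \sum_i q_i \braket{v|(\psi_i\psi_i^\dagger)^{T_B}|v}$ with an absolutely convergent sum. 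Each term lies in $[-\tfrac12\norm{v}^2, \norm{v}^2]$, and so does the convex combination.

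For a pure state I would use the Schmidt decomposition $\ket{\psi} = \sum_k \sqrt{\lambda_k}\ket{a_k}\ket{b_k}$, with $\sum_k\lambda_k = 1$ and orthonormal sets $\{a_k\}$, $\{b_k\}$, allowing countably many terms. Partial transposition in a fixed basis equals, up to a local unitary (complex conjugation relative to the Schmidt basis on $B$), the partial transpose taken in the basis $\{\ket{b_k^*}\}$. Since spectra are invariant under unitary conjugation, I may assume the Schmidt bases are the computational ones. The standard computation then gives
\begin{equation*}
(\ket{\psi}\bra{\psi})^{T_B} = \sum_{k,l}\sqrt{\lambda_k\lambda_l}\,\ket{a_k b_l}\bra{a_l b_k}.
\end{equation*}
This operator is block diagonal. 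On each $\ket{a_k b_k}$ it has eigenvalue $\lambda_k$. On each two-dimensional block spanned by $\ket{a_kb_l}$ and $\ket{a_lb_k}$ with $k<l$ it acts as $\sqrt{\lambda_k\lambda_l}$ times a swap, with eigenvalues $\pm\sqrt{\lambda_k\lambda_l}$. On the orthogonal complement it is zero. The spectrum is therefore $\{\lambda_k\}\cup\{\pm\sqrt{\lambda_k\lambda_l}\}_{k<l}\cup\{0\}$, and every eigenvalue satisfies $\lambda_k \leq 1$ and $\sqrt{\lambda_k\lambda_l} \leq (\lambda_k+\lambda_l)/2 \leq \tfrac12$. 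Hence $-\tfrac12 I \leq (\ket\psi\bra\psi)^{T_B} \leq I$. Because the block decomposition is orthogonal and countable, it diagonalizes the operator even in the infinite-dimensional case.

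The main obstacle is the basis issue in the previous step. The partial transpose is taken in the fixed Fock basis, not in the Schmidt basis. I would handle it by writing $\ket{b_k} = U\ket{k}$ and using $(X\otimes Y)^{T_B} = X\otimes Y^T$. This shows $(\ket\psi\bra\psi)^{T_B} = (I\otimes \bar U)\,M\,(I\otimes \bar U)^\dagger$, where $M$ is the operator computed above in the Schmidt basis, so the two have the same spectrum. The subnormalized version $\operatorname{spec}(\tau^{T_B})\subseteq[-t/2,t]$ for $\tau \geq 0$ with $\operatorname{tr}\tau = t$ follows by applying the result to $\tau/t$ and scaling. The operator-norm statement $\norm{\rho^{T_B}}_{\mathrm{op}}\leq 1$ is then immediate from $-\tfrac12 I \leq \rho^{T_B} \leq I$.
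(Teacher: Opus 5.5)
Your proof is correct and follows the paper's argument: a Schmidt decomposition whose block structure (lines and $\sigma_x$-planes) gives eigenvalues $\lambda_k$ and $\pm\sqrt{\lambda_k\lambda_l}\le\tfrac12$ by AM--GM, then extension to mixed states by convexity of the operator interval $-\tfrac12 I\le X\le I$. Two points go beyond the paper. Your local-unitary identity $(I\otimes\bar U)M(I\otimes\bar U)^\dagger$ justifies the paper's unremarked choice of taking the transpose in the Schmidt basis, and your weak (quadratic-form) treatment of the countable convex sum makes the infinite-dimensional step somewhat more careful than the paper's closing remark.
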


\begin{proof}
\emph{Step 1: pure states.}
Let $\ket{\psi} = \sum_i \sqrt{\lambda_i}\,\ket{a_i}\ket{b_i}$ be a
Schmidt decomposition, with $\lambda_i \geq 0$,
$\sum_i \lambda_i = 1$.  Taking the transpose in the Schmidt basis
$\{\ket{b_i}\}$ of mode~$B$,
\begin{equation}
  \bigl(\ket{\psi}\!\bra{\psi}\bigr)^{T_B}
  = \sum_{i,j} \sqrt{\lambda_i \lambda_j}\,
    \ket{a_i}\!\bra{a_j} \otimes \ket{b_j}\!\bra{b_i}.
\end{equation}
This operator is block-diagonal with respect to the decomposition
into the lines $\mathrm{span}\{\ket{a_i}\ket{b_i}\}$ and the planes
$\mathrm{span}\{\ket{a_i}\ket{b_j}, \ket{a_j}\ket{b_i}\}$ ($i < j$),
and annihilates the orthogonal complement.  On the lines it acts as
$\lambda_i$; on the planes it acts as
$\sqrt{\lambda_i\lambda_j}\,\sigma_x$, with eigenvalues
$\pm\sqrt{\lambda_i\lambda_j}$.  Hence
\begin{equation}
  \operatorname{spec}\bigl[(\ket{\psi}\!\bra{\psi})^{T_B}\bigr]
  \subseteq \{0\} \cup \{\lambda_i\}_i \cup
  \bigl\{\pm\sqrt{\lambda_i\lambda_j}\bigr\}_{i<j}.
\end{equation}
By the arithmetic--geometric mean inequality,
$\sqrt{\lambda_i\lambda_j} \leq (\lambda_i + \lambda_j)/2 \leq
\tfrac{1}{2}$ for $i \neq j$, while $\lambda_i \leq 1$.  This
establishes~\eqref{eq:opnorm_range} for pure states.

\emph{Step 2: mixed states.}
Write $\rho = \sum_k q_k \ket{\psi_k}\!\bra{\psi_k}$ with
$q_k \geq 0$, $\sum_k q_k = 1$.  Partial transposition is linear,
so $\rho^{T_B} = \sum_k q_k (\ket{\psi_k}\!\bra{\psi_k})^{T_B}$.
Each term satisfies the operator
inequalities~\eqref{eq:opnorm_range} by Step~1, and operator
inequalities are preserved under convex combinations.  Hence
$-\tfrac{1}{2} I \leq \rho^{T_B} \leq I$.

In infinite dimension, $\rho$ is trace class, the Schmidt sum in
Step~1 is countable, and $\rho^{T_B}$ is a Hilbert--Schmidt (hence
bounded, self-adjoint) operator, since partial transposition
preserves the Hilbert--Schmidt norm. The quadratic-form
inequalities above then hold on all of the Hilbert space, so the
conclusion is unchanged.
\end{proof}

Two consequences used elsewhere:
(i)~the negativity-bound proof in the main text uses
$\operatorname{spec}(\rho^{T_B}) \subseteq [-\tfrac{1}{2}, 1]$
(every negative eigenvalue satisfies $\beta_k \leq \tfrac{1}{2}$);
(ii)~the truncation-error bounds in the main text use
$\norm{\rho^{T_B}}_{\mathrm{op}} \leq 1$.  The subnormalized
version $\mathrm{Tr}[\tau] = t \Rightarrow\operatorname{spec}(\tau^{T_B}) \subseteq [-t/2,\, t]$ for $\tau \geq 0$follows by applying Proposition~\ref{prop:opnorm} to $\tau/t$ and is stated as a lemma in Sec.~\ref{app:spectral_range}.

\section{Hoeffding variance decomposition and sample complexity bounds}
\label{app:variance}

We derive the variance of the U-statistic estimators $\hat{p}_2$
and~$\hat{p}_3$ via the Hoeffding decomposition.  Together with the
pointwise Hilbert--Schmidt bound on the single-shot snapshots
(Sec.~\ref{app:gill_guta}), this fixes both the state-independent
sample complexity quoted in the main text and the sharper,
$N$-independent rate for states with geometrically decaying Fock
support.  The bias correction of the quadratic witness and the
jackknife standard errors used in the figures follow at the end.

Throughout, $N$ denotes the highest retained Fock number (local
dimension $d = N+1$), all index sums run from $0$ to~$N$, and the
estimators target the truncated state.  To keep the notation light we
write $\rho$ for $\rho_N$ and $p_k$ for $p_k^{(N)}$, dropping the
truncation superscript (cf.\ Sec.~\ref{app:setup}).

\subsection{Hoeffding decomposition}

The variance of an order-$k$ U-statistic with symmetric
kernel~$\tilde{h}_k$ admits the exact
decomposition~\cite{Serfling1980,Lee1990}
\begin{equation}\label{eq:hoeffding_general}
  \mathrm{Var}(\hat{p}_k)
  = \sum_{c=1}^{k} \binom{k}{c}^2
    \frac{\sigma_{k,c}^2}{\binom{T}{c}},
\end{equation}
where the Hoeffding components are
\begin{equation}
  \sigma_{k,c}^2
  = \mathrm{Var}\!\bigl(
    \mathbb{E}[\tilde{h}_k(X_1,\ldots,X_k)
    \mid X_1,\ldots,X_c]
    \bigr)
\end{equation}
(the variance of the conditional expectation given $c$~of
the $k$~arguments, averaged over the remaining $k - c$).

For $\hat{p}_2$ (order $k = 2$):
\begin{equation}\label{eq:hoeffding_p2}
  \mathrm{Var}(\hat{p}_2)
  = \frac{4(T\!-\!2)}{T(T\!-\!1)}\,\sigma_{2,1}^2
  + \frac{2}{T(T\!-\!1)}\,\sigma_{2,2}^2
  = \frac{4\sigma_{2,1}^2}{T} + O(T^{-2}).
\end{equation}

For $\hat{p}_3$ (order $k = 3$):
\begin{equation}\label{eq:hoeffding_p3}
  \mathrm{Var}(\hat{p}_3)
  = \sum_{c=1}^{3} \binom{3}{c}^2
    \frac{\sigma_{3,c}^2}{\binom{T}{c}}
  = \frac{9\sigma_{3,1}^2}{T} + O(T^{-2}).
\end{equation}

The sample complexity for estimating $p_k$ to additive accuracy
$\varepsilon$ at confidence $1 - \delta$ is therefore determined
(via Chebyshev's inequality) by the leading component
$\sigma_{k,1}^2$.  We bound this quantity in the following
subsections.

\subsection{First Hoeffding projections}

The leading variance components are determined by the
\emph{first Hoeffding projections}:
\begin{align}
  G_2(X) &= \mathbb{E}_{X'}[h_2(X, X') \mid X] - p_2
  = \mathrm{Tr}\!\bigl[\hat{\Sigma}_N^{T_B}(X)\,
    \rho^{T_B}\bigr] - p_2,
  \label{eq:G2} \\
  G_3(X) &= \mathbb{E}_{X',X''}[h_3(X, X', X'') \mid X] - p_3
  = \mathrm{Tr}\!\bigl[\hat{\Sigma}_N^{T_B}(X)\,
    (\rho^{T_B})^2\bigr] - p_3,
  \label{eq:G3}
\end{align}
where $\hat{\Sigma}_N^{T_B}(X) = \rho_A^{(X)} \otimes
(\rho_B^{(X)})^T$ is the partial-transposed single-shot shadow.

\emph{Proof of~\eqref{eq:G2} and~\eqref{eq:G3}.}
The order-two kernel is the Hilbert--Schmidt overlap of two snapshots,
\begin{equation}
  h_2(X,X')
  = \mathrm{Tr}\bigl[\hat{\Sigma}_N^{T_B}(X)\,
    \hat{\Sigma}_N^{T_B}(X')\bigr] .
\end{equation}
Fix the first argument at $X$ and average over the second.  The
snapshots are i.i.d.\ and unbiased,
$\mathbb{E}[\hat{\Sigma}_N^{T_B}] = \rho^{T_B}$, so linearity of the
trace lets us replace the second snapshot by its mean,
\begin{equation}
  \mathbb{E}_{X'}\!\left[h_2 \mid X\right]
  = \mathrm{Tr}\Bigl[\hat{\Sigma}_N^{T_B}(X)\,
    \mathbb{E}\bigl[\hat{\Sigma}_N^{T_B}(X')\bigr]\Bigr]
  = \mathrm{Tr}\bigl[\hat{\Sigma}_N^{T_B}(X)\,\rho^{T_B}\bigr] .
\end{equation}
Subtracting $p_2 = \mathrm{Tr}[(\rho^{T_B})^2]$ gives~\eqref{eq:G2}.

The order-three case works the same way.  Conditioning the kernel
$h_3 = \mathrm{Tr}[\hat{\Sigma}_N^{T_B}(X)\,\hat{\Sigma}_N^{T_B}(X')\,
\hat{\Sigma}_N^{T_B}(X'')]$ on $X$ and averaging over the two remaining
snapshots replaces each of them by $\rho^{T_B}$, so the two factors
collapse into a single power,
\begin{equation}
  \mathbb{E}_{X',X''}\!\left[h_3 \mid X\right]
  = \mathrm{Tr}\bigl[\hat{\Sigma}_N^{T_B}(X)\,(\rho^{T_B})^2\bigr] .
\end{equation}
Subtracting $p_3 = \mathrm{Tr}[(\rho^{T_B})^3]$
gives~\eqref{eq:G3}. \hfill$\square$

Equations~\eqref{eq:G2}--\eqref{eq:G3} show that each first
projection is a \emph{linear} functional of a single snapshot: $G_k(X)$
depends on the random shadow $\hat{\Sigma}_N^{T_B}(X)$ only through its
Hilbert--Schmidt overlap with the fixed operator $(\rho^{T_B})^{k-1}$.
It is convenient to use this linearity by vectorizing.  Stacking the
random snapshot into the vector
$\mathbf{z}(X) = \mathrm{vec}(\hat{\Sigma}_N^{T_B}(X))$ and the fixed,
state-dependent operator into the coefficient vector
$\mathbf{r}_k = \mathrm{vec}\bigl((\rho^{T_B})^{k-1}\bigr)$, the
trace--overlap identity
$\mathrm{Tr}[A^\dagger B] = \mathrm{vec}(A)^\dagger\,\mathrm{vec}(B)$
turns the trace in \eqref{eq:G2}--\eqref{eq:G3} into an ordinary
Euclidean inner product,
\begin{equation}\label{eq:Gk_linear}
  G_k(X) = \mathbf{r}_k^\dagger\,\mathbf{z}(X) - p_k .
\end{equation}
The only random object is now $\mathbf{z}(X)$, and $G_k$ is linear in
it.  Its variance is therefore a single quadratic form in the snapshot
second-moment matrix $\mathbf{C} = \mathbb{E}[\mathbf{z}\mathbf{z}^\dagger]$,
\begin{equation}\label{eq:sigma_bilinear}
  \sigma_{k,1}^2
  = \mathrm{Var}(G_k)
  = \mathbf{r}_k^\dagger \mathbf{C}\, \mathbf{r}_k
    - p_k^2,
  \qquad
  \|\mathbf{r}_k\|^2 = p_{2(k-1)} .
\end{equation}
This is the form used throughout the rest of the section.  It separates
the variance into two pieces that we bound by different means: the
\emph{state-independent} measurement fluctuations, carried entirely by
$\mathbf{C}$ and bounded entry by entry in Sec.~\ref{app:si_bound}, and
the \emph{state-dependent} weights $\mathbf{r}_k$, controlled through
the Fock-space decay of $\rho^{T_B}$ in Sec.~\ref{app:gd_bounds}.

\subsection{Entry-wise control of the snapshot second moment}
\label{app:si_bound}

The state-independent variance bound quoted in the main text is obtained
most directly by the pointwise Hilbert--Schmidt argument of
Sec.~\ref{app:gill_guta}.  What that argument does not provide, and what
the state-\emph{dependent} geometric-decay analysis of
Sec.~\ref{app:gd_bounds} below requires, is a bound on the snapshot
second-moment matrix
$\mathbf{C} = \mathbb{E}[\mathbf{z}\mathbf{z}^\dagger]$ \emph{entry by
entry}: geometric decay weights each Fock index of $\rho^{T_B}$
separately, so the trace of~$\mathbf{C}$ alone is too coarse.  We record
the entry-wise bound here.

\begin{lemma}[Entry-wise bound on $\mathbf{C}$]\label{lem:entrywise}
For any bipartite state~$\rho$ (possibly entangled) and
any Fock cutoff~$N$, the entries of the second-moment matrix
satisfy
\begin{equation}\label{eq:C_entry_bound}
  |C_{\alpha\beta}|
  \;\leq\;
  (\Gamma_{nm}^A)^{1/4}\,(\Gamma_{lk}^B)^{1/4}\,
  (\Gamma_{n'm'}^A)^{1/4}\,(\Gamma_{l'k'}^B)^{1/4},
\end{equation}
where $\alpha = ((n,k),(m,l))$, $\beta = ((n',k'),(m',l'))$
are composite Fock indices, and
$\Gamma_{nm}^M = \int f_{nm}(x)^4\,\mu_M(x)\,dx$
is the fourth moment of the pattern function
under the angle-averaged marginal quadrature distribution
$\mu_M(x) = \pi^{-1}\int_0^\pi p_M(x|\theta)\,d\theta$
of mode~$M$.
\end{lemma}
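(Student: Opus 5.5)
The plan is to write each entry of $\mathbf{C}$ as an expectation of a product of four pattern-function amplitudes and then apply Hölder's inequality with exponents $(4,4,4,4)$. By definition $\mathbf{z}(X) = \mathrm{vec}(A_X \otimes B_X^T)$, so for the composite index $\alpha = ((n,k),(m,l))$ the component $z_\alpha$ is a product of one $A$-entry and one $B$-entry. Up to the index convention of the partial transpose, this is $F_{nm}(x_A,\theta_A)\,F_{lk}(x_B,\theta_B)$. Hence
\begin{equation*}
  C_{\alpha\beta} = \mathbb{E}\bigl[F_{nm}(x_A,\theta_A)F_{lk}(x_B,\theta_B)\,\overline{F_{n'm'}(x_A,\theta_A)}\,\overline{F_{l'k'}(x_B,\theta_B)}\bigr].
\end{equation*}
Using $F_{nm}(x,\theta) = f_{nm}(x)e^{i(n-m)\theta}$ with real $f_{nm}$, every phase factor has unit modulus. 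The first step is therefore the trivial bound $|C_{\alpha\beta}| \le \mathbb{E}\bigl[|f_{nm}(x_A)|\,|f_{lk}(x_B)|\,|f_{n'm'}(x_A)|\,|f_{l'k'}(x_B)|\bigr]$. Here the expectation is over the joint law of $(\theta_A,\theta_B,x_A,x_B)$, that is, uniform independent phases together with the homodyne distribution of Eq.~\eqref{eq:homod_distr}.

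Next I apply the generalized Hölder inequality $\mathbb{E}|Y_1Y_2Y_3Y_4| \le \prod_j (\mathbb{E}|Y_j|^4)^{1/4}$ to this joint expectation, with no independence assumption needed. This is the point where entanglement between $A$ and $B$ becomes irrelevant: each factor now depends on one mode only. For example, $\mathbb{E}[f_{nm}(x_A)^4]$ is an expectation over the joint law that only involves $x_A$. So it equals the integral of $f_{nm}^4$ against the marginal law of $x_A$. By the definition of the bipartite homodyne distribution, after integrating out $x_B$ for fixed $\theta_B$ this marginal is $p_A(x_A\mid\theta_A)$, because the reduced state is independent of the measurement on $B$ (no-signalling, or simply $\int dx_B\,|x_{B,\theta_B}\rangle\langle x_{B,\theta_B}| = I$). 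Averaging over uniform $\theta_A$ then gives exactly $\mu_A(x)$. The result is $\mathbb{E}[f_{nm}(x_A)^4] = \Gamma^A_{nm}$, and likewise for the other three factors.

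Combining the two steps gives \eqref{eq:C_entry_bound}. The main point needing care is the bookkeeping of which Fock labels sit on which mode and in which order after the partial transpose, namely whether the $B$ factor is $f_{lk}$ or $f_{kl}$. This is harmless, because the amplitudes are symmetric, $f_{n,m}=f_{m,n}$ (Sec.~\ref{sec:pattern_functions}), so $\Gamma^B_{lk}=\Gamma^B_{kl}$. A second, minor point is finiteness: $\Gamma$ may be infinite a priori, but the pattern functions are bounded, so $\Gamma^M_{nm}\le\sup|f_{nm}|^4<\infty$ and the bound is meaningful. I would also note that the half-interval phase sampling reproduces the same $\mu_M$ by the $(x,\theta)\to(-x,\theta+\pi)$ symmetry discussed in Sec.~\ref{sec:pattern_functions}.
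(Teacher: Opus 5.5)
Your argument is correct and is essentially the paper's. The paper applies Cauchy--Schwarz twice: first $|C_{\alpha\beta}|^2\le C_{\alpha\alpha}C_{\beta\beta}$, then once more on each diagonal entry $\mathbb{E}[f_{nm}(x_A)^2f_{lk}(x_B)^2]$ to separate the modes. That is exactly the four-fold H\"older inequality you use in one step, and the paper then makes the same identification $\mathbb{E}[f_{nm}(x_A)^4]=\Gamma^A_{nm}$ via the single-mode marginal. Your extra remarks are accurate details the paper leaves implicit: the $A$-marginal does not depend on $\theta_B$, $\Gamma$ is finite, and $f_{lk}=f_{kl}$. One small imprecision: sampling phases on $[-\pi/2,\pi/2)$ does not reproduce $\mu_M$ itself in general, but it does reproduce $\Gamma^M_{nm}$, because $f_{nm}^4$ is even in $x$.
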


\begin{proof}
Written out in the single-mode pattern functions, the entry is an
expectation over the joint outcome $(X_A, X_B)$,
\begin{equation}
  C_{\alpha\beta}
  = \mathbb{E}\bigl[
    F_{nm}^A(X_A)\,F_{lk}^B(X_B)\,
    \overline{F_{n'm'}^A(X_A)}\,\overline{F_{l'k'}^B(X_B)}
    \bigr] .
\end{equation}
Cauchy--Schwarz on this expectation bounds the off-diagonal entry by
the two diagonal ones,
\begin{equation}\label{eq:CS_offdiag}
  |C_{\alpha\beta}|^2 \;\leq\; C_{\alpha\alpha}\,C_{\beta\beta} .
\end{equation}
A diagonal entry contains only squared moduli, so the measurement
phases cancel and it reduces to an average of pattern functions under
the marginal quadrature distribution,
\begin{equation}
  C_{\alpha\alpha}
  = \mathbb{E}\bigl[|F_{nm}^A|^2\,|F_{lk}^B|^2\bigr]
  = \mathbb{E}\bigl[f_{nm}^A(x_A)^2\,f_{lk}^B(x_B)^2\bigr] .
\end{equation}
Applying Cauchy--Schwarz once more, now to factor the two modes apart,
\begin{equation}
  C_{\alpha\alpha}
  \;\leq\; \sqrt{\mathbb{E}\bigl[f_{nm}^A(x_A)^4\bigr]}\,
           \sqrt{\mathbb{E}\bigl[f_{lk}^B(x_B)^4\bigr]}
  = \sqrt{\Gamma_{nm}^A}\,\sqrt{\Gamma_{lk}^B} ,
\end{equation}
where the last step is the definition of the fourth moment: a single
power of the fourth-power pattern function depends only on the marginal
distribution, $\mathbb{E}[f_{nm}^A(x_A)^4]
= \int f_{nm}^A(x)^4\,\mu_A(x)\,dx = \Gamma_{nm}^A$.  The identical
bound holds for $C_{\beta\beta}$ in terms of $\Gamma_{n'm'}^A$ and
$\Gamma_{l'k'}^B$.  Substituting both into~\eqref{eq:CS_offdiag} and
taking the square root yields~\eqref{eq:C_entry_bound}.
\end{proof}

The fourth moments inherit the Gill--Guta pattern-function supremum
estimate of Sec.~\ref{app:gill_guta} (Lemma~\ref{lem:gill_guta}).  Since
each $\mu_M$ is a probability measure,
$(\Gamma_{nm})^{1/4} \leq \|f_{nm}\|_\infty$, and because
$\|f_{nm}\|_\infty^2$ is a single non-negative term of the Gill--Guta
sum, each amplitude obeys the per-element bound
\begin{equation}\label{eq:fnm_sup_bound}
  \|f_{nm}\|_\infty
  \;\leq\; \sqrt{C_{\mathrm{pf}}}\,(\max(n,m)+1)^{7/6}.
\end{equation}
The entry-wise bound~\eqref{eq:C_entry_bound} together
with~\eqref{eq:fnm_sup_bound} is all that the geometric-decay analysis
below requires.

\subsection{State-dependent bounds under geometric decay}
\label{app:gd_bounds}

For states satisfying the geometric Fock-decay condition
\begin{equation}\label{eq:GD}
  |[\rho^{T_B}]_{(nk),(ml)}|
  \;\leq\; C_\rho\,r^{(n+k+m+l)/2},
  \quad r < 1,
  \tag{GD}
\end{equation}
(which holds for TMSV with $C_\rho = \mathrm{sech}^2\!r_s$
and $r = \tanh r_s$, and for photon-subtracted states with
the same~$r$ and modified~$C_\rho$), the leading variance
components are bounded independently of~$N$.

\begin{theorem}[$N$-independent bounds under geometric decay]
\label{thm:gd_bounds}
Under~\eqref{eq:GD}:
\begin{align}
  \sigma_{2,1}^2 &\leq C_\rho^2\,\tilde{S}^{\,4},
  \label{eq:bilinear_p2}\\
  \sigma_{3,1}^2 &\leq
    \frac{C_\rho^4}{(1\!-\!r)^4}\,\tilde{S}^{\,4},
  \label{eq:bilinear_p3}
\end{align}
where the convergent sum
\begin{equation}
  \tilde{S}
  = \sum_{n,m=0}^{\infty}
    r^{(n+m)/2}\,(\Gamma_{nm})^{1/4}
  < \infty
\end{equation}
involves the pattern function fourth moments
$\Gamma_{nm} = \int |f_{nm}(x)|^4\,\mu(x)\,dx$.
\end{theorem}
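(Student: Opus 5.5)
Starting from the bilinear representation \eqref{eq:sigma_bilinear}, I will drop the subtracted $p_k^2$ and bound $\sigma_{k,1}^2 \le \mathbf{r}_k^\dagger \mathbf{C}\,\mathbf{r}_k \le \sum_{\alpha,\beta} |(\mathbf{r}_k)_\alpha|\,|C_{\alpha\beta}|\,|(\mathbf{r}_k)_\beta|$. I then insert the entry-wise estimate of Lemma~\ref{lem:entrywise}. That estimate factorizes over $\alpha$ and $\beta$, so the double sum becomes a perfect square:
\begin{equation*}
  \mathbf{r}_k^\dagger \mathbf{C}\,\mathbf{r}_k
  \le \Bigl(\sum_{\alpha} |(\mathbf{r}_k)_\alpha|\,
  (\Gamma^A_{nm})^{1/4}(\Gamma^B_{lk})^{1/4}\Bigr)^2,
  \qquad \alpha = ((n,k),(m,l)).
\end{equation*}
It then remains to bound the single sum $\Sigma_k := \sum_\alpha |(\mathbf{r}_k)_\alpha|\,(\Gamma^A_{nm}\Gamma^B_{lk})^{1/4}$ using the geometric decay (GD). Throughout, I identify $\Gamma^A$ and $\Gamma^B$ with the single $\Gamma$ appearing in $\tilde S$, or bound each separately by the same sum.

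For $k=2$ we have $\mathbf{r}_2 = \mathrm{vec}(\rho^{T_B})$, and (GD) gives $|(\mathbf{r}_2)_\alpha| \le C_\rho\, r^{(n+m)/2} r^{(k+l)/2}$. The weight splits into an $A$-part in $(n,m)$ and a $B$-part in $(l,k)$, so
\begin{equation*}
  \Sigma_2 \le C_\rho \Bigl(\sum_{n,m} r^{(n+m)/2}(\Gamma^A_{nm})^{1/4}\Bigr)\Bigl(\sum_{l,k} r^{(l+k)/2}(\Gamma^B_{lk})^{1/4}\Bigr) = C_\rho\,\tilde S^{2}.
\end{equation*}
Squaring gives \eqref{eq:bilinear_p2}. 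Extending the sums from $0$ to $N$ up to $\infty$ only increases them, which is where $N$-independence comes from. Convergence of $\tilde S$ follows from $(\Gamma_{nm})^{1/4} \le \sqrt{C_{\rm pf}}(\max(n,m)+1)^{7/6}$ by \eqref{eq:fnm_sup_bound}: polynomial growth against geometric decay $r^{(n+m)/2}$ is summable.

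For $k=3$ we have $\mathbf{r}_3 = \mathrm{vec}((\rho^{T_B})^2)$. The key intermediate step is to show that the square of a (GD) matrix is again (GD), with a worse constant. Writing $(\rho^{T_B})^2_{(nk),(ml)} = \sum_{(pq)} [\rho^{T_B}]_{(nk),(pq)}[\rho^{T_B}]_{(pq),(ml)}$ and applying (GD) to both factors gives the bound
\begin{equation*}
  C_\rho^2\, r^{(n+k+m+l)/2}\sum_{p,q} r^{p+q} = \frac{C_\rho^2}{(1-r)^2}\, r^{(n+k+m+l)/2}.
\end{equation*}
So $(\rho^{T_B})^2$ satisfies (GD) with constant $C_\rho^2/(1-r)^2$. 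Repeating the $k=2$ computation gives $\Sigma_3 \le C_\rho^2 (1-r)^{-2}\tilde S^2$, and squaring yields \eqref{eq:bilinear_p3}. In the truncated setting the intermediate sum runs only to $N$, which again only helps.

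The main obstacle is that the entry-wise lemma and the factorization need care with indices. The vectorization of $A_i\otimes B_i^T$ attaches pattern-function indices $(n,m)$ to mode $A$ and $(l,k)$ to mode $B$ in transposed order. I will check that the product $r^{(n+m)/2}r^{(k+l)/2}$ pairs exactly with $\Gamma^A_{nm}\Gamma^B_{lk}$, which it does since (GD) is symmetric in all four indices. The second subtle point is that the two modes carry different marginals $\mu_A$ and $\mu_B$. If $\tilde S$ is meant with a single $\mu$, I would define $\tilde S$ as the maximum of the two mode-wise sums, or bound both by the $\mu$-free supremum version; either way the stated form follows.
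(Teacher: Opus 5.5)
Your proof is correct and follows the paper's argument. You bound $\sigma_{k,1}^2$ by $\mathbf{r}_k^\dagger\mathbf{C}\mathbf{r}_k$, insert the entry-wise bound so the sum factorizes into the mode-wise sums $\tilde S_A\tilde S_B$, and for $k=3$ show that $(\rho^{T_B})^2$ obeys (GD) with constant $C_\rho^2/(1-r)^2$; your perfect-square framing and explicit handling of the distinct marginals $\mu_A,\mu_B$ are only slightly more careful than the paper's write-up.
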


Since $\mu$ is a probability measure,
$(\Gamma_{nm})^{1/4}
= \bigl(\int |f_{nm}|^4\,\mu\,dx\bigr)^{1/4}
\leq \|f_{nm}\|_\infty$,
so $\tilde{S} \leq S_r
:= \sum_{n,m\geq 0} r^{(n+m)/2}\,\|f_{nm}\|_\infty$.
The Gill--Guta per-element
bound~\eqref{eq:fnm_sup_bound} gives
$\|f_{nm}\|_\infty
= O\!\bigl((\max(n,m)+1)^{7/6}\bigr)$,
so the geometric factor $r^{(n+m)/2}$ dominates the
polynomial growth and $S_r < \infty$.

\begin{proof}
Start from the quadratic form~\eqref{eq:sigma_bilinear} and write it in
components, with $R_\alpha = [\rho^{T_B}]_{(nk),(ml)}$ the entries of
$\mathbf{r}_2$.  Dropping the $-p_2^2$ term and applying the triangle
inequality,
\begin{equation}
  \sigma_{2,1}^2
  \;\leq\; \mathbb{E}[|G_2|^2]
  = \sum_{\alpha,\beta} \bar{R}_\alpha\,C_{\alpha\beta}\,R_\beta
  \;\leq\; \sum_{\alpha,\beta}
       |R_\alpha|\,|C_{\alpha\beta}|\,|R_\beta| .
\end{equation}
Insert the entry-wise bound~\eqref{eq:C_entry_bound} for
$|C_{\alpha\beta}|$ and the geometric decay~\eqref{eq:GD} for the
weights,
\begin{equation}
  |R_\alpha|
  = \bigl|[\rho^{T_B}]_{(nk),(ml)}\bigr|
  \;\leq\; C_\rho\,r^{(n+m)/2}\,r^{(k+l)/2} .
\end{equation}
Both bounds are products over the individual Fock indices, so the
eight-index sum separates into four identical two-index sums, one for
each mode on each side of $\mathbf{C}$,
\begin{equation}
  \sigma_{2,1}^2
  \;\leq\; C_\rho^2\,\bigl(\tilde{S}_A^{(N)}\bigr)^2
       \bigl(\tilde{S}_B^{(N)}\bigr)^2
  \;\leq\; C_\rho^2\,\tilde{S}^4 ,
  \qquad
  \tilde{S}_M^{(N)}
  = \sum_{n,m=0}^{N} r^{(n+m)/2}\,(\Gamma_{nm}^M)^{1/4}
  \;\leq\; \tilde{S} ,
\end{equation}
which is~\eqref{eq:bilinear_p2}.

The bound on $\sigma_{3,1}^2$ follows the same steps with
$\mathbf{r}_3 = \mathrm{vec}((\rho^{T_B})^2)$ in place of
$\mathbf{r}_2$.  The one new element is that squaring $\rho^{T_B}$ sums
over an internal Fock index; under~\eqref{eq:GD} that sum is geometric
and contributes a factor $(1-r)^{-2}$,
\begin{equation}
  \bigl|[(\rho^{T_B})^2]_{(ab),(cd)}\bigr|
  \;\leq\; \frac{C_\rho^2}{(1-r)^2}\,r^{(a+b+c+d)/2} .
\end{equation}
Carrying this through the same factorization replaces $C_\rho^2$ by
$C_\rho^4/(1-r)^4$ and gives~\eqref{eq:bilinear_p3}.
\end{proof}

Under~\eqref{eq:GD}, the sample complexity becomes
\begin{equation}
  T = O\!\left(\frac{C(\rho)}{\varepsilon^2}\right),
\end{equation}
independent of the Fock cutoff~$N$.

\subsection{Bias correction of the quadratic witness}
\label{app:bias_correction}

The linear witness estimator
$\hat{W}_{\text{lin}} = \hat{p}_3 - (3\hat{p}_2 - 1)/2$ is a linear
combination of unbiased U-statistics and is therefore exactly
unbiased.  The quadratic witness, in
contrast, involves the square of $\hat{p}_2$ and acquires a
finite-sample bias, which we quantify and correct here.

\subsubsection{Exact bias identity}

The quadratic witness estimator
$\hat{W}_{\text{quad}} = \hat{p}_3 - \hat{p}_2^2$ has bias
\begin{equation}\label{eq:bias_identity}
  \mathbb{E}[\hat{W}_{\text{quad}}] - W_{\text{quad}} = -\mathrm{Var}(\hat{p}_2)
  = -\frac{4\sigma_{2,1}^2}{T} + O(T^{-2}).
\end{equation}
This negative bias shifts the estimator toward more
negative values, which must be accounted for in any
finite-sample decision rule.  We describe two correction
modes: a practical jackknife-based correction used in the
main protocol, and a conservative deterministic upper
bound available as an alternative.

\subsubsection{Mode~1: practical correction (jackknife)}
\label{app:bias_practical}

We estimate $\mathrm{Var}(\hat{p}_2)$ from the first Hoeffding
projection.  Since $\hat{p}_2$ is an order-2 U-statistic, its leading
variance term is $\mathrm{Var}(\hat{p}_2) = 4\sigma_{2,1}^2/T +
O(T^{-2})$, with $\sigma_{2,1}^2 = \mathrm{Var}(G_2(X))$.

The jackknife estimates this variance directly from the data, with no
analytic input about the state~\cite{Arvesen1969}.  The idea is to ask how sensitive the estimate is to each individual snapshot: one
recomputes $\hat{p}_2$ with each snapshot deleted in turn, and a
snapshot whose removal barely moves $\hat{p}_2$ is counted as carrying
little information about the estimator's variability, while one that
moves it a lot counts for more.  For a U-statistic this leave-one-out
sensitivity is, to leading order, exactly the centered first Hoeffding
projection $h_1^{(2)}(X_i) - \hat{p}_2$ at the deleted snapshot, so the
variance is estimated by the empirical spread of these projections,
\begin{equation}\label{eq:B_JK}
  \hat{B}_{\mathrm{JK}}
  = \widehat{\mathrm{Var}}(\hat{p}_2)
  = \frac{4\,\hat{\sigma}_{2,1}^2}{T},
  \qquad
  \hat{\sigma}_{2,1}^2
  = \frac{1}{T-1}\sum_{i=1}^{T}
    \bigl(h_1^{(2)}(X_i) - \hat{p}_2\bigr)^2,
\end{equation}
where $h_1^{(2)}(X_i)$ is the first Hoeffding projection of the
$\hat{p}_2$ kernel evaluated at snapshot~$X_i$
(Sec.~\ref{app:variance}).  By the law of large numbers this empirical
spread converges to the true variance,
$\hat{B}_{\mathrm{JK}} \to \mathrm{Var}(\hat{p}_2)$ in probability as
$T \to \infty$.

The practical corrected witness is
\begin{equation}\label{eq:W_practical}
  \hat{W}'_{\text{quad}} = \hat{p}_3 - \hat{p}_2^2
  + \hat{B}_{\mathrm{JK}}.
\end{equation}
This correction compensates the leading finite-sample bias,
and the resulting one-sided test~(the asymptotic
studentized rule of the main text)
is asymptotically valid: as $T \to \infty$, the bias
correction is exact and the normal approximation is
justified by the U-statistic CLT \cite{Lee1990}.  Finite-sample
calibration is empirical, validated by the Monte Carlo
experiments in the demonstration section of the
main text.

\subsubsection{Mode~2: conservative correction
  (deterministic upper bound)}
\label{app:bias_conservative}

If a rigorous finite-sample guarantee is desired, the
jackknife estimate can be replaced by a deterministic upper
bound.  Using the state-independent variance
bound~\eqref{eq:var_p2_GG} of Sec.~\ref{app:gill_guta}:
\begin{equation}\label{eq:bias_bound}
  \mathcal{B}
  := \frac{16\,C_{\mathrm{pf}}^2\,(N+1)^{14/3}\,\hat{p}_2}{T}
  \;\geq\; \mathrm{Var}(\hat{p}_2)\quad\text{(in expectation)},
\end{equation}
where $\hat{p}_2$ replaces $p_2$ (since $p_2$ is unknown)
and $C_{\mathrm{pf}}$ is the Gill--Guta pattern-function constant
(Sec.~\ref{app:gill_guta}).
The bound $\mathcal{B}$ is itself random through its dependence on
$\hat{p}_2$, and the inequality holds in expectation because
$\mathbb{E}[\hat{p}_2] = p_2$.
For a fully deterministic bound, one may use $p_2 \leq 1$
in place of $\hat{p}_2$.

The conservative corrected witness
$\hat{W}'_{\mathrm{rig}} = \hat{p}_3 - \hat{p}_2^2
+ \mathcal{B}$ satisfies
$\mathbb{E}[\hat{W}'_{\mathrm{rig}}] \geq W_{\text{quad}}$ by
construction (positive bias in expectation).  Combined
with the Chebyshev-based rigorous certification rule for the
quadratic witness (Sec.~\ref{app:quad_rule}), this yields a
non-asymptotic finite-sample certification guarantee.
The conservative correction is typically looser than the
jackknife-based correction: $\mathcal{B}$ overestimates
$\mathrm{Var}(\hat{p}_2)$ by a constant factor that
grows with~$N$, reducing detection significance for a
given sample size.

Under the geometric decay condition~\eqref{eq:GD}, the
state-dependent bound
$\mathcal{B}_{\mathrm{GD}}
= 4\,C_\rho^2\,\tilde{S}^4 / T$ may be used instead,
giving a tighter conservative correction that is
independent of~$N$.

\subsection{Jackknife standard errors}
\label{app:jackknife_se}

The shaded uncertainty bands in the figures of the main text are
jackknife (influence-function) standard errors~\cite{Arvesen1969}.
They use the same leave-one-out variance estimate as the bias
correction of Sec.~\ref{app:bias_practical}, now applied to each
estimator in turn.  For the order-$k$ U-statistic $\hat{p}_k$, the
empirical first Hoeffding projection $\hat{G}_k(X_i)$ is evaluated for
every snapshot via the accumulator identities of the main-text kernel
appendix, and
\begin{equation}\label{eq:jackknife_se}
  \mathrm{SE}(\hat{p}_k)
  = \frac{k\,\widehat{\mathrm{std}}(\hat{G}_k)}{\sqrt{T}},
  \qquad k \in \{2, 3\},
\end{equation}
where $\widehat{\mathrm{std}}$ denotes the empirical standard
deviation over the $T$ snapshots.  For the witnesses, the plug-in
influence functions
\begin{equation}\label{eq:jackknife_se_witness}
  \hat\psi_{\text{lin}}(X_i) = 3\hat{G}_3(X_i) - 3\hat{G}_2(X_i),
  \qquad
  \hat\psi_{\text{quad}}(X_i) = 3\hat{G}_3(X_i)
  - 4\hat{p}_2\,\hat{G}_2(X_i),
\end{equation}
give
$\hat\sigma_{W} = \widehat{\mathrm{std}}(\hat\psi)/\sqrt{T}$,
which is the standard error entering the asymptotic studentized rule
of the main text.  These estimators are consistent by the U-statistic
law of large numbers; their finite-$T$ accuracy is validated against
Monte Carlo standard errors in the demonstration section of the main
text.


\section{Sample complexity via pattern-function supremum
  estimates}
\label{app:gill_guta}
Here we derive the sample complexity of the moment estimators and the linear witness. Estimating the moments and witnesses to additive accuracy~$\varepsilon$
at confidence $1-\delta$ costs
$T = O\!\bigl((N+1)^{14/3}/(\delta\varepsilon^2)\bigr)$ samples, with
the explicit constants $K_2 = 32$ and $K_3 = 108$ quoted in the main
text.

For the derivation, we deriva a worst-case bound on the
size of one homodyne snapshot, measured by its Hilbert--Schmidt norm.
This bound holds \emph{pointwise}, meaning uniformly over every
measurement outcome, and follows from the pattern-function supremum
estimate of Gill and Guta~\cite{Gill2003InvitationQT}.

Our derivation has four steps:
\begin{enumerate}
\item bound the Hilbert--Schmidt norm of a single-mode snapshot
  (Sec.~\ref{app:gg_onemode});
\item tensorize to the bipartite snapshot that the protocol actually
  measures (Sec.~\ref{app:gg_bipartite});
\item feed this pointwise bound into the exact Hoeffding variance
  formulas of Sec.~\ref{app:variance} to obtain state-independent
  variances for the moment estimators and both witnesses
  (Sec.~\ref{app:gg_variance});
\item apply Chebyshev's inequality to convert the variances into the
  sample-complexity bound (Secs.~\ref{app:gg_complexity}
  and~\ref{app:nonasymptotic_constants}).
\end{enumerate}

Throughout, $d = N + 1$ denotes the local Hilbert-space dimension at
Fock cutoff~$N$.  All bounds in this section are state-independent and
hold for any bipartite state with finite mean photon number
$\bar{n} < \infty$.

\subsection{One-mode snapshot norm bound}
\label{app:gg_onemode}

Fix a single mode~$M$ with Fock cutoff~$N$.  A homodyne measurement
with outcome~$x$ at phase~$\theta$ produces the shadow
matrix~$\rho_M(x,\theta) = \sum_{n,m=0}^N F_{nm}(x,\theta)\,
\ket{n}\!\bra{m}$~\eqref{eq:shadow_matrix}, whose
Hilbert--Schmidt norm squared is
\begin{equation}\label{eq:Delta_HS}
  \bigl\|\rho_M(x,\theta)\bigr\|_2^2
  = \sum_{n,m=0}^{N} |F_{nm}(x,\theta)|^2
  = \sum_{n,m=0}^{N} f_{nm}(x)^2 .
\end{equation}
The phase enters $F_{nm}(x,\theta)$ only through a unit-modulus factor,
so $|F_{nm}(x,\theta)| = |f_{nm}(x)|$ and the norm depends on the
outcome~$x$ alone.  It remains to bound the right-hand side
of~\eqref{eq:Delta_HS} uniformly in~$x$, which is exactly what the
pattern-function supremum estimate provides.

\begin{lemma}[Gill--Guta supremum bound~\cite{Gill2003InvitationQT}]
\label{lem:gill_guta}
The pattern-function amplitudes satisfy
\begin{equation}\label{eq:GG_bound}
  \sum_{0 \leq m \leq n \leq N}
  \|f_{nm}\|_\infty^2
  \;\leq\;
  C_{\mathrm{pf}}\,(N+1)^{7/3},
\end{equation}
where $C_{\mathrm{pf}} > 0$ is a universal constant
independent of~$N$.
\end{lemma}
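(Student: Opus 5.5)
The plan is to reduce the statement to a per-row estimate and then sum over rows. Concretely, I would aim to show that for every fixed $n$,
\begin{equation}
  \sum_{m=0}^{n} \|f_{nm}\|_\infty^2 \;\leq\; C\,(n+1)^{4/3},
\end{equation}
with $C$ independent of $n$. Summing over $0\le n\le N$ then gives $\sum_{n\le N}(n+1)^{4/3}\le (N+1)^{7/3}$ up to a constant, which is \eqref{eq:GG_bound}. Since the lemma is attributed to Gill and Guta~\cite{Gill2003InvitationQT}, the proof can cite them for the hard analytic input. I would nevertheless organise it so that the only external ingredient is a uniform bound on Laguerre functions.

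\emph{Step 1 (Fourier representation of the amplitudes).} Write the pattern function as the inverse Radon transform of the Fock matrix element of the Weyl operator:
\begin{equation}
  f_{nm}(x) \;=\; \frac{1}{2}\int_{-\infty}^{\infty} |r|\, e^{irx}\,
  \ell_{nm}(r)\,dr .
\end{equation}
Here $\ell_{nm}(r)$ is, up to a phase, the normalised Laguerre function
\begin{equation}
  \ell_{nm}(r)=\sqrt{m!/n!}\;(r^2/2)^{(n-m)/2}\,e^{-r^2/4}\,L_m^{(n-m)}(r^2/2).
\end{equation}
Bounding $|e^{irx}|\le 1$ removes the $x$-dependence, so
\begin{equation}
  \|f_{nm}\|_\infty\le\int_0^\infty r\,|\ell_{nm}(r)|\,dr .
\end{equation}
This is the reason the bound is uniform in $x$.

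\emph{Step 2 (uniform Laguerre estimates).} I would split the $r$-integral at the turning point $r_t\simeq \sqrt{2(n+m+1)}$ of $\ell_{nm}$.
\begin{itemize}
\item Beyond $r_t$ the function decays like a Gaussian, so the tail contributes $O(1)$.
\item In the oscillatory region, a Krasikov-type uniform bound $|\ell_{nm}(r)|\lesssim (\text{local WKB amplitude})$ holds. Its square roughly integrates to the classical density over the annulus.
\item In the Airy window of width $\sim (n+1)^{-1/6}$ around $r_t$, the bound is $|\ell_{nm}|\lesssim (n+1)^{-1/12}$.
\end{itemize}
Rather than bounding each $\|f_{nm}\|_\infty$ crudely, which would give $O(n)$ per entry and a total of $N^4$, I would exploit orthogonality across $m$. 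Apply Cauchy--Schwarz to the row sum $\sum_m \bigl(\int r|\ell_{nm}|\,dr\bigr)^2$ and use $\sum_m |\ell_{nm}(r)|^2\le 1$, which holds because the $\ell_{nm}$ are matrix elements of a unitary displacement operator in a fixed column. This reduces the row sum to an integral over a region of effective length $\sim\sqrt n$, weighted by $r$, plus the Airy-window contribution.

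\emph{Step 3 (main obstacle).} The hardest step is getting exactly the exponent $4/3$ in the row estimate, i.e.\ showing that the turning-point (Airy) regime is what produces the extra $n^{1/3}$. The naive orthogonality estimate in Step~2 gives a row sum of order $n^{3/2}$ or $n^{2}$, which is too large. I would first try a weighted Cauchy--Schwarz, taking the weight to be the WKB envelope $(r_t^2-r^2)^{-1/4}$ in the bulk and the Airy scale near $r_t$. I would then check that the bulk part contributes $O(n)$ per row and the Airy window contributes $O(n^{4/3})$. If the weighted argument cannot be closed, I would fall back on citing Gill--Guta's pointwise estimate directly and only verify that summing it yields the stated $(N+1)^{7/3}$, with $C_{\mathrm{pf}}$ collecting all the universal constants.
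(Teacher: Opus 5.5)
Your Step~1 is where the argument fails, and nothing in Steps~2--3 can repair it. Bounding $|e^{irx}|\le 1$ throws away exactly the oscillatory cancellation the lemma depends on, and the resulting majorant is far too large. Orthogonality of displacement operators gives $\int_0^\infty r\,|\ell_{nm}(r)|^2\,dr=1$. As a function of $s=r^2/2$, $\ell_{nm}$ spreads this unit $L^2$ mass under an arcsine-type WKB envelope over the window $(\sqrt{n}-\sqrt{m})^2\lesssim s\lesssim(\sqrt{n}+\sqrt{m})^2$, which has length $\approx 4\sqrt{nm}$. Hence $\int_0^\infty r\,|\ell_{nm}(r)|\,dr\asymp[(n+1)(m+1)]^{1/4}$. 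For $m=n$ this is the Askey--Wainger-type growth $\int_0^\infty|e^{-s/2}L_n(s)|\,ds\asymp n^{1/2}$, against $\|f_{nn}\|_\infty=O(n^{1/6})$.

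Consequently $\sum_{m\le n}\bigl(\int_0^\infty r\,|\ell_{nm}|\,dr\bigr)^2\asymp(n+1)^2$, and summing over $n$ gives order $(N+1)^3$, not $(N+1)^{7/3}$. Steps~2--3 only bound this quantity from above. So no splitting at the turning point, no weighted Cauchy--Schwarz, and no use of $\sum_m|\ell_{nm}(r)|^2\le 1$ can bring the row sum down to $(n+1)^{4/3}$. The bulk of the $r$-integral alone already contributes $\asymp n^2$ per row, not the $O(n)$ you hope for in Step~3. Separately, the Airy height $(n+1)^{-1/12}$ you quote is the Hermite-function scaling in position space. The Laguerre function $\ell_{nn}$ near its outer turning point is of size $n^{-1/3}$.

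The missing idea is an estimate that keeps the phase. The paper does not reprove the lemma. It cites Lemma~3.1 of Ref.~\cite{Gill2003InvitationQT} and notes that the proof rests on Airy asymptotics of $\psi_n$ and the irregular solutions $\varphi_n$ near the turning point. This is a position-space argument, natural in view of the product formula $f_{nm}=\partial_x(\psi_m\varphi_n)$ for $n\ge m$, which underlies the recurrence in Supplemental Sec.~SI.

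Heuristically, the exponent comes from pairing two facts: the Airy-scale height $\|\psi_m\|_\infty\asymp(m+1)^{-1/12}$, and $|\varphi_n'|\lesssim(n+1)^{1/4}$ on the region where $\psi_m$ lives. Together they give $\|f_{nm}\|_\infty^2\lesssim(n+1)^{1/2}(m+1)^{-1/6}$, and hence a row sum of order $(n+1)^{4/3}$. With only a uniform $O(1)$ bound on $\psi_m$ you would get $(N+1)^{5/2}$ instead.

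So your per-row target is the right one, and your fallback (cite the Gill--Guta estimate and sum over rows) is exactly what the paper does. The Fourier-$L^1$ route of Steps~1--3, however, cannot prove the lemma.
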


This is Lemma~3.1 of Ref.~\cite{Gill2003InvitationQT}.  Its proof
controls the pattern functions through Airy-function asymptotics in the
transition region of the Hermite functions $\psi_n$ and their irregular
partners $\phi_n$, near the classical turning point
$|x| \approx \sqrt{2n+1}$ where the amplitudes are largest.  The
exponent~$7/3$ reflects the cubic Airy scaling of the supremum: the
diagonal amplitudes grow only slowly, $\|f_{nn}\|_\infty = O(n^{1/6})$,
and the double sum is dominated by a narrow band of near-diagonal terms
with $|n - m| \lesssim n^{1/3}$.

\begin{corollary}[One-mode HS norm]\label{cor:onemode}
For any measurement outcome~$(x,\theta)$,
\begin{equation}\label{eq:onemode_bound}
  \bigl\|\rho_M(x,\theta)\bigr\|_2^2
  \;\leq\;
  2\,C_{\mathrm{pf}}\,(N+1)^{7/3}.
\end{equation}
\end{corollary}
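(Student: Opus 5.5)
The bound follows directly from the Hilbert--Schmidt identity~\eqref{eq:Delta_HS} combined with Lemma~\ref{lem:gill_guta}. Since the phase enters only through a unit-modulus factor, we have $\|\rho_M(x,\theta)\|_2^2 = \sum_{n,m=0}^N f_{nm}(x)^2$ for every outcome, so it suffices to bound this sum uniformly in $x$. We replace each term pointwise by its supremum, $f_{nm}(x)^2 \le \|f_{nm}\|_\infty^2$, which gives a bound independent of both $x$ and $\theta$.

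Next we reorganize the full square sum over $0\le n,m\le N$ into the triangular sum that appears in the Gill--Guta lemma. The amplitudes are symmetric, $f_{nm}=f_{mn}$, as recorded in Sec.~\ref{sec:pattern_functions}. The full sum therefore equals twice the strictly off-diagonal part $m<n$ plus the diagonal $m=n$. This is at most twice the triangular sum $\sum_{0\le m\le n\le N}\|f_{nm}\|_\infty^2$, because the diagonal terms are counted once on the left and twice on the right, and all terms are non-negative.

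Applying Lemma~\ref{lem:gill_guta} to the triangular sum then yields
\[
\|\rho_M(x,\theta)\|_2^2 \le 2\sum_{0\le m\le n\le N}\|f_{nm}\|_\infty^2 \le 2C_{\mathrm{pf}}(N+1)^{7/3},
\]
which is the claim.

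There is no real obstacle here; all the analytic content sits in the cited Gill--Guta estimate. The only points to check are that the symmetry $f_{nm}=f_{mn}$ holds for the ideal amplitudes, which follows from Hermiticity $F_{mn}=F_{nm}^*$ together with reality of $f_{nm}$, and that the supremum is taken over all $x\in\mathbb{R}$. Both hold for the ideal pattern functions used in the estimator.
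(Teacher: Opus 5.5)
Your proposal is correct and follows the paper's proof essentially step for step. You bound $f_{nm}(x)^2\le\|f_{nm}\|_\infty^2$ pointwise in the Hilbert--Schmidt identity, then use the symmetry $f_{nm}=f_{mn}$ to dominate the full square sum by twice the triangular Gill--Guta sum; the paper writes this step as $2\sum_{m\le n}\|f_{nm}\|_\infty^2-\sum_{n}\|f_{nn}\|_\infty^2$, which is the same observation.
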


\begin{proof}
Extending the restricted sum in~\eqref{eq:GG_bound} to all
pairs $(n,m) \in \{0,\ldots,N\}^2$ using the symmetry
$f_{nm} = f_{mn}$:
\begin{equation}
  \sum_{n,m=0}^{N} \|f_{nm}\|_\infty^2
  = 2 \!\!\sum_{0 \leq m \leq n \leq N}\!\!
  \|f_{nm}\|_\infty^2
  - \sum_{n=0}^{N}\|f_{nn}\|_\infty^2
  \leq 2\,C_{\mathrm{pf}}\,(N+1)^{7/3}.
\end{equation}
Since $f_{nm}(x)^2 \leq \|f_{nm}\|_\infty^2$ pointwise,
the bound~\eqref{eq:onemode_bound} follows
from~\eqref{eq:Delta_HS}.
\end{proof}

\subsection{Bipartite tensorization}
\label{app:gg_bipartite}

The one-mode bound carries over to the bipartite case at once, because
the snapshot the protocol actually measures is a tensor product of two
single-mode shadows.  The partial-transposed bipartite shadow from a
single measurement round is
$\hat{\Sigma}_N^{T_B}(X)
= \rho_A(x_A,\theta_A) \otimes (\rho_B(x_B,\theta_B))^T$.
As in Sec.~\ref{app:variance}, we write
$\mathbf{z}(X) = \mathrm{vec}(\hat{\Sigma}_N^{T_B}(X))$ for the
vectorized snapshot.

\begin{proposition}[Bipartite snapshot norm]\label{prop:bipartite}
For any measurement outcome $X = (x_A, \theta_A, x_B, \theta_B)$,
\begin{equation}\label{eq:bipartite_HS}
  \bigl\|\hat{\Sigma}_N^{T_B}(X)\bigr\|_2^2
  = \bigl\|\rho_A(X_A)\bigr\|_2^2
    \cdot \bigl\|\rho_B(X_B)\bigr\|_2^2
  \;\leq\;
  4\,C_{\mathrm{pf}}^2\,(N+1)^{14/3}.
\end{equation}
\end{proposition}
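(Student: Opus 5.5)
The proof has two parts: a norm identity for the tensor product, and an application of Corollary~\ref{cor:onemode} to each factor.

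\emph{Step 1: the identity.} The partial transpose $(\cdot)^T$ on mode $B$ only permutes matrix entries, so
\begin{equation}
  \|\rho_B^T\|_2^2=\sum_{n,m}|[\rho_B]_{mn}|^2=\|\rho_B\|_2^2 .
\end{equation}
Hence
\begin{equation}
  \|\hat{\Sigma}_N^{T_B}(X)\|_2=\|\rho_A(X_A)\otimes \rho_B(X_B)^T\|_2 .
\end{equation}
The Hilbert--Schmidt norm is multiplicative on tensor products. To see this, write
\begin{equation}
  \|P\otimes Q\|_2^2=\mathrm{Tr}[(P\otimes Q)^\dagger(P\otimes Q)]=\mathrm{Tr}[P^\dagger P\otimes Q^\dagger Q]=\|P\|_2^2\,\|Q\|_2^2 .
\end{equation}
Equivalently, the entries of $P\otimes Q$ are all products $P_{ab}Q_{cd}$, so the sum of their squared moduli factorizes. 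Taking $P=\rho_A(X_A)$ and $Q=\rho_B(X_B)^T$ gives the equality in~\eqref{eq:bipartite_HS}.

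\emph{Step 2: the bound.} Corollary~\ref{cor:onemode} holds pointwise, that is, for every outcome $(x_M,\theta_M)$ on mode $M$. It does not depend on the state or on the joint distribution of $(x_A,x_B)$, so correlations between the modes do not matter. Applying it to each factor and multiplying gives
\begin{equation}
  \|\hat{\Sigma}_N^{T_B}(X)\|_2^2\le \bigl(2C_{\mathrm{pf}}(N+1)^{7/3}\bigr)^2=4C_{\mathrm{pf}}^2(N+1)^{14/3} .
\end{equation}

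There is no real obstacle here. The only point to check is that the bound in Corollary~\ref{cor:onemode} is uniform over outcomes and not an expectation. This holds because $|F_{nm}|=|f_{nm}|$ and each $f_{nm}^2$ is bounded by its supremum. Consequently, the product bound holds for every $X$, and it remains valid when $\rho$ is entangled.
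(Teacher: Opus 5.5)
Your proposal is correct and follows the paper's proof essentially verbatim. You factorize the Hilbert--Schmidt norm over the tensor product, use invariance of $\|\cdot\|_2$ under transposition, and apply Corollary~\ref{cor:onemode} to each mode so that the exponents add to $7/3+7/3=14/3$. Your explicit remark that the corollary is an outcome-wise bound rather than an averaged one, and hence unaffected by inter-mode correlations, is a sound clarification that the paper leaves implicit.
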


\begin{proof}
The Hilbert--Schmidt norm of a tensor product factorizes:
$\|A \otimes B\|_2 = \|A\|_2 \cdot \|B\|_2$.
Since transposition preserves the HS norm,
$\|(\rho_B)^T\|_2 = \|\rho_B\|_2$.  Applying
Corollary~\ref{cor:onemode} to each mode yields the bound,
with the exponents adding: $7/3 + 7/3 = 14/3$.
\end{proof}

Proposition~\ref{prop:bipartite} bounds each snapshot separately.  The
variance formulas, however, depend on the snapshots only through their
\emph{average} squared norm, and the pointwise bound controls that
average just as well.  We name this average and record its bound.

\begin{definition}[Second-moment norm parameter]
\label{def:mu_N}
Define
\begin{equation}\label{eq:mu_def}
  \mu_N(\rho)
  \;:=\;
  \mathbb{E}_X\!\left[
  \bigl\|\hat{\Sigma}_N^{T_B}(X)\bigr\|_2^2
  \right]
  = \mathrm{Tr}[\mathbf{C}],
\end{equation}
where $\mathbf{C} = \mathbb{E}[\mathbf{z}\mathbf{z}^\dagger]$
is the second-moment matrix of the vectorized shadow
(cf.\ Eq.~\eqref{eq:sigma_bilinear}).
\end{definition}

A pointwise bound is automatically a bound on the average: since
$\|\hat{\Sigma}_N^{T_B}(X)\|_2^2
\leq 4\,C_{\mathrm{pf}}^2\,(N+1)^{14/3}$ for every outcome by
Proposition~\ref{prop:bipartite}, taking the expectation gives
\begin{equation}\label{eq:mu_bound}
  \mu_N(\rho)
  \;\leq\;
  4\,C_{\mathrm{pf}}^2\,(N+1)^{14/3}
\end{equation}
for all states~$\rho$ and all cutoffs~$N$.  This is the only property
of the snapshots used in the variance bounds below.

\subsection{Variance bounds for
  \texorpdfstring{$\hat{p}_2$}{p2} and
  \texorpdfstring{$\hat{p}_3$}{p3}}
\label{app:gg_variance}

We now insert the pointwise norm bound into the exact U-statistic
variance formulas from the Hoeffding decomposition
(Sec.~\ref{app:variance}).  We first note that the order-$k$ kernels
themselves obey pointwise bounds inherited directly from the snapshot
Hilbert--Schmidt norm.

\begin{lemma}[Kernel bounds]\label{lem:kernel_bounds}
For any measurement outcomes $X_1, \ldots, X_k$:
\begin{align}
  |h_2(X_1, X_2)|
  &\leq \bigl\|\hat{\Sigma}_N^{T_B}(X_1)\bigr\|_2
    \cdot \bigl\|\hat{\Sigma}_N^{T_B}(X_2)\bigr\|_2
  \leq 4\,C_{\mathrm{pf}}^2\,(N\!+\!1)^{14/3},
  \label{eq:h2_sup}\\[4pt]
  |h_3(X_1, X_2, X_3)|
  &\leq \prod_{i=1}^{3}
    \bigl\|\hat{\Sigma}_N^{T_B}(X_i)\bigr\|_2
  \leq 8\,C_{\mathrm{pf}}^3\,(N\!+\!1)^{7}.
  \label{eq:h3_sup}
\end{align}
\end{lemma}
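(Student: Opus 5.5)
The plan is to express each kernel as a Hilbert--Schmidt trace of products of the partial-transposed bipartite snapshots $\hat{\Sigma}_N^{T_B}(X_i)=A_i\otimes B_i^T$. Then a Cauchy--Schwarz or Hölder inequality on Schatten norms, followed by Proposition~\ref{prop:bipartite}, bounds every factor.

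For $h_2$ I will use the form \eqref{eq:h2_def_sm}, $h_2(X_1,X_2)=\mathrm{Tr}[\hat{\Sigma}_N^{T_B}(X_1)\hat{\Sigma}_N^{T_B}(X_2)]$. Each snapshot is Hermitian, since $A_i$ and $B_i^T$ are Hermitian because $F_{m,n}=F_{n,m}^*$. Hence this trace is the Hilbert--Schmidt inner product $\langle \hat{\Sigma}(X_1),\hat{\Sigma}(X_2)\rangle$, and Cauchy--Schwarz gives $|h_2|\le\|\hat{\Sigma}(X_1)\|_2\|\hat{\Sigma}(X_2)\|_2$. Proposition~\ref{prop:bipartite} bounds each norm by $2C_{\mathrm{pf}}(N+1)^{7/3}$. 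The product is therefore at most $4C_{\mathrm{pf}}^2(N+1)^{14/3}$, which is \eqref{eq:h2_sup}.

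For $h_3$ I will first rewrite the factorized kernel \eqref{eq:h3_fock_sm} as a single trace over the bipartite space. Since the trace of a tensor product factorizes and $(A_i\otimes B_i^T)(A_j\otimes B_j^T)=A_iA_j\otimes B_i^TB_j^T$, we get
\begin{equation}
h_3=\mathrm{Tr}[A_iA_jA_k]\,\mathrm{Tr}[B_i^TB_j^TB_k^T]=\mathrm{Tr}\bigl[\hat{\Sigma}(X_i)\hat{\Sigma}(X_j)\hat{\Sigma}(X_k)\bigr].
\end{equation}
I then apply $|\mathrm{Tr}[XYZ]|\le\|XY\|_2\|Z\|_2\le\|X\|_{\mathrm{op}}\|Y\|_2\|Z\|_2$ and use $\|X\|_{\mathrm{op}}\le\|X\|_2$. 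This yields $|h_3|\le\prod_i\|\hat{\Sigma}(X_i)\|_2$. Inserting Proposition~\ref{prop:bipartite} three times gives $(2C_{\mathrm{pf}})^3(N+1)^{7}=8C_{\mathrm{pf}}^3(N+1)^7$. The real part $\tilde h_3=\mathrm{Re}\,h_3$ obeys the same bound.

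The only step needing care is the trace rewriting of $h_3$: one must check that the reversed cyclic order on $B$ becomes the forward order of the transposes. This follows from $\mathrm{Tr}[B_kB_jB_i]=\mathrm{Tr}[(B_iB_jB_k)^{\dagger}]^{*}$, or directly from $\mathrm{Tr}[M^T]=\mathrm{Tr}[M]$ applied to $M=B_kB_jB_i$, whose transpose is $B_i^TB_j^TB_k^T$; Appendix~\ref{app:kernels} already establishes this. Everything else is standard Schatten-norm inequalities.
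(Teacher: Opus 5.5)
Your proof is correct and follows the paper's argument: Cauchy--Schwarz on the Hilbert--Schmidt inner product for $h_2$, a Schatten-norm (H\"older) estimate for $h_3$, and Proposition~\ref{prop:bipartite} for each factor. The only difference is cosmetic: you apply $|\mathrm{Tr}[XYZ]|\le\|X\|_{\mathrm{op}}\|Y\|_2\|Z\|_2$ once to the bipartite product $\hat{\Sigma}_i\hat{\Sigma}_j\hat{\Sigma}_k$, whereas the paper bounds $\mathrm{Tr}[A_1A_2A_3]$ and $\mathrm{Tr}[B_1^TB_2^TB_3^T]$ separately and recombines via $\|A\otimes B^T\|_2=\|A\|_2\|B\|_2$.

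One small slip: your first justification, $\mathrm{Tr}[B_kB_jB_i]=\mathrm{Tr}[(B_iB_jB_k)^{\dagger}]^{*}$, carries a spurious conjugation, since the right-hand side equals $\mathrm{Tr}[B_iB_jB_k]$. Your transpose argument is correct, and the step is not needed anyway, because \eqref{eq:h3_fock_sm} already defines $h_3$ with the transposes.
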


\begin{proof}
For~$h_2$: since
$h_2(X_1,X_2)
= \mathrm{Tr}[\hat{\Sigma}_N^{T_B}(X_1)\,
  \hat{\Sigma}_N^{T_B}(X_2)]$,
the Cauchy--Schwarz inequality for the Hilbert--Schmidt
inner product gives
$|h_2| \leq \|\hat{\Sigma}_1\|_2\,\|\hat{\Sigma}_2\|_2$.
Each factor is bounded by
$\sqrt{4\,C_{\mathrm{pf}}^2\,(N+1)^{14/3}}
= 2\,C_{\mathrm{pf}}\,(N+1)^{7/3}$.

For~$h_3$: from the factorized
form~\eqref{eq:h3_fock_sm},
$|h_3| = |\mathrm{Tr}[A_1 A_2 A_3]|\,
|\mathrm{Tr}[B_1^T B_2^T B_3^T]|$.
By submultiplicativity of the trace,
$|\mathrm{Tr}[A_1 A_2 A_3]|
\leq \|A_1\|_2\,\|A_2\|_2\,\|A_3\|_2$
(H\"older with exponents $(2,4,4)$ or repeated
Cauchy--Schwarz), and similarly for the $B$-trace.
Hence $|h_3| \leq \prod_i \|A_i\|_2\,\|B_i\|_2
= \prod_i \|\hat{\Sigma}_i\|_2$.
\end{proof}

We now turn to the variances themselves, beginning with the second
moment.  From the Hoeffding decomposition~\eqref{eq:hoeffding_p2},
the leading variance component satisfies
\begin{equation}\label{eq:sigma21_GG}
  \sigma_{2,1}^2
  = \mathrm{Var}(G_2(X))
  \leq \mathbb{E}[G_2(X)^2]
  = \mathbb{E}\!\left[
    \bigl(\mathrm{Tr}[\hat{\Sigma}_N^{T_B}(X)\,\rho^{T_B}]
    \bigr)^2\right] - p_2^2.
\end{equation}
By Cauchy--Schwarz in the trace:
$|\mathrm{Tr}[\hat{\Sigma}\,\rho^{T_B}]|^2
\leq \|\hat{\Sigma}\|_2^2\,\|\rho^{T_B}\|_2^2
= \|\hat{\Sigma}\|_2^2\,p_2$.
Therefore
\begin{equation}\label{eq:sigma21_bound_GG}
  \sigma_{2,1}^2
  \;\leq\;
  \mu_N(\rho)\,p_2
  \;\leq\;
  4\,C_{\mathrm{pf}}^2\,(N+1)^{14/3}\,p_2.
\end{equation}
Substituting into~\eqref{eq:hoeffding_p2}:
\begin{equation}\label{eq:var_p2_GG}
  \mathrm{Var}(\hat{p}_2)
  \;\leq\;
  \frac{16\,C_{\mathrm{pf}}^2\,(N+1)^{14/3}\,p_2}{T}
  + O(T^{-2}).
\end{equation}
The factor $p_2 = \mathrm{Tr}[(\rho^{T_B})^2]$ keeps the bound
state-dependent, but never exceeds unity, so dropping it leaves the
unconditional bound
$\mathrm{Var}(\hat{p}_2)
\leq 16\,C_{\mathrm{pf}}^2\,(N+1)^{14/3}/T$, valid for every state.

The same Cauchy--Schwarz step applied to the order-3 first Hoeffding
projection~\eqref{eq:G3} gives
\begin{equation}
  |G_3(X)|
  = \bigl|\mathrm{Tr}[\hat{\Sigma}_N^{T_B}(X)\,
  (\rho^{T_B})^2]\bigr|
  \leq \|\hat{\Sigma}(X)\|_2\,\|(\rho^{T_B})^2\|_2
  = \|\hat{\Sigma}(X)\|_2\,\sqrt{p_4},
\end{equation}
where $p_4 = \mathrm{Tr}[(\rho^{T_B})^4] \leq p_2^2 \leq 1$.
Hence
\begin{equation}\label{eq:sigma31_bound_GG}
  \sigma_{3,1}^2
  \;\leq\;
  \mu_N(\rho)\,p_4
  \;\leq\;
  4\,C_{\mathrm{pf}}^2\,(N+1)^{14/3},
\end{equation}
and from~\eqref{eq:hoeffding_p3}:
\begin{equation}\label{eq:var_p3_GG}
  \mathrm{Var}(\hat{p}_3)
  \;\leq\;
  \frac{36\,C_{\mathrm{pf}}^2\,(N+1)^{14/3}}{T}
  + O(T^{-2}).
\end{equation}

The two witnesses inherit these bounds.  Each is a smooth function of
$\hat{p}_2$ and $\hat{p}_3$, so to leading order in $1/T$ its variance
is fixed by the influence-function (delta-method) expansion, with the
influence functions recorded in Sec.~\ref{app:jackknife_se}.  Take the
quadratic witness first.  The estimator
$\hat{W}_{\text{quad}} = \hat{p}_3 - \hat{p}_2^2$ has asymptotic
variance
\begin{equation}\label{eq:Veff_def}
  \mathrm{Var}(\hat{W}_{\text{quad}})
  = \frac{V_{\mathrm{eff}}}{T} + O(T^{-2}),
  \qquad
  V_{\mathrm{eff}}
  = 9\,\sigma_{3,1}^2 + 16\,p_2^2\,\sigma_{2,1}^2
  - 24\,p_2\,\sigma_{23,1},
\end{equation}
where $\sigma_{23,1} = \mathrm{Cov}(G_2,G_3)$.
Since $|\sigma_{23,1}|
\leq \sqrt{\sigma_{2,1}^2\,\sigma_{3,1}^2}$,
the triangle inequality gives
\begin{equation}
  V_{\mathrm{eff}}
  \;\leq\;
  9\,\sigma_{3,1}^2 + 16\,p_2^2\,\sigma_{2,1}^2
  + 24\,p_2\,\sqrt{\sigma_{2,1}^2\,\sigma_{3,1}^2}.
\end{equation}
Using $\sigma_{2,1}^2 \leq \mu_N p_2$ and
$\sigma_{3,1}^2 \leq \mu_N$ (dropping the $p_4 \leq 1$
factor), and $p_2 \leq 1$:
\begin{equation}
  V_{\mathrm{eff}}
  \;\leq\;
  \mu_N\bigl(9 + 16 + 24\bigr)
  = 49\,\mu_N.
\end{equation}
Thus
\begin{equation}\label{eq:var_W_GG}
  \mathrm{Var}(\hat{W}_{\text{quad}})
  \;\leq\;
  \frac{196\,C_{\mathrm{pf}}^2\,(N+1)^{14/3}}{T}
  + O(T^{-2}).
\end{equation}

The linear witness is simpler.  Because
$\hat{W}_{\text{lin}} = \hat{p}_3 - (3\hat{p}_2 - 1)/2$ is an exact
linear combination of the two U-statistics, no delta-method
approximation is involved: its influence function is exactly
$\psi_{\text{lin}} = 3G_3 - 3G_2$, giving the asymptotic variance
\begin{equation}\label{eq:Veff_lin_def}
  \mathrm{Var}(\hat{W}_{\text{lin}})
  = \frac{V_{\mathrm{eff}}^{\text{lin}}}{T} + O(T^{-2}),
  \qquad
  V_{\mathrm{eff}}^{\text{lin}}
  = 9\,\sigma_{3,1}^2 + 9\,\sigma_{2,1}^2
  - 18\,\sigma_{23,1}.
\end{equation}
The same bounds $\sigma_{2,1}^2 \leq \mu_N p_2 \leq \mu_N$,
$\sigma_{3,1}^2 \leq \mu_N$, and
$|\sigma_{23,1}| \leq \mu_N$ give
$V_{\mathrm{eff}}^{\text{lin}} \leq 36\,\mu_N$ and hence
\begin{equation}\label{eq:var_Wlin_GG}
  \mathrm{Var}(\hat{W}_{\text{lin}})
  \;\leq\;
  \frac{144\,C_{\mathrm{pf}}^2\,(N+1)^{14/3}}{T}
  + O(T^{-2}).
\end{equation}
Unlike the quadratic witness, $\hat{W}_{\text{lin}}$ is exactly
unbiased (Sec.~\ref{app:bias_correction}).

\textbf{Remark.}
The numerical prefactors ($16$, $36$, $144$, $196$) in the
bounds above are not optimized.  They arise from
keeping track of the Hoeffding combinatorial
coefficients ($4$, $9$) and the worst-case bound on
the covariance term.  Tighter prefactors can be obtained
by retaining the state-dependent quantities $p_2$ and
$p_4$ instead of bounding them by unity.

\subsection{Sample complexity}
\label{app:gg_complexity}

\begin{theorem}[State-independent sample
  complexity]\label{thm:GG_complexity}
For any bipartite state~$\rho$ with $\bar{n} < \infty$
and Fock cutoff~$N$, the U-statistic estimators
$\hat{p}_2$, $\hat{p}_3$, and the witnesses~$\hat{W}_{\text{lin}}, \hat{W}_{\text{quad}}$ can
be estimated to additive accuracy~$\varepsilon$ with
probability $\geq 1 - \delta$ using
\begin{equation}\label{eq:T_GG_cheb}
  T
  = O\!\left(\frac{(N+1)^{14/3}}{\delta\,\varepsilon^2}\right)
\end{equation}
samples, via Chebyshev's inequality.
\end{theorem}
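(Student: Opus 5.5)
The proof combines the variance bounds of Sec.~\ref{app:gg_variance} with Chebyshev's inequality. For each target quantity $Z \in \{\hat{p}_2, \hat{p}_3, \hat{W}_{\text{lin}}, \hat{W}_{\text{quad}}\}$, I would first record a bound of the form $\mathrm{Var}(Z) \leq c_Z\, C_{\mathrm{pf}}^2 (N+1)^{14/3}/T$. The constants are $c_Z = 16, 36, 144, 196$, taken from \eqref{eq:var_p2_GG}, \eqref{eq:var_p3_GG}, \eqref{eq:var_Wlin_GG} and \eqref{eq:var_W_GG}.

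\textbf{Controlling the $O(T^{-2})$ tails.} The cited bounds only hold to leading order, so the tails must be made explicit. Every Hoeffding component satisfies $\sigma_{k,c}^2 \leq \mathbb{E}[\tilde h_k^2]$. By Lemma~\ref{lem:kernel_bounds}, this is at most $16 C_{\mathrm{pf}}^4 (N+1)^{28/3}$ for $k=2$ and $64 C_{\mathrm{pf}}^6 (N+1)^{14}$ for $k=3$.

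A cheaper route avoids these large constants. I would bound $\sigma_{k,c}^2 \leq \mathbb{E}\|\hat\Sigma\|_2^{2c}\cdots$ and use monotonicity, $\sigma_{k,c}^2/c \leq \sigma_{k,k}^2/k$. Simplest of all is the exact Hoeffding identity \eqref{eq:hoeffding_general} combined with the standard fact that an order-$k$ U-statistic satisfies $\mathrm{Var}(\hat p_k) \leq (k/T)\,\sigma_{k,k}^2$.

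Either way, the higher components enter with coefficients $\binom{k}{c}^2/\binom{T}{c} = O(T^{-c})$. Once $T \gtrsim (N+1)^{14/3}$, they are dominated by the leading term up to a constant factor. I would therefore state the result for $T$ above such a threshold and absorb the tails into a doubled constant. This matches the $K_2=32$ and $K_3=108$ quoted in Appendix~\ref{app:finite-sample-guarantees}, which are $2\times16$ and $3\times36$.

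\textbf{Unbiasedness.} $\hat p_2$, $\hat p_3$ and $\hat W_{\text{lin}}$ are exactly unbiased, as established in Sec.~\ref{sec:estimators_and_properties}. Chebyshev then gives
\[
\Pr\bigl[|Z-\mathbb{E}Z|\geq\varepsilon\bigr]\leq \mathrm{Var}(Z)/\varepsilon^2 \leq \delta
\]
as soon as $T \geq K_Z C_{\mathrm{pf}}^2 (N+1)^{14/3}/(\delta\varepsilon^2)$.

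For $\hat W_{\text{quad}}$ the situation differs, since it carries the bias $-\mathrm{Var}(\hat p_2)$ from \eqref{eq:bias_identity}. I would split the error as $|\hat W_{\text{quad}} - W_{\text{quad}}| \leq |\hat W_{\text{quad}}-\mathbb{E}\hat W_{\text{quad}}| + \mathrm{Var}(\hat p_2)$. The bias term is $O((N+1)^{14/3}/T) \leq \varepsilon/2$ under the same $T$ scaling, since $\varepsilon^2 \leq \varepsilon$ for $\varepsilon \leq 1$. Chebyshev at accuracy $\varepsilon/2$ then handles the fluctuation.

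Alternatively, to avoid the delta method for the quadratic witness, one can write
\[
\hat p_2^2 - p_2^2 = (\hat p_2 - p_2)(\hat p_2 + p_2)
\]
and bound $|\hat p_2 + p_2| \leq 2 + \varepsilon$ on the good event. A union bound at $\delta/2$ over the two per-moment events then completes the argument, in the same way as \eqref{eq:Wlin_triangle}.

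\textbf{Main obstacle.} The main difficulty is making the $O(T^{-2})$ remainders rigorous rather than asymptotic, because the pointwise kernel bounds carry higher powers of $(N+1)$. I would resolve this by requiring the threshold $T \geq T_0(N) = O((N+1)^{14/3})$. This threshold is automatically met by \eqref{eq:T_GG_cheb} when $\delta\varepsilon^2 \leq 1$, so the final statement keeps the claimed $O\bigl((N+1)^{14/3}/(\delta\varepsilon^2)\bigr)$ form.
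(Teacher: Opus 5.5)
\textbf{Gap in the remainder control.} Your main line is the paper's proof: Chebyshev applied to the variance bounds of Sec.~\ref{app:gg_variance}. The paper's proof stops there. It puts the leading-order bound \eqref{eq:var_p2_GG} into Chebyshev, absorbs everything else into the $O(\cdot)$, and leaves the $O(T^{-2})$ remainders to Theorem~\ref{thm:explicit_constants}. The step you add to make those remainders rigorous fails for $\hat p_3$.

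Write $\bar\mu=4C_{\mathrm{pf}}^2(N+1)^{14/3}$.
\begin{itemize}
\item Your crude bound gives $\sigma_{3,2}^2\le\mathbb{E}[\tilde h_3^2]\le\bar\mu^3$. The $c=2$ term of \eqref{eq:hoeffding_general} is then $18\bar\mu^3/(T(T-1))$, which is $O(\bar\mu/T)$ only once $T\gtrsim\bar\mu^2\sim(N+1)^{28/3}$. So your threshold is really $T_0(N)\sim(N+1)^{28/3}$. The claimed $(N+1)^{14/3}/(\delta\varepsilon^2)$ exceeds it only when $\delta\varepsilon^2\lesssim(N+1)^{-14/3}$, not whenever $\delta\varepsilon^2\le1$.
\item The ``simplest'' route is worse: $\mathrm{Var}(\hat p_3)\le(3/T)\sigma_{3,3}^2\le3\bar\mu^3/T$ gives $T\sim(N+1)^{14}/(\delta\varepsilon^2)$.
\item The monotonicity $\sigma_{k,c}^2/c\le\sigma_{k,k}^2/k$ cannot help. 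It bounds the lower components by the top one, which is exactly the component with the highest power of $N$.
\end{itemize}
For $k=2$ nothing goes wrong, because $|h_2|\le\bar\mu$ already gives $\sigma_{2,2}^2\le\bar\mu^2$.

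\textbf{Missing idea.} Bound the conditional expectation, not the kernel. By independence and unbiasedness of the snapshots, conditioning on $c$ arguments replaces the other $k-c$ snapshots by their mean:
$\mathbb{E}[h_k\mid X_1,\dots,X_c]=\mathrm{Tr}[\hat\Sigma_1\cdots\hat\Sigma_c(\rho_N^{T_B})^{k-c}]$.
H\"older's inequality with $\|\rho_N^{T_B}\|_2\le1$ (or Proposition~\ref{prop:opnorm}) then gives $\sigma_{k,c}^2\le\bar\mu^{c}$, which is Lemma~\ref{lem:all_components}. The averaged-out snapshots contribute no factor of $\bar\mu$. Hence $\sigma_{3,2}^2\le\bar\mu^2$, and the threshold $T\ge2\bar\mu+1=O((N+1)^{14/3})$ suffices. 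The bracket in the $\hat p_3$ bound is then below $3$, which is where $K_3=108=3\times36$ actually comes from; your match to $K_3$ is only justified with this bound. Your fragment ``$\sigma_{k,c}^2\le\mathbb{E}\|\hat\Sigma\|_2^{2c}\cdots$'' points at this, but the conditioning step is what makes it true.

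\textbf{Quadratic witness.} Drop your first route for $\hat W_{\mathrm{quad}}$: \eqref{eq:var_W_GG} is a delta-method, leading-order bound, so Chebyshev on it is not rigorous. Keep your alternative: the factorization $\hat p_2^2-p_2^2=(\hat p_2-p_2)(\hat p_2+p_2)$ with a union bound over the two per-moment events, as in Proposition~\ref{prop:quad_rule}. It needs only the now-rigorous per-moment bounds and avoids the bias $-\mathrm{Var}(\hat p_2)$ entirely.
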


\begin{proof}
By Chebyshev's inequality and~\eqref{eq:var_p2_GG}:
$\Pr(|\hat{p}_2 - p_2| > \varepsilon)
\leq \mathrm{Var}(\hat{p}_2)/\varepsilon^2
\leq 16\,C_{\mathrm{pf}}^2\,(N+1)^{14/3}/
(T\,\varepsilon^2)$.
Setting this equal to~$\delta$ and solving for~$T$
gives~\eqref{eq:T_GG_cheb}.
The same argument applies to~$\hat{p}_3$ and~$\hat{W}$,
with different numerical prefactors absorbed into the
$O(\cdot)$ notation.
\end{proof}

\subsection{Non-asymptotic constants and certification rules}
\label{app:nonasymptotic_constants}

The variance bounds derived so far carry $O(T^{-2})$ remainders from
the higher Hoeffding components, hidden inside the $O(\cdot)$ notation.
This subsection discharges those remainders rigorously and pins down
the explicit constants $K_2 = 32$ and $K_3 = 108$, together with the
certification constants $C_W$, exactly as quoted in the main text. It is
convenient to abbreviate
$\bar{\mu} := 4\,C_{\mathrm{pf}}^2\,(N+1)^{14/3}$, so that
$\|\hat{\Sigma}_N^{T_B}(X)\|_2^2 \leq \bar{\mu}$ holds pointwise
(Proposition~\ref{prop:bipartite}) and therefore
$\mu_N(\rho) \leq \bar{\mu}$.

\begin{lemma}[All Hoeffding components]\label{lem:all_components}
For $k \in \{2, 3\}$ and $1 \leq c \leq k$,
\begin{equation}\label{eq:sigma_kc_bound}
  \sigma_{k,c}^2 \;\leq\; \bar{\mu}^{\,c}\,p_{2(k-c)},
\end{equation}
with the convention $p_2 \leq 1$, $p_0 := 1$ (so in particular
$\sigma_{k,c}^2 \leq \bar{\mu}^{\,c}$).
\end{lemma}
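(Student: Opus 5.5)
The plan is to bound each Hoeffding component $\sigma_{k,c}^2$ by the second moment of the conditional expectation of the kernel. I will write that conditional expectation as a trace of $c$ random partial-transposed snapshots multiplied by a fixed power of $\rho^{T_B}$. I will then control it pointwise with Hilbert--Schmidt Cauchy--Schwarz and the pointwise snapshot bound of Proposition~\ref{prop:bipartite}.

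\emph{Step 1 (trace form of the kernels).} Write $\hat\Sigma_i := \hat{\Sigma}_N^{T_B}(X_i) = A_i\otimes B_i^T$. Because $(A_i\otimes B_i^T)(A_j\otimes B_j^T)(A_k\otimes B_k^T) = A_iA_jA_k\otimes B_i^TB_j^TB_k^T$, the factorized form~\eqref{eq:h3_fock_sm} gives
\[
h_3(X_i,X_j,X_k)=\mathrm{Tr}[\hat\Sigma_i\hat\Sigma_j\hat\Sigma_k].
\]
Likewise $h_2(X_i,X_j)=\mathrm{Tr}[\hat\Sigma_i\hat\Sigma_j]$. The symmetrized kernel $\tilde h_k$ is an average over orderings of such traces; for $k=3$ it equals $\mathrm{Re}\,h_3$.

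\emph{Step 2 (conditional expectations).} Fix $X_1,\dots,X_c$ and average over the remaining $k-c$ independent snapshots. Multilinearity of the trace together with $\mathbb{E}[\hat\Sigma]=\rho^{T_B}$ (here meaning the truncated $\rho_N^{T_B}$) replaces each averaged snapshot by $\rho^{T_B}$. The neighbouring copies merge into a single power, so for each ordering $\pi$
\[
\mathbb{E}[h_k^{\pi}\mid X_1,\dots,X_c]=\mathrm{Tr}\bigl[\hat\Sigma_{\pi(1)}\cdots\hat\Sigma_{\pi(c)}\,(\rho^{T_B})^{k-c}\bigr].
\]
For $k=3$ and $c=1$ this reproduces $G_3$ in~\eqref{eq:G3}. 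For $c=2$ the two orderings $\hat\Sigma_1\hat\Sigma_2$ and $\hat\Sigma_2\hat\Sigma_1$ appear, each followed by the single factor $\rho^{T_B}$. For $c=k$ nothing is averaged and the conditional expectation is the kernel itself, which corresponds to $p_0:=1$.

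\emph{Step 3 (pointwise bound).} For each ordering I use $|\mathrm{Tr}[XM]|\le\|X\|_2\|M\|_2$ together with the submultiplicativity $\|XY\|_2\le\|X\|_2\|Y\|_{\mathrm{op}}\le\|X\|_2\|Y\|_2$. This gives
\[
\bigl|\mathrm{Tr}[\hat\Sigma_{\pi(1)}\cdots\hat\Sigma_{\pi(c)}M]\bigr|\le\prod_{i=1}^c\|\hat\Sigma_i\|_2\,\|M\|_2.
\]
With $M=(\rho^{T_B})^{k-c}$ and $\rho^{T_B}$ Hermitian, $\|M\|_2^2=\mathrm{Tr}[(\rho^{T_B})^{2(k-c)}]=p_{2(k-c)}$. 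The bound is the same for every ordering. Averaging over orderings and taking the real part cannot increase the modulus, by the triangle inequality (equivalently, convexity of $|\cdot|^2$). Hence the symmetrized conditional expectation obeys the same pointwise bound.

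\emph{Step 4 (conclude).} I use $\sigma_{k,c}^2\le\mathbb{E}\bigl[|\mathbb{E}[\tilde h_k\mid X_1,\dots,X_c]|^2\bigr]$, which follows by dropping the subtracted squared mean. Independence of the $c$ snapshots factorizes the expectation as $\prod_i\mathbb{E}\|\hat\Sigma_i\|_2^2=\mu_N^c$. The inequality $\mu_N\le\bar\mu$ from~\eqref{eq:mu_bound} then yields $\sigma_{k,c}^2\le\bar\mu^{\,c}p_{2(k-c)}$. The supplementary statement follows because $\|\rho_N^{T_B}\|_{\mathrm{op}}\le t_N\le1$, by Proposition~\ref{prop:opnorm} applied to the subnormalized state. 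This makes every $p_{2j}\le1$, so $\sigma_{k,c}^2\le\bar\mu^{\,c}$.

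The main obstacle is Step 2 for $k=3$: one must check that the factorized $A$/$B$ kernel really is a single trace of products of partial-transposed snapshots, so that conditioning produces a power of $\rho^{T_B}$ rather than two separate single-mode traces. I would verify this first, via the tensor-product identity stated in Step 1.
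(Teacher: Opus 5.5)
Your proposal is correct and follows essentially the paper's route. You write the conditional expectation as $\mathrm{Tr}[\hat\Sigma_1\cdots\hat\Sigma_c(\rho_N^{T_B})^{k-c}]$, bound it by $\prod_i\|\hat\Sigma_i\|_2\,\sqrt{p_{2(k-c)}}$ via Hilbert--Schmidt Cauchy--Schwarz, and invoke the pointwise snapshot norm bound. If anything, you are more careful than the paper in handling the symmetrized kernel $\tilde h_k$, and in using independence to obtain $\mu_N^c\le\bar\mu^{\,c}$.

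Two small slips need patching. First, for $c=k$ you cannot literally take $M=(\rho^{T_B})^0=I$, since $\|I\|_2=N+1$ on the truncated space. Instead, pair the last snapshot against the product of the others, as in Lemma~\ref{lem:kernel_bounds}. Second, $\|\rho_N^{T_B}\|_{\mathrm{op}}\le 1$ alone does not give $p_2\le 1$; it only gives $p_{2j}\le p_2$. Use $p_2=\mathrm{Tr}[\rho_N^2]\le t_N^2\le 1$, which follows from invariance of the Hilbert--Schmidt norm under $T_B$.
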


\begin{proof}
Conditioning the kernel on $c$ of its arguments and using the
unbiasedness of the snapshots,
$\mathbb{E}[h_k \mid X_1, \ldots, X_c]
= \mathrm{Tr}\bigl[\hat{\Sigma}_N^{T_B}(X_1)\cdots
\hat{\Sigma}_N^{T_B}(X_c)\,(\rho_N^{T_B})^{k-c}\bigr]$.
For $c < k$, H\"older's inequality for Schatten norms (equivalently,
repeated Cauchy--Schwarz for the Hilbert--Schmidt inner product, keeping
one operator-norm factor) gives
$|\mathrm{Tr}[\hat{\Sigma}_1\cdots\hat{\Sigma}_c\,
(\rho_N^{T_B})^{k-c}]|
\leq \prod_{i\leq c}\|\hat{\Sigma}_i\|_2
\cdot \|(\rho_N^{T_B})^{k-c}\|_2
\leq \bar{\mu}^{c/2}\sqrt{p_{2(k-c)}}$
for $c = k-1, k-2$ (using
$\|\rho_N^{T_B}\|_{\mathrm{op}} \leq 1$,
Proposition~\ref{prop:opnorm}, where one operator-norm factor is
needed); for $c = k$ the kernel bounds of
Lemma~\ref{lem:kernel_bounds} give $|h_k| \leq \bar{\mu}^{k/2}$
pointwise.  Taking variances and expectations
yields~\eqref{eq:sigma_kc_bound}.
\end{proof}

\begin{theorem}[Explicit state-independent sample complexity]
\label{thm:explicit_constants}
Let $\rho$ be any bipartite state with $\bar{n} < \infty$, let
$N \geq 1$, and assume $T \geq 2\bar{\mu} + 1$.  Then
\begin{equation}\label{eq:var_explicit}
  \mathrm{Var}(\hat{p}_2)
  \;\leq\; \frac{8\,\bar{\mu}}{T}
  = \frac{32\,C_{\mathrm{pf}}^2 (N+1)^{14/3}}{T},
  \qquad
  \mathrm{Var}(\hat{p}_3)
  \;\leq\; \frac{27\,\bar{\mu}}{T}
  = \frac{108\,C_{\mathrm{pf}}^2 (N+1)^{14/3}}{T}.
\end{equation}
Consequently, for any $\varepsilon \in (0, 1]$ and
$\delta \in (0, 1)$, Chebyshev's inequality gives
$\Pr[|\hat{p}_k - p_k| > \varepsilon] \leq \delta$ whenever
\begin{equation}\label{eq:T_explicit}
  T \;\geq\;
  \frac{K_k\,C_{\mathrm{pf}}^2\,(N+1)^{14/3}}
       {\delta\,\varepsilon^2},
  \qquad K_2 = 32, \quad K_3 = 108,
\end{equation}
which is the bound quoted in the main text; these sample sizes
satisfy $T \geq 2\bar{\mu} + 1$ automatically (for
$\bar{\mu} \geq 1$).
\end{theorem}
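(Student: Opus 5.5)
The statement to prove is Theorem~\ref{thm:explicit_constants}. The plan is to start from the exact Hoeffding decomposition \eqref{eq:hoeffding_general}, which is exact for every finite $T$. Into each term we insert the component bounds of Lemma~\ref{lem:all_components}, namely $\sigma_{k,c}^2 \le \bar\mu^{\,c}$ (using $p_{2(k-c)}\le 1$). The higher-order terms then form a geometric-like series in $\bar\mu/T$. The assumption $T\ge 2\bar\mu+1$ is exactly what makes that series summable against the leading $1/T$ term with an explicit factor of $2$ or $3$.

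\emph{Order two.} From \eqref{eq:hoeffding_p2},
\begin{equation*}
\mathrm{Var}(\hat p_2)=\frac{4(T-2)}{T(T-1)}\sigma_{2,1}^2+\frac{2}{T(T-1)}\sigma_{2,2}^2
\le \frac{4\bar\mu}{T}+\frac{2\bar\mu^2}{T(T-1)} .
\end{equation*}
Since $T-1\ge 2\bar\mu$, the second term is at most $\bar\mu/T$. This gives $\mathrm{Var}(\hat p_2)\le 5\bar\mu/T\le 8\bar\mu/T$, which is the claimed bound with room to spare.

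\emph{Order three.} The three terms are $9\sigma_{3,1}^2/T$, $9\sigma_{3,2}^2/\binom{T}{2}$ and $\sigma_{3,3}^2/\binom{T}{3}$. The second is at most $18\bar\mu^2/(T(T-1))\le 9\bar\mu/T$. The third is at most $6\bar\mu^3/(T(T-1)(T-2))$. To control it I would use $(T-1)(T-2)\ge$ a multiple of $\bar\mu^2$ when $T\ge 2\bar\mu+1$. Here I need $T-2\ge 2\bar\mu-1$, which is comparable to $\bar\mu$ when $\bar\mu\ge1$. The Gill--Guta constant makes $\bar\mu\ge 1$ in practice; I would note this assumption, or strengthen the case split slightly.

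Summing gives $9+9+O(1)$, which stays below $27$ provided the last term is at most $9\bar\mu/T$. That requires $6\bar\mu^2\le 9(T-1)(T-2)$. With $T-1\ge 2\bar\mu$ and $T-2\ge 2\bar\mu-1\ge\bar\mu$ (for $\bar\mu\ge1$), the right-hand side is at least $18\bar\mu^2$, so the inequality holds. This is the main bookkeeping obstacle, since it is the only place where $\bar\mu\ge 1$ enters, and I would handle it carefully.

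\emph{Chebyshev step.} Chebyshev's inequality gives $\Pr[|\hat p_k-p_k|>\varepsilon]\le K_k C_{\rm pf}^2(N+1)^{14/3}/(T\varepsilon^2)$. This is at most $\delta$ under \eqref{eq:T_explicit}, using $K_2=8\cdot4$ and $K_3=27\cdot4$.

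\emph{Consistency with $T\ge 2\bar\mu+1$.} It remains to check that the threshold in \eqref{eq:T_explicit} already implies this assumption. The threshold equals $K_k\bar\mu/(4\delta\varepsilon^2)\ge 8\bar\mu$, because $\varepsilon\le1$ and $\delta<1$. Moreover $8\bar\mu\ge 2\bar\mu+1$ whenever $\bar\mu\ge1$, so the assumption is automatic.
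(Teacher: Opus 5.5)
Your proposal is correct and follows the same route as the paper's proof: insert the component bounds $\sigma_{k,c}^2\le\bar\mu^{\,c}$ from Lemma~\ref{lem:all_components} into the Hoeffding variance formula, absorb the higher-order terms using $T-1\ge 2\bar\mu$ and $\bar\mu\ge 1$, and finish with Chebyshev. Two small differences: your $\hat p_2$ bound ($5\bar\mu/T$) is slightly tighter than the paper's, and you explicitly check that the threshold in \eqref{eq:T_explicit} already implies $T\ge 2\bar\mu+1$, which the paper asserts without verifying.
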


\begin{proof}
The Hoeffding decomposition bounds the variance by
$\mathrm{Var}(\hat{p}_k) \leq \sum_{c=1}^{k}
\binom{k}{c}^2 \sigma_{k,c}^2/\binom{T}{c}$
[Eq.~\eqref{eq:hoeffding_general}].
For $k = 2$, Lemma~\ref{lem:all_components} gives
\begin{equation}
  \mathrm{Var}(\hat{p}_2)
  \leq \frac{4\bar{\mu}p_2}{T}
  + \frac{2\bar{\mu}^2}{T(T-1)}
  = \frac{4\bar{\mu}}{T}
    \Bigl[p_2 + \frac{\bar{\mu}}{2(T-1)}\Bigr]
  \leq \frac{8\bar{\mu}}{T},
\end{equation}
using $p_2 \leq 1$ and $\bar{\mu}/(2(T-1)) \leq 1$ for
$T \geq \bar{\mu}/2 + 1$.  For $k = 3$,
\begin{equation}
  \mathrm{Var}(\hat{p}_3)
  \leq \frac{9\bar{\mu}p_4}{T}
  + \frac{18\bar{\mu}^2}{T(T-1)}
  + \frac{6\bar{\mu}^3}{T(T-1)(T-2)}
  = \frac{9\bar{\mu}}{T}
    \Bigl[p_4 + \frac{2\bar{\mu}}{T-1}
    + \frac{2\bar{\mu}^2}{3(T-1)(T-2)}\Bigr].
\end{equation}
For $T \geq 2\bar{\mu} + 1$ (and $\bar{\mu} \geq 1$):
$2\bar{\mu}/(T-1) \leq 1$ and
$\tfrac{2}{3}\bar{\mu}^2/((T-1)(T-2)) \leq
\tfrac{2}{3}\cdot\tfrac{1}{2}\cdot
\bar{\mu}/(T-2) \leq \tfrac{1}{3}$, so the bracket is at most
$1 + 1 + \tfrac{1}{3} < 3$ and
$\mathrm{Var}(\hat{p}_3) \leq 27\bar{\mu}/T$.
Finally, $K_k = k^2 \cdot 4 \cdot
(\text{bracket bound}) = \{8, 27\} \times 4 = \{32, 108\}$
in units of $C_{\mathrm{pf}}^2(N+1)^{14/3}$.
\end{proof}

\subsubsection{Certification rule for the linear witness}

The certification rule of the main text follows by propagating the
per-moment bound~\eqref{eq:T_explicit} through the witness.  Since
$\hat{W}_{\text{lin}}$ is linear in the two moments, the triangle
inequality controls its deviation directly,
\begin{equation}\label{eq:Wlin_triangle}
  |\hat{W}_{\text{lin}} - W_{\text{lin}}|
  \;\leq\; \tfrac{3}{2}\,|\hat{p}_2 - p_2| + |\hat{p}_3 - p_3| .
\end{equation}
Requiring each moment to be accurate at confidence $1 - \delta/2$ and
joining the two events by a union bound then yields the rule: the
observation $\hat{W}_{\text{lin}} + \varepsilon < 0$ certifies
$W_{\text{lin}} < 0$ with probability $\geq 1 - \delta$ whenever
$T \geq C_W\,C_{\mathrm{pf}}^2 (N+1)^{14/3}/(\delta\varepsilon^2)$.
Splitting the target accuracy as
$\tfrac{3}{2}\varepsilon_2 + \varepsilon_3 = \varepsilon$ and
optimizing over the two per-moment conditions gives
\begin{equation}\label{eq:CW_lin}
  C_W = \Bigl(\tfrac{3}{2}\sqrt{2K_2}
  + \sqrt{2K_3}\Bigr)^2 \approx 713,
\end{equation}
as quoted in the main text.  No bias correction enters here, because
$\hat{W}_{\text{lin}}$ is exactly unbiased.

\subsubsection{Certification rule for the quadratic witness}
\label{app:quad_rule}

The quadratic witness needs one extra step, because the
$\hat{p}_2^2$ term is nonlinear in $\hat{p}_2$.  Its error propagation
therefore carries the additional factor $(p_2 + \hat{p}_2)$ announced
in the main text,
\begin{equation}\label{eq:Wquad_triangle}
  |\hat{W}_{\text{quad}} - W_{\text{quad}}|
  \;\leq\; |\hat{p}_3 - p_3|
  + (p_2 + \hat{p}_2)\,|\hat{p}_2 - p_2|,
\end{equation}
which follows from the factorization
$\hat{p}_2^2 - p_2^2 = (\hat{p}_2 + p_2)(\hat{p}_2 - p_2)$.  On the
event $|\hat{p}_2 - p_2| \leq \varepsilon_2$ the prefactor is bounded,
$p_2 + \hat{p}_2 \leq 2p_2 + \varepsilon_2 \leq 2 + \varepsilon_2$, so
the same union-bound construction (confidence $\delta/2$ per moment,
error split
$(2+\varepsilon_2)\varepsilon_2 + \varepsilon_3 = \varepsilon$) yields
the following rule.

\begin{proposition}[Rigorous rule, quadratic witness]
\label{prop:quad_rule}
For $\varepsilon \in (0,1]$, $\delta \in (0,1)$, the implication
\begin{equation}
  \hat{W}_{\text{quad}} + \varepsilon < 0
  \quad\Longrightarrow\quad
  W_{\text{quad}} < 0
  \quad\text{with probability} \geq 1 - \delta
\end{equation}
holds whenever
\begin{equation}\label{eq:T_quad_rule}
  T \;\geq\;
  \frac{C_W^{\mathrm{quad}}\,
        C_{\mathrm{pf}}^2\,(N+1)^{14/3}}
       {\delta\,\varepsilon^2},
  \qquad
  C_W^{\mathrm{quad}}
  = \Bigl(3\sqrt{2K_2} + \sqrt{2K_3}\Bigr)^2
  \approx 1498,
\end{equation}
where the conservative factor $3 \geq 2 + \varepsilon_2$ bounds
the moment sum; for small target accuracies the effective constant
approaches $\bigl(2\sqrt{2K_2} + \sqrt{2K_3}\bigr)^2 \approx 942$.
Note that this rule uses the \emph{raw} estimator
$\hat{W}_{\text{quad}}$; no bias correction is needed because the
propagation bound~\eqref{eq:Wquad_triangle} controls the deviation
directly.
\end{proposition}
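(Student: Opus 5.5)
The statement is Proposition~\ref{prop:quad_rule}, the rigorous rule for the quadratic witness. I will mirror the argument for the linear witness: first a deterministic error-propagation inequality, then per-moment Chebyshev bounds from Theorem~\ref{thm:explicit_constants}, joined by a union bound.

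\textbf{Step 1 (deterministic inequality).} From the factorization $\hat p_2^2-p_2^2=(\hat p_2+p_2)(\hat p_2-p_2)$ and the triangle inequality I obtain Eq.~\eqref{eq:Wquad_triangle}. On the event $E_2=\{|\hat p_2-p_2|\le\varepsilon_2\}$ the prefactor satisfies $p_2+\hat p_2\le 2p_2+\varepsilon_2\le 2+\varepsilon_2$, using $p_2\le 1$; for the subnormalized truncated state, $p_2^{(N)}\le t_N^2\le 1$. Let $E_3=\{|\hat p_3-p_3|\le\varepsilon_3\}$. On $E_2\cap E_3$ we then have
\[
|\hat W_{\rm quad}-W_{\rm quad}|\le(2+\varepsilon_2)\varepsilon_2+\varepsilon_3 .
\]
I choose $\varepsilon_2,\varepsilon_3$ so that this right-hand side is at most $\varepsilon$. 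Then $\hat W_{\rm quad}+\varepsilon<0$ forces $W_{\rm quad}\le\hat W_{\rm quad}+\varepsilon<0$.

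\textbf{Step 2 (probability).} Apply Theorem~\ref{thm:explicit_constants} at confidence $\delta/2$ for each moment. This gives $\Pr[E_k^c]\le\delta/2$ whenever $T\ge 2K_kC_{\rm pf}^2(N+1)^{14/3}/(\delta\varepsilon_k^2)$. The union bound then yields $\Pr[E_2\cap E_3]\ge 1-\delta$.

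\textbf{Step 3 (choice of split).} Since $\varepsilon\le 1$, we have $\varepsilon_2\le 1$ and hence $2+\varepsilon_2\le 3$. So it suffices to require $3\varepsilon_2+\varepsilon_3=\varepsilon$. Write $a=\sqrt{2K_2}$ and $b=\sqrt{2K_3}$. Both $T$-conditions become a single one when $T\,\delta\,\varepsilon_k^2/(C_{\rm pf}^2(N+1)^{14/3})$ equals the relevant constant, that is, when $\varepsilon_2=a s$ and $\varepsilon_3=b s$ for a common scale $s$. The constraint $3as+bs=\varepsilon$ gives $s=\varepsilon/(3a+b)$. The requirement on $T$ is then
\[
T\ge\frac{(3a+b)^2C_{\rm pf}^2(N+1)^{14/3}}{\delta\varepsilon^2},
\]
which is exactly $C_W^{\rm quad}=(3\sqrt{2K_2}+\sqrt{2K_3})^2$. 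This split is also optimal by Cauchy--Schwarz. When $\varepsilon_2\to 0$ the factor $3$ is replaced by $2$, which gives the quoted asymptotic constant $\approx 942$.

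\textbf{Main obstacle.} The delicate point is that Theorem~\ref{thm:explicit_constants} requires $T\ge 2\bar\mu+1$, and I must check this holds at the chosen $T$. The chosen $T$ contains the factor $C_W^{\rm quad}/(\delta\varepsilon^2)\ge 1498$ times $\bar\mu/4$, so it automatically dominates $2\bar\mu+1$ once $\bar\mu\ge1$.

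A second point to note is that no bias correction is needed. The argument controls $|\hat W_{\rm quad}-W_{\rm quad}|$ pathwise on $E_2\cap E_3$ and never uses $\mathbb{E}[\hat W_{\rm quad}]$, so the $-\mathrm{Var}(\hat p_2)$ bias is irrelevant here.
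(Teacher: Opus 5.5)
Your proposal is correct and follows the paper's own argument essentially step for step. The ingredients match: the factorization $\hat p_2^2-p_2^2=(\hat p_2+p_2)(\hat p_2-p_2)$ giving Eq.~\eqref{eq:Wquad_triangle}; the bound $2+\varepsilon_2\le 3$ on the event $|\hat p_2-p_2|\le\varepsilon_2$; per-moment Chebyshev bounds from Theorem~\ref{thm:explicit_constants} at confidence $\delta/2$, joined by a union bound; and the equalizing split $\varepsilon_2=\sqrt{2K_2}\,s$, $\varepsilon_3=\sqrt{2K_3}\,s$, which yields $C_W^{\mathrm{quad}}=(3\sqrt{2K_2}+\sqrt{2K_3})^2$. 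The paper leaves two checks implicit that you make explicit: that $T\ge 2\bar\mu+1$ holds at the chosen sample size, and that $p_2^{(N)}\le t_N^2\le 1$ for the subnormalized state. Both go through as you state.
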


\textbf{Remark (Asymmetric cutoffs).}  When the two modes carry
different photon numbers, the snapshot norm factorizes per mode
[Eq.~\eqref{eq:bipartite_HS}], so independent cutoffs $N_A \neq N_B$
give $T = O\bigl((N_A+1)^{7/3}(N_B+1)^{7/3}/\varepsilon^2\bigr)$,
letting one truncate the less-excited mode more aggressively to reduce
the total sample count.

\subsection{Discussion}
\label{app:gg_discussion}

The exponent $7/3$ in Lemma~\ref{lem:gill_guta} is attained near the
Airy transition region, where the pattern functions are maximally
oscillatory.  Within the present argument it cannot be improved,
because the bound treats every measurement outcome in the worst case,
through the $\ell^\infty$ norm of the pattern functions, and so
discards all information about how the outcomes are actually
distributed.  Any further sharpening must therefore use that discarded
structure, in particular the quadrature distribution $\mu(x)$.  This is
exactly the route taken by the state-dependent geometric-decay analysis
of Sec.~\ref{app:gd_bounds}, which recovers an $N$-independent rate at
the cost of restricting the class of states.


\section{Negativity bounds from truncated moments}
\label{app:trunc_neg}

The negativity bound of the main text applies to the
exact normalized moments $(p_1, p_2, p_3)$ with $p_1 = 1$.  In the
truncated protocol, the homodyne estimators target the projected
moments
$p_k^{(N)} = \mathrm{Tr}[(\rho_N^{T_B})^k]$
of the subnormalized operator
$\rho_N = P_N\,\rho\,P_N$, where
$P_N = \Pi_N \otimes \Pi_N$ is the local Fock projector\ (notation as in the main text).
Since $t_N := \mathrm{Tr}[\rho_N] \leq 1$ in general, one cannot
simply substitute $p_k^{(N)}$ into the normalized bound.
Here we prove dimension-free bounds that apply directly to the
subnormalized truncated moments and transfer rigorously to the
full-state negativity.

\subsection{Negativity under local compression}
\label{app:neg_compression}

For any positive trace-class $\tau$ (not necessarily normalized),
define its \emph{unnormalized negativity}
\begin{equation}\label{eq:unnorm_neg}
  \mathcal{N}_*(\tau) :=
  \mathrm{Tr}\bigl[(\tau^{T_B})_-\bigr]
  = \frac{\|\tau^{T_B}\|_1 - \mathrm{Tr}[\tau]}{2}.
\end{equation}
When $\mathrm{Tr}[\tau] = 1$, this is the usual negativity.  Physically,
projecting onto a finite Fock window discards weight from $\rho^{T_B}$
and can only shrink the negative part, so the truncated negativity never
overshoots the true one:

\begin{lemma}[Compression monotonicity]\label{lem:compression}
$\mathcal{N}_*(\rho_N) \leq \mathcal{N}(\rho)$.
\end{lemma}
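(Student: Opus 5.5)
The plan is to use the variational characterization of the negative part of a self-adjoint trace-class operator. For self-adjoint trace-class $X$,
\begin{equation}
  \mathrm{Tr}[X_-] = \max_{0\le P\le I}\bigl(-\mathrm{Tr}[PX]\bigr),
\end{equation}
with the maximum attained at the spectral projector onto the negative eigenspace of $X$.

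We apply this to $X=\rho_N^{T_B}$. The truncation-validity appendix (Step~2) gives the identity $\rho_N^{T_B} = P_N\,\rho^{T_B}\,P_N$ with $P_N=\Pi_N\otimes\Pi_N$. Since $\Pi_N$ is real and diagonal in the Fock basis, partial transposition commutes with the compression, and this identity holds for trace-class operators in infinite dimension as well.

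Let $Q$ be the projector onto the negative eigenspace of $\rho_N^{T_B}$. This eigenspace lies in the range of $P_N$, so $P_NQP_N = Q$. Then
\begin{equation}
  \mathcal{N}_*(\rho_N) = -\mathrm{Tr}\bigl[Q\,P_N\rho^{T_B}P_N\bigr] = -\mathrm{Tr}\bigl[(P_NQP_N)\,\rho^{T_B}\bigr].
\end{equation}
The operator $P_NQP_N$ satisfies $0\le P_NQP_N\le I$, so applying the variational formula to $\rho^{T_B}$ gives
\begin{equation}
  \mathcal{N}_*(\rho_N) \le \mathrm{Tr}\bigl[(\rho^{T_B})_-\bigr] = \mathcal{N}(\rho).
\end{equation}

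The variational inequality itself has a direct proof. Write $X = X_+ - X_-$. For any $0\le P\le I$,
\begin{equation}
  -\mathrm{Tr}[PX] = \mathrm{Tr}[PX_-] - \mathrm{Tr}[PX_+] \le \mathrm{Tr}[PX_-] \le \mathrm{Tr}[X_-].
\end{equation}
The middle step holds because $\mathrm{Tr}[PX_+]\ge 0$, and the last because $\mathrm{Tr}[(I-P)X_-]\ge 0$.

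It remains to check that $\mathcal{N}_*(\tau)=\mathrm{Tr}[(\tau^{T_B})_-]$ matches the $(\|\cdot\|_1-\mathrm{Tr})/2$ form in Eq.~\eqref{eq:unnorm_neg}. This follows from $\|X\|_1=\mathrm{Tr}X_++\mathrm{Tr}X_-$ together with $\mathrm{Tr}[\tau^{T_B}]=\mathrm{Tr}\,\tau$.

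The main obstacle is the infinite-dimensional bookkeeping. We need $\rho^{T_B}$ to be trace class so that $\mathcal{N}(\rho)$ is finite and the spectral decomposition into $X_\pm$ is legitimate. If it is not, $\mathcal{N}(\rho)=\infty$ and the inequality holds trivially; so I would first state that convention. In the trace-class case the argument above goes through unchanged, since every trace involved is of a product of a bounded operator with a trace-class one.
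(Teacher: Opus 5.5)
Your proof is correct and follows the paper's argument. The paper likewise uses the variational formula $\mathrm{Tr}[X_-]=\max_{0\le M\le I}(-\mathrm{Tr}[MX])$, restricts the maximization to $0\le M\le P_N$ (which is your explicit choice $M=Q=P_NQP_N$), and concludes via $-\mathrm{Tr}[MA]\le\mathrm{Tr}[A_-]$ with $A=\rho^{T_B}=A_+-A_-$. Your infinite-dimensional bookkeeping is a harmless addition the paper omits; it is not even needed, since $Q$ is finite rank and $\rho^{T_B}$ is bounded, so $\mathrm{Tr}[QA_-]\le\mathrm{Tr}[A_-]$ holds in $[0,\infty]$ regardless.
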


\begin{proof}
We use the variational expression for the negative-eigenvalue weight,
$\mathrm{Tr}[X_-] = \max_{0 \leq M \leq I}(-\mathrm{Tr}[MX])$ for
self-adjoint trace-class~$X$ (the optimal $M$ projects onto the negative
eigenspace).  Restricting the support to the Fock window,
\begin{equation}
  \mathrm{Tr}\bigl[(P_N A P_N)_-\bigr]
  = \max_{0 \leq M \leq P_N}\bigl(-\mathrm{Tr}[MA]\bigr),
\end{equation}
where $A = \rho^{T_B}$.  Writing $A = A_+ - A_-$ with
$A_\pm \geq 0$, every $0 \leq M \leq P_N \leq I$ satisfies
$-\mathrm{Tr}[MA] = \mathrm{Tr}[MA_-] - \mathrm{Tr}[MA_+]
\leq \mathrm{Tr}[A_-] = \mathcal{N}(\rho)$.
\end{proof}

Any lower bound on $\mathcal{N}_*(\rho_N)$ is therefore
automatically a lower bound on $\mathcal{N}(\rho)$.

\subsection{Spectral range for subnormalized partial transposes}
\label{app:spectral_range}

\begin{lemma}[Subnormalized spectral range]\label{lem:spectral_subnorm}
If $\tau \geq 0$ with $\mathrm{Tr}[\tau] = t$, then every
eigenvalue of $\tau^{T_B}$ lies in $[-t/2,\; t]$.
In particular,
$\mathrm{Tr}[(\tau^{T_B})^3] \geq -(t/2)\,\mathrm{Tr}[(\tau^{T_B})^2]$.
\end{lemma}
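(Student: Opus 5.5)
The plan is to reduce to the normalized case already settled in Proposition~\ref{prop:opnorm}, using only rescaling and linearity of partial transposition. The case $t = 0$ is handled separately: then $\tau = 0$ because $\tau \geq 0$ is trace class with zero trace. Hence $\tau^{T_B} = 0$, its spectrum is $\{0\} \subseteq [0,0]$, and the cubic inequality reads $0 \geq 0$.

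For $t > 0$, set $\sigma := \tau/t$. This is a positive trace-class operator of unit trace, i.e.\ a bipartite state. Proposition~\ref{prop:opnorm} holds in finite or infinite dimension, so it gives $-\tfrac12 I \leq \sigma^{T_B} \leq I$. Partial transposition is linear, so $\tau^{T_B} = t\,\sigma^{T_B}$. Multiplying the operator inequalities by $t > 0$ preserves them and yields $-\tfrac{t}{2} I \leq \tau^{T_B} \leq t I$. Since $\tau^{T_B}$ is self-adjoint (Hilbert--Schmidt in infinite dimension, as noted in the proof of Proposition~\ref{prop:opnorm}), this operator sandwich is exactly the statement $\operatorname{spec}(\tau^{T_B}) \subseteq [-t/2,\, t]$.

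For the moment inequality, write $A = \tau^{T_B}$ and diagonalize it with eigenvalues $\lambda_i \in [-t/2, t]$. In infinite dimension $A$ is compact self-adjoint, so a countable eigen-decomposition exists. Also $A$ is Hilbert--Schmidt with $\|A\|_{\mathrm{op}} \leq t$, so $\mathrm{Tr}[A^3] = \sum_i \lambda_i^3$ converges absolutely, since $|\lambda_i|^3 \leq t\,\lambda_i^2$. The key pointwise claim is $\lambda^3 \geq -(t/2)\lambda^2$ for every $\lambda \in [-t/2, t]$. This is equivalent to $\lambda^2(\lambda + t/2) \geq 0$, which holds because $\lambda \geq -t/2$. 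Summing over $i$ gives $\mathrm{Tr}[A^3] = \sum_i \lambda_i^3 \geq -(t/2)\sum_i \lambda_i^2 = -(t/2)\,\mathrm{Tr}[A^2]$.

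The only delicate point is the infinite-dimensional bookkeeping: the existence of the spectral decomposition and the convergence of the trace of $A^3$. Both are covered by the Hilbert--Schmidt property inherited from $\tau$ together with the operator-norm bound just established. Every other step is elementary scaling.
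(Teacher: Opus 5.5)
Your proposal is correct and follows the paper's proof essentially step for step: rescale to $\sigma = \tau/t$, invoke Proposition~\ref{prop:opnorm}, multiply the operator sandwich by $t$, and then apply the scalar inequality $\lambda^2(\lambda + t/2) \geq 0$ on the spectrum, which the paper phrases as functional calculus. Your separate treatment of $t = 0$ and your check that $\mathrm{Tr}[(\tau^{T_B})^3]$ converges absolutely via $|\lambda_i|^3 \leq t\,\lambda_i^2$ are small refinements that the paper leaves implicit, and you handle both correctly.
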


\begin{proof}
Define the normalized state $\sigma = \tau/t$.  By
Proposition~\ref{prop:opnorm},
$-\tfrac{1}{2}I \leq \sigma^{T_B} \leq I$.
Multiplying by~$t$ gives
$-\tfrac{t}{2}I \leq \tau^{T_B} \leq tI$.
The scalar inequality $x^3 \geq -(t/2)x^2$ on $[-t/2, t]$
and functional calculus yield
$\mathrm{Tr}[(\tau^{T_B})^3]
\geq -(t/2)\,\mathrm{Tr}[(\tau^{T_B})^2]$.
\end{proof}

\subsection{Truncated negativity bounds}
\label{app:trunc_neg_bounds}

Write $q_k = p_k^{(N)} = \mathrm{Tr}[(\rho_N^{T_B})^k]$ for
brevity, with $q_1 = t_N$.  Define the \emph{truncated
third-order deficit}
\begin{equation}\label{eq:trunc_deficit}
  \Delta_N := q_2^2 - t_N\, q_3
  = \bigl(p_2^{(N)}\bigr)^2 - t_N\, p_3^{(N)}.
\end{equation}
Note that this is the Hankel determinant
$\det\bigl(\begin{smallmatrix} t_N & q_2 \\
q_2 & q_3 \end{smallmatrix}\bigr)$
of the subnormalized moment sequence, not the deficit
$q_2^2 - q_3$ used by the normalized witness.

\begin{theorem}[Truncated negativity bounds]\label{thm:trunc_neg}
Let $\rho$ be a bipartite state, $P_N$ the local Fock
projector, $\rho_N = P_N\rho P_N$ with $t_N = \mathrm{Tr}[\rho_N] > 0$,
and $p_k^{(N)} = \mathrm{Tr}[(\rho_N^{T_B})^k]$.
If $\Delta_N > 0$, then:

\emph{(i) Cubic-root bound.}
\begin{equation}\label{eq:trunc_cubic}
  \mathcal{N}(\rho) \;\geq\; u_{N,*},
\end{equation}
where $u_{N,*}$ is the smallest positive real root of
\begin{equation}\label{eq:trunc_cubic_poly}
  t_N\,u^3 + 2\,p_2^{(N)}\,u^2 + p_3^{(N)}\,u
  + t_N\,p_3^{(N)} - \bigl(p_2^{(N)}\bigr)^2 = 0.
\end{equation}

\emph{(ii) Closed-form rational bound.}
\begin{equation}\label{eq:trunc_rational}
  \mathcal{N}(\rho) \;\geq\;
  \frac{\bigl(p_2^{(N)}\bigr)^2 - t_N\, p_3^{(N)}}
       {t_N\, p_2^{(N)} + p_3^{(N)} + t_N^3/4}\,.
\end{equation}

\emph{(iii) Simpler corollary.}
\begin{equation}\label{eq:trunc_simple}
  \mathcal{N}(\rho) \;\geq\;
  \frac{t_N\bigl[\bigl(p_2^{(N)}\bigr)^2 - t_N\, p_3^{(N)}\bigr]}
       {\bigl(t_N^2 + p_2^{(N)}\bigr)^2}\,.
\end{equation}

All bounds are dimension-free and reduce to those of
the full-state negativity theorem (the negativity-bound appendix of the main text) when $t_N = 1$.
\end{theorem}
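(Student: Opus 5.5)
The plan is to reduce everything to the normalized negativity theorem of Appendix~\ref{app:negativity}, applied to a rescaled operator, and then transfer the result to $\rho$ via compression monotonicity. Let $\sigma := \rho_N/t_N$. This is a normalized state, since $\rho_N = P_N\rho P_N \geq 0$. Its PT-moments are $s_k := p_k^{(N)}/t_N^k$, with $s_1 = 1$. The hypothesis $\Delta_N>0$ reads $t_N q_3 < q_2^2$. Dividing by $t_N^4$ gives $s_3 < s_2^2$, which is exactly the hypothesis of the normalized theorem. Hence $\mathcal{N}(\sigma)\geq \beta_*(s_2,s_3)$ and $\mathcal{N}(\sigma)\geq (s_2^2-s_3)/(s_2+s_3+1/4)$. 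Moreover $\mathcal{N}_*(\rho_N) = t_N\,\mathcal{N}(\sigma)$, because negative eigenvalues scale linearly. Lemma~\ref{lem:compression} then gives $\mathcal{N}(\rho)\geq \mathcal{N}_*(\rho_N) = t_N\mathcal{N}(\sigma)$. One point needs checking: the normalized proof used $\operatorname{spec}\subseteq[-\tfrac12,1]$ and Cauchy--Schwarz only. Both hold for $\sigma$, since $\sigma$ is a genuine state (Proposition~\ref{prop:opnorm}), so the theorem applies verbatim.

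For (i), substitute $u = t_N v$, where $v := \mathcal{N}(\sigma)$ satisfies $g_s(v) = v^3+2s_2v^2+s_3v+s_3-s_2^2\geq 0$. Multiplying $g_s(u/t_N)$ by $t_N^4$ gives
\begin{equation*}
t_N u^3 + 2q_2u^2 + q_3 u + t_Nq_3 - q_2^2 \;\geq\; 0 \qquad\text{at } u=t_N\mathcal{N}(\sigma),
\end{equation*}
which is the polynomial of Eq.~\eqref{eq:trunc_cubic_poly}. Its value at $u=0$ is $-\Delta_N<0$, and it tends to $+\infty$. So $t_N\mathcal{N}(\sigma)$ is at least the smallest positive root $u_{N,*}$, and compression monotonicity finishes the argument. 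The same rescaling handles (ii). Multiplying numerator and denominator of $t_N(s_2^2-s_3)/(s_2+s_3+\tfrac14)$ by $t_N^3$ gives $(q_2^2-t_Nq_3)/(t_Nq_2+q_3+t_N^3/4)$. The only thing to verify is that the denominator is positive. It is, because $s_3\geq -s_2/2$ by Lemma~\ref{lem:spectral_subnorm}.

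For (iii), I will show that the rational bound (ii) dominates the simpler expression. By cross-multiplication, it suffices to show $t_N(t_Nq_2+q_3+t_N^3/4)\leq (t_N^2+q_2)^2$, or, in normalized variables, $s_2+s_3+\tfrac14 \leq (1+s_2)^2$. I expect this step to be the main obstacle, since $s_3$ must be bounded above. I would first try $s_3\leq s_2^{3/2}$, the largest-eigenvalue bound. With that, it remains to check $s_2^{3/2} \leq s_2^2 + s_2 + \tfrac34$. This holds because $s_2^{3/2}\leq \tfrac12(s_2+s_2^2)$ by AM--GM. Should that fail for some $s_2$ range, the fallback is to use $s_3\leq s_2\cdot\lambda_{\max}\leq s_2$ from $\operatorname{spec}\subseteq[-\tfrac12,1]$. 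This gives $s_2+s_3+\tfrac14\leq 2s_2+\tfrac14\leq (1+s_2)^2$, since $1+s_2^2\geq \tfrac14$. This second route is cleaner, and I would use it. Finally, setting $t_N=1$ gives back the normalized theorem in each case.
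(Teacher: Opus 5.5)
Your proof is correct, and it takes a different route from the paper. The paper does not renormalize. It works directly with the spectrum of $\rho_N^{T_B}$: it uses $\sum_j\alpha_j = t_N+\nu$, the Cauchy--Schwarz bound $\sum_j\alpha_j^3\ge (q_2-y)^2/(t_N+\nu)$, and the two negative-part bounds $\sum_k\beta_k^3\le \nu y$ and $\sum_k\beta_k^3\le (t_N/2)\,y$ (the latter from Lemma~\ref{lem:spectral_subnorm}). It then re-runs the two-case analysis of the normalized proof with $t_N$ in place of $1$, splitting on $\nu^2\lessgtr q_2$ for (i) and on $t_N\nu/2\lessgtr q_2$ for (ii).

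Your observation makes all of that unnecessary. You note that $\sigma=\rho_N/t_N$ is a genuine state, that $s_k=q_k/t_N^k$, and that $\mathcal{N}_*(\rho_N)=t_N\,\mathcal{N}(\sigma)$. By homogeneity the truncated cubic is exactly $t_N^4\,g_s(u/t_N)$, so $u_{N,*}=t_N\,\beta_*(s_2,s_3)$. Likewise the rational bound (ii) is just $t_N$ times the normalized one.

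What your route buys is modularity. The normalized theorem is used as a black box, so you never have to re-verify the subnormalized case splits, which the paper only sketches. It also makes explicit that the only cost of truncation is the overall factor $t_N$, and the $t_N=1$ reduction becomes immediate. The paper's version is self-contained at the level of the subnormalized spectrum, but it leaves you to redo the case analysis yourself.

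For (iii) the arguments also differ. Both need the denominator comparison $s_2+s_3+\tfrac14\le(1+s_2)^2$.
\begin{itemize}
\item The paper gets it from the hypothesis $t_Nq_3<q_2^2$, i.e.\ $s_3<s_2^2$.
\item You use $s_3\le s_2$, which follows from $\operatorname{spec}(\sigma^{T_B})\subseteq[-\tfrac12,1]$.
\end{itemize}
Each gives the comparison in one line. Your first attempted route, via $s_3\le s_2^{3/2}$ and AM--GM, is also valid, so no fallback was actually needed.
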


\begin{proof}
By Lemma~\ref{lem:compression} it suffices to bound
$\mathcal{N}_*(\rho_N)$.  Write $B = \rho_N^{T_B}$ with positive
eigenvalues $\{\alpha_j\}$ and negative magnitudes $\{\beta_k > 0\}$, so
that $\sum_j \alpha_j = t_N + \nu$ with $\nu = \mathcal{N}_*(\rho_N)$,
and set $y = \sum_k \beta_k^2$.  The argument is the normalized
negativity proof of the main-text appendix, carried through with the
total weight $t_N$ in place of $p_1 = 1$.  Cauchy--Schwarz on the
positive eigenvalues gives $\sum_j \alpha_j^3 \geq (q_2 - y)^2/(t_N +
\nu)$, while the negative part admits two bounds: $\beta_k \leq \nu$
gives $\sum_k \beta_k^3 \leq \nu y$, and the subnormalized spectral
range $\beta_k \leq t_N/2$ (Lemma~\ref{lem:spectral_subnorm}) gives
$\sum_k \beta_k^3 \leq (t_N/2)\,y$.

For~(i), inserting $\sum_k \beta_k^3 \leq \nu y$ and running the same
two-case analysis in $\nu$ as in the appendix
($\nu^2 \lessgtr q_2$) yields $g_{t_N}(\nu) \geq 0$, with
$g_{t_N}(u) = t_N u^3 + 2 q_2 u^2 + q_3 u + t_N q_3 - q_2^2$ the cubic
of~\eqref{eq:trunc_cubic_poly}.  Since $g_{t_N}(0) = t_N q_3 - q_2^2
= -\Delta_N < 0$, its smallest positive root $u_{N,*}$ satisfies
$\nu \geq u_{N,*}$.  For~(ii), inserting instead the spectral bound
$\sum_k \beta_k^3 \leq (t_N/2)\,y$ and the matching case split (now on
$t_N\nu/2 \lessgtr q_2$, the subnormalized analogue of $\nu^2 \lessgtr
q_2$) gives $\nu\,(t_N q_2 + q_3 + t_N^3/4) \geq q_2^2 - t_N q_3$, which
is~\eqref{eq:trunc_rational}.  Part~(iii) then follows from
$\Delta_N > 0$, i.e.\ $q_3 < q_2^2/t_N$, which yields
$t_N q_2 + q_3 + t_N^3/4 < (t_N^2 + q_2)^2/t_N$.
\end{proof}

\textbf{Remark (Relation to the truncated quadratic witness).}
The condition $\Delta_N > 0$ is the Hankel determinant
condition for the \emph{subnormalized} moment sequence
$(t_N, q_2, q_3)$.  
It coincides with the violation of the
$t_N$-corrected truncated quadratic witness of the main text,
$\Delta_N = -W_{\text{quad}}^{(N)}
= \bigl(p_2^{(N)}\bigr)^2 - t_N\,p_3^{(N)}$.
The main text adopts the simplified ($t_N = 1$) witness
$\widehat{W}_{\text{quad}}^{(N)} = q_3 - q_2^2$ for detection.
Since $t_N \leq 1$, the condition $\Delta_N > 0$ is
strictly stronger than $\widehat{W}_{\text{quad}}^{(N)} < 0$: there exist
truncated states with $\widehat{W}_{\text{quad}}^{(N)} \geq 0$ but $\Delta_N > 0$.
Both quantities can be estimated from the same homodyne
data---$t_N$ is the truncated first moment, a degree-1
U-statistic with negligible variance overhead; see
the truncation-validity appendix of the main text for the
detection-power comparison of the two conventions.

\textbf{Remark (Reduction to full-state bounds).}
When $t_N = 1$ (the truncation captures the full state),
$\Delta_N = q_2^2 - q_3 = p_2^2 - p_3$,
$g_1(u) = u^3 + 2q_2 u^2 + q_3 u + q_3 - q_2^2$,
and Theorem~\ref{thm:trunc_neg} reduces to
the full-state negativity theorem.

\subsection{Pure-state specialization}
\label{app:trunc_neg_pure}

For a pure input state $\ket{\psi}$, the projected operator
$\rho_N = P_N\ket{\psi}\!\bra{\psi}P_N =
\ket{\tilde\psi}\!\bra{\tilde\psi}$ is a rank-one operator of norm
$\braket{\tilde\psi|\tilde\psi} = t_N$, so
$q_2 = \mathrm{Tr}[\rho_N^2] = t_N^2$ using the invariance of the
Hilbert--Schmidt norm under partial transposition.  Substituting
$q_2 = t_N^2$ into the cubic~\eqref{eq:trunc_cubic_poly} factors it
as
\begin{equation}\label{eq:trunc_cubic_pure}
  t_N u^3 + 2t_N^2 u^2 + q_3 u + t_N q_3 - t_N^4
  = (u + t_N)\bigl(t_N u^2 + t_N^2 u + q_3 - t_N^3\bigr),
\end{equation}
whose positive root gives the truncated pure-state bound quoted in
the main text:
\begin{equation}\label{eq:trunc_pure_bound}
  \mathcal{N}\bigl(\ket{\psi}\!\bra{\psi}\bigr)
  \;\geq\;
  \frac{-t_N^2 + \sqrt{5\,t_N^4 - 4\,t_N\,p_3^{(N)}}}{2\,t_N}
  \;=:\; \mathcal{N}_{\text{pure}}^{(N)},
\end{equation}
valid whenever $\Delta_N = t_N^4 - t_N p_3^{(N)} > 0$, i.e.\
$p_3^{(N)} < t_N^3$.  Setting $t_N = 1$ recovers the pure-state
bound $\mathcal{N}_{\text{pure}} = \bigl(-1 + \sqrt{5 - 4p_3}\bigr)/2$
of the main text.

\end{document}